\theoremstyle{definition}
\newtheorem{definition}{Definition}[section]
\newcommand{\acronym}{OMNI}
\algrenewcommand\algorithmicindent{1.2em}%
\newcommand{\squishlist}{
   \begin{list}{$\bullet$}
    { \setlength{\itemsep}{2pt}    \setlength{\parsep}{0pt}
      \setlength{\topsep}{5pt}     \setlength{\partopsep}{0pt}
      \setlength{\leftmargin}{1.35em} \setlength{\labelwidth}{1em}
      \setlength{\labelsep}{0.5em} } }
\newcommand{\squishend}{
    \end{list}  }
\newtheorem{theorem}{Theorem}
\newtheorem{example}{Example}
\newcommand{\revised}[1]{\textcolor{black}{#1}}
\begin{document}

  \title{Observation-Enhanced QoS Analysis of Component-Based Systems}
  \author{Colin Paterson, Radu Calinescu
  \IEEEcompsocitemizethanks{\IEEEcompsocthanksitem C.~Paterson and R.~Calinescu are with the Department of Computer Science at the University of York, UK.}}


\IEEEtitleabstractindextext{

\begin{abstract}
We present a new method for the accurate analysis of the quality-of-service (QoS) properties of component-based systems. Our method takes as input a QoS property of interest and a high-level continuous-time Markov chain (CTMC) model of the analysed system, and refines this CTMC based on observations of the execution times of the system components. The refined CTMC can then be analysed with existing probabilistic model checkers to accurately predict the value of the QoS property. 
The paper describes the theoretical foundation underlying this model refinement, the tool we developed to automate it, and two case studies that apply our QoS analysis method to a service-based system implemented using public web services and to an IT support system at a large university, respectively.
Our experiments show that traditional CTMC-based QoS analysis can produce highly inaccurate results and may lead to invalid engineering and business decisions. 
In contrast, our new method reduced QoS analysis errors by 84.4--89.6\% for the service-based system and by 94.7--97\% for the IT support system, significantly lowering the risk of such invalid decisions.
\end{abstract}

\begin{IEEEkeywords}
Quality of service, component-based systems, Markov models, probabilistic model checking.
\end{IEEEkeywords}}

\maketitle


\thispagestyle{fancy}

\section{Introduction}

Modern software and information systems are often constructed using complex interconnected components \cite{leavens2000foundations}. The performance, cost, resource use and other quality-of-service (QoS) properties of these systems underpin important engineering and business decisions. As such, the QoS analysis  of component-based systems has been the subject of intense research \cite{aleti2013software,Becker2006,Koziolek2010634,koziolek2013hybrid}. The solutions devised by this research can analyse a broad range of QoS properties by using \emph{performance models} such as Petri Nets \cite{marsan-etal1994,perez-etal2012}, layered queuing networks \cite{franks-etal2009}, Markov chains \cite{carrasco2002computationally,sato2007stochastic} and timed automata \cite{Hessel2008}, together with tools for their simulation (e.g.\ Palladio \cite{becker-etal2009} and GreatSPN  \cite{baarir2009greatspn}) and formal verification (e.g.\ PRISM \cite{kwiatkowska2011prism} and UPPAAL \cite{1704000}).

These advances enable the effective analysis of many types of performance models. However, they cannot support the design and verification of real systems unless the analysed models are accurate representations of the system behaviour, and ensuring the accuracy of performance models remains a major challenge. Our paper address this challenge for \emph{continuous-time Markov chains} (CTMCs), a type of stochastic state transition models used for QoS analysis at both design time \cite{sato2007stochastic,GallottiGMT2008,7372021} and runtime \cite{FilieriGT12,calinescu2015self}. To this end, we present a tool-supported method for \underline{O}bservation-based \underline{M}arkov chai\underline{N} ref\underline{I}nement (\acronym) and accurate QoS analysis of component-based systems. 

\revised{The \acronym\ method comprises the five activities shown in Fig.~\ref{fig:OMNI-workflow}.} The key characteristic of \acronym\ is its use of observed execution times for the components of the analysed system to refine a high-level abstract CTMC whose states correspond to the operations executed by these components. \revised{As such, the first \acronym\ activity is the collection of these} execution time observations\revised{, which} can come from unit testing the components prior to system integration, from logs of other systems that use the same components, or from the log of the analysed system. \revised{The second \acronym\ activity involves the development of a high-level CTMC model of the system under analysis. This model} can be generated from more general software models such as annotated UML activity diagrams as in \cite{GallottiGMT2008,6693145}, or can be provided by the system developers. \revised{The next \acronym\ activity requires the formalisation of the QoS properties of interest as continuous stochastic logic formulae.}

\begin{figure*}
\centering
\includegraphics[width=\linewidth]{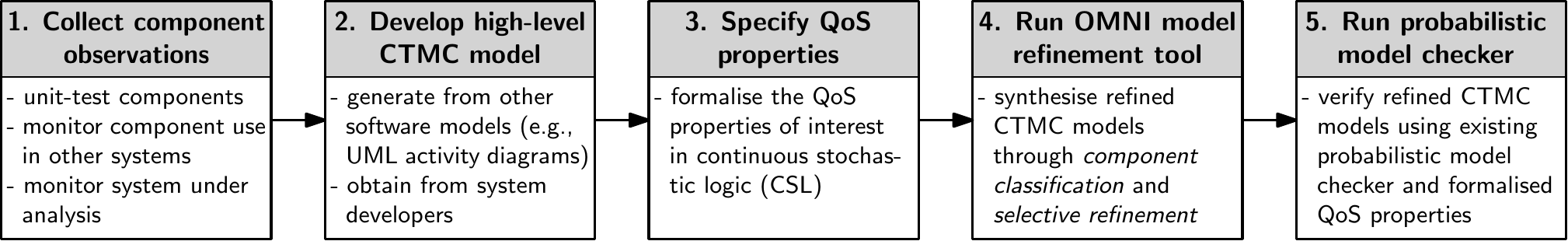}
\caption{\revised{\acronym\ workflow for the QoS analysis of component-based systems}}
\label{fig:OMNI-workflow}
\end{figure*}

\revised{The fourth activity of our \acronym\ method is the refinement of the high-level model.} \acronym\ avoids the synthesis of unnecessarily large and inefficient-to-analyse models by generating a different refined CTMC for each QoS property of interest. This generation of property-specific CTMCs is fully automated and comprises two steps. The first step, called \emph{component classification}, determines the effect of every system component on the analysed QoS property. The second step, called \emph{selective refinement}, produces the property-specific CTMC by using phase-type distributions~\cite{buchholz2014phase} to refine only those parts of the high-level CTMC that correspond to components which influence the QoS property being analysed. 
As such, \acronym-refined CTMCs model component executions with much greater accuracy than traditional CTMC modelling, whose exponential distributions match only the first moment of the unknown distributions of the observed execution times.

\revised{In the last activity of our method, t}he refined CTMC models generated by \acronym\ are analysed with the established probabilistic model checker PRISM~\cite{kwiatkowska2011prism}. As illustrated by the two case studies presented in the paper, these models support the accurate and efficient analysis of a broad spectrum of QoS properties specified in continuous stochastic logic \cite{aziz1996}. As such, \acronym's observation-enhanced QoS analysis can prevent many invalid engineering and business decisions associated with traditional CTMC-based QoS analysis. 

\revised{The \acronym\ activities 2, 3 and 5 correspond to the traditional method for QoS property analysis through probabilistic model checking. Detailed descriptions of these activities are available (e.g., in \cite{DBLP:journals/ife/Kwiatkowska13,DBLP:journals/mscs/BaierHHHK13,DBLP:journals/tse/FilieriTG16,DBLP:journals/jss/FrancoCBRSG16,DBLP:journals/tr/CalinescuGJPRT16}) and therefore we do not focus on them in this paper. Activities 1 and 4 are specific to \acronym. However, the tasks from activity~1 are standard software engineering practices, so the focus of our paper is on the observation-based refinement techniques used in the fourth activity of the \acronym\ workflow.}

\revised{Like most methods for software performance engineering \cite{10.1007/BFb0013866,smith2007introduction,Woodside:2007:FSP:1253532.1254717}, \acronym\ supports both the design of new systems and the verification of existing systems. Using \acronym\ to assess whether a system under design meets its QoS requirements or to decide a feasible service-level agreement for a system being developed requires the collection of component observations by unit testing the intended system components, or by monitoring other systems that use these  components. In contrast, for the verification of the QoS properties of an existing system, component observations can be collected using any of the techniques listed under the first activity from Fig.~\ref{fig:OMNI-workflow}, or a combination thereof.}

A preliminary version of \acronym\ that did not include the component classification step was introduced in \cite{OMNIref}. This paper extends the theoretical foundation from \cite{OMNIref} with key results that enable component classification, and therefore the synthesis of much smaller and faster to analyse refined CTMCs than those generated by our preliminary \acronym\ version. This extension is presented in Section~\ref{sect:comp-class},  implemented by our new \acronym\ tool described in Section~\ref{section:tool}\revised{, and shown to reduce  verification times by 54--74\% (compared to the preliminary \acronym\ version) in Section~\ref{subsect:rq4}. This is a particularly significant improvement because the same QoS property is often  verified many times, to identify suitable values for the parameters of the modelled system (e.g., see the case studies from \cite{Kwiatkowska2007:SFM,NP2014,GCT2018,Calinescu-etal-2017-SA,CCGMP2018}).} Additionally, we considerably extended and improved the validation of \acronym\ by evaluating it \revised{for the following two systems:}
\squishlist
\item[1)] A service-based system \revised{that we implemented using six real-world} web services \revised{--- two commercial web services provided by Thales Group, three free Bing web services provided by Microsoft, and a free WebserviceX.Net web service. The evaluation of \acronym\ for this system was based on lab experiments.}
\item[2)] The IT support system \revised{at the Federal Institute of Education, Science and Technology of Rio Grande do Norte (IFRN), Brazil. This system has over 44,000 users --- students and IFRN employees (including IT support staff). The evaluation of \acronym\ for this system was based on real datasets obtained from the system logs.}
\squishend

The rest of the paper is structured as follows. Section~\ref{section:preliminaries} introduces the notation, terminology and theoretical background for our work. Section~\ref{section:motivation} describes the service-based system used to evaluate \acronym, as well as to motivate and illustrate our QoS analysis method throughout the paper. The assumptions and theoretical results underlying the component classification and selective refinement steps of \acronym\ are presented in Section~\ref{section:refinement}, and the tool that automates their application is described in Section~\ref{section:tool}. Section~\ref{section:evaluation} evaluates the effectiveness of \acronym\ for the two systems mentioned above. This evaluation shows that, compared to traditional CTMC-based QoS analysis, our method (a)~reduces analysis errors by 84.4--89.6\% for the service-based system and by 94.7--97\% for the IT support system; and (b)~lowers the risk of invalid engineering and business decisions. The experimental results also show a decrease of up to 71.4\% in QoS analysis time compared to our preliminary \acronym\ version from \cite{OMNIref}. \revised{Section~\ref{sect:threats} discusses the threats to the validity of our results.} The paper concludes with an overview of related work in Section~\ref{section:related} and a brief summary in Section~\ref{section:conclusion}.


\section{Preliminaries}
\label{section:preliminaries}
\subsection{Continuous-time Markov chains \label{subsect:ctmcs}}

Continuous-time Markov chains~\cite{norris1998} are mathematical models for continuous-time stochastic processes over countable state spaces. To support the presentation of \acronym, we will use the following formal definition adapted from \cite{Kwiatkowska2007:SFM,Baier-etal-2003}.

\begin{definition}
\label{def:CTMC}
A continuous-time Markov chain (CTMC) is a tuple
\begin{equation}
  \mathcal{M}= (S,\bm{\pi},\mathbf{R}),
  \label{eq:CTMC} 
\end{equation}
where $S$ is a finite set of states, $\bm{\pi}:S\to [0,1]$ is an initial-state probability vector such that the probability that the CTMC is initially in state $s_i\!\in\! S$ is given by $\bm{\pi}(s_i)$ and $\sum_{s_i\in S} \bm{\pi}(s_i)\!=\!1$, and $\mathbf{R}\!:\! S\!\times\! S\!\to\! \mathbb{R}$ is a transition rate matrix such that, for any states $s_i\!\neq\! s_j$ from $S$, $\mathbf{R}(s_i,s_j)\!\geq\! 0$ specifies the rate with which the CTMC transitions from state $s_i$ to state $s_j$, and $\mathbf{R}(s_i,s_i)\! =\! -\sum_{s_j\in S\setminus \{s_i\}}\! \mathbf{R}(s_i,s_j)$.
\end{definition}

\noindent We will use the notation $\mathsf{CTMC}(S,\bm{\pi},\mathbf{R})$ for the continuous-time Markov chain $\mathcal{M}$ from~(\ref{eq:CTMC}). The probability that this CTMC will transition from state $s_i$ to another state within $t$ time units is 
\[
  1\!-\!e^{-t\cdot \sum_{s_k\!\in\! S\setminus\! \{s_i\}} \mathbf{R}(s_i,s_k)}
\]
and the probability that the new state is $s_j\!\in\! S\setminus\{s_i\}$ is
\begin{equation}
\label{eq:pij}
  p_{ij}=\mathbf{R}(s_i,s_j)\; /\; \sum_{s_k\in S\setminus\{s_i\}}\;\! \mathbf{R}(s_i,s_k).
\end{equation}
A state $s$ is an \emph{absorbing state} if $\textbf{R}(s, s')=0$ for all $s' \in S$, and a \emph{transient state} otherwise.

The properties of a CTMC $\mathcal{M}$ are analysed over its set of finite and infinite paths $\mathit{Paths}^\mathcal{M}$. A \emph{finite path} is a sequence $s_1t_1s_2t_2\ldots s_{k-1}t_{k-1}s_k$, where $s_1,s_2,$ $\ldots,s_k\in S$, $\bm{\pi}(s_1)\!>\!0$, $s_k$ is an absorbing state, and, for all $i\!=\!1,2,\ldots, k-1$, $\mathbf{R}(s_i,s_{i+1})\!>\!0$ and $t_i\!>\!0$ is the time spent in state $s_i$. An \emph{infinite path} from $\mathit{Paths}^\mathcal{M}$ is an infinite sequence $s_1t_1s_2t_2\ldots$ where $\bm{\pi}(s_1)\!>\!0$, and, for all $i\!\geq\! 1$, $s_i\!\in\! S$, $\mathrm{R}(s_i,s_{i+1})\!>\!0$ and the time spent in state $s_i$ is $t_i\!>\!0$. For any path $\omega\!\in\!\mathit{Paths}^\mathcal{M}$, the state occupied by the path at time $t\!\geq\! 0$ is denoted $\omega@t$. For infinite paths, $\omega@t\!=\!s_i$, where $i$ is the smallest index for which $t\!\leq\! \sum_{j=1}^i t_j$. For finite paths, $\omega@t$ is defined similarly if $t\leq \sum_{j=1}^{k-1} t_j$, and $\omega@t=s_k$ otherwise. Finally, the $i$-th state on the path $\omega$ is denoted $\omega[i]$, where $i\in\mathbb{N}_{>0}$ for infinite paths and $i\in\{1,2,\ldots,k\}$ for finite paths.

Continuous-time Markov chains are widely used for the modelling and analysis of stochastic systems and processes from domains as diverse as engineering, biology and economics \cite{anderson2012continuous,yin2012continuous}. In this paper, we focus on the use of CTMCs for the modelling and QoS analysis of component-based software and IT systems. These systems are increasingly important for numerous practical applications, and advanced probabilistic model checkers such as PRISM \cite{kwiatkowska2011prism}, MRMC \cite{katoen2011ins} and Storm \cite{DJKV2017} are available for the efficient analysis of their CTMC models.

\subsection{Continuous stochastic logic \label{subsect:CSL}}

CTMCs support the analysis of QoS properties expressed in \emph{continuous stochastic logic} (CSL) \cite{aziz1996}, which is a temporal logic with the syntax defined below.

\begin{definition}
Let AP be a set of atomic propositions, $a\!\in\!AP \mbox{, } p\! \in\! [0,1] \mbox{, } I$  an interval in  $\mathbb{R} \mbox{ and } \bowtie\!\! \mbox{}  \in \{\geq, >, <, \leq \}$. Then a state formula $\Phi$ and a path formula $\Psi$ in continuous stochastic logic are defined by the following grammar:
\begin{equation}
\begin{array}{l}
\Phi ::= true\, |\,  a\, |\, \Phi \wedge \Phi\, |\, \neg\Phi\, |\, P_{\bowtie p}[\Psi]\, |\, \mathcal{S}_{\bowtie p}[\Phi]\\
\Psi ::= X \Phi\, |\, \Phi U^{I} \Phi  
\end{array}.
\end{equation}
\end{definition}

\noindent
CSL formulae are interpreted over a CTMC whose states are \emph{labelled} with atomic propositions from $AP$ by a function $L\! :\! S\! \to\! 2^{AP}$. The (\emph{transient-state}) \emph{probabilistic operator} $P$ and the \emph{steady-state operator} $\mathcal{S}$ define bounds on the probability of system evolution. \emph{Next path formulae} $X \Phi$ and \emph{until path formulae} $\Phi_1 U^{I} \Phi_2$ can occur only inside the probabilistic operator $P$. 

The semantics of CSL is defined with a satisfaction relation $\models$ over the states $s\!\in\! S$ and the paths $\omega\!\in\! \mathit{Paths}^\mathcal{M}$ of a CTMC \cite{Baier-etal-2003}. \acronym\ improves the analysis of QoS properties expressed in the transient fragment of CSL,\footnote{Steady-state properties only depend on the first moment of the distributions of the times spent in the CTMC states, so they are already computed accurately by existing CTMC analysis techniques.} with semantics defined recursively by:
\[
\!
\begin{array}{ll}
s\models \mathit{true} & \!\forall s\in S\\
s\models a & \!\textrm{iff } \!a\in L(s)\\
s\models \Phi_1\wedge \Phi_2 & \!\textrm{iff } s\models \Phi_1 \wedge s\models \Phi_2 \\
s\models \neg \Phi & \!\textrm{iff } \neg(s\models \Phi) \\
s\models P_{\bowtie p}[\Psi] & \!\textrm{iff } \mathit{Pr}_s \{\omega\in\mathit{Paths}^\mathcal{M}\mid \omega\models \Psi\}\bowtie p \\
\omega \models X \Phi & \!\textrm{iff } \omega=s_1t_1s_2\ldots \wedge s_2\models \Phi\\
\omega \models \Phi_1 U^{I} \Phi_2 & \!\textrm{iff } \exists t\!\in\! I. (\forall t'\!\in\![0,t).\, \omega@t'\!\models\! \Phi_1)\wedge \omega@t\!\models\! \Phi_2
\end{array}
\]
where a formal definition for the probability measure $\mathit{Pr}_s$ on paths starting in state $s$ is available in \cite{Kwiatkowska2007:SFM,Baier-etal-2003}. \revised{Note how according to these semantics \cite{Baier-etal-2003}, until path formulae $\Phi_1 U^{I} \Phi_2$ are satisfied by a path $\omega$ if and only if $\Phi_2$ is satisfied at some time instant $t$ in the interval $I$ and $\Phi_1$ holds at all previous time instants $t'$, i.e., for all $t'\in[0,t)$.}
Finally, a state $s$ satisfies a steady-state formula $\mathcal{S}_{\bowtie p}[\Phi]$ iff, having started in state $s$, the probability of the CTMC being in a state where $\Phi$ holds \emph{in the long run} satisfies the bound `$\bowtie p$'. 

The shorthand notation  $\Phi_1 U \Phi_2\equiv \Phi_1 U^{[0,\infty)} \Phi_2$ and $F^I \Phi \!\equiv\! \textsf{true}\;\! U^I  \Phi$ is used when $I\! =\! [0,\infty)$ in an until formula and when the first part of an until formula is $\textsf{true}$, respectively. Probabilistic model checkers also support CSL formulae in which the bound `$\bowtie p$' from $P_{\bowtie p}[\Psi]$ is replaced with `$=?$', to indicate that the computation of the actual bound is required. We distinguish between the probability $\mathit{Pr}_s \{\omega\in\mathit{Paths}^\mathcal{M}\mid \omega\models \Psi\}$ that $\Psi$ is satisfied by the paths starting in a state $s$, and the probability 
\[
\begin{array}{l}
  P_{=?}[\Psi]=\sum_{s\in S} \bm{\pi}(s)\mathit{Pr}_s \{\omega\in\mathit{Paths}^\mathcal{M}\mid \omega\models \Psi\}\\\textsf{\hspace*{1.06cm}}=\mathit{Pr}_{\bm{\pi}} \{\omega\in\mathit{Paths}^\mathcal{M}\mid \omega\models 
\Psi\} 
\end{array}
\]
that $\Psi$ is satisfied by the CTMC. In the analysis of system-level QoS properties, we are interested in computing the latter probability.

\subsection{Phase-type distributions \label{sect:PHD}}

\acronym\ uses \emph{phase-type distributions} (PHDs) to refine the relevant elements of the analysed high-level abstract CTMC. PHDs model stochastic processes where the event of interest is the time to reach a specific state, and are widely used in the performance modelling of systems from domains ranging from call centres to healthcare~\cite{fackrell2009modelling,ishay2002fitting,marshall2009simulating}. PHDs support efficient numerical and analytical evaluation~\cite{buchholz2014phase}, and can approximate arbitrarily close any continuous distribution with a strictly positive density in $(0,\infty)$~\cite{doi:10.1080/15326349908807560}, although PHD fitting of distributions with deterministic delays requires extremely large numbers of states. 

A PHD is defined as the distribution of the time to absorption in a CTMC with one absorbing state~\cite{buchholz2014phase}. The $N\geq 1$ transient states of the CTMC are called the \emph{phases} of the PHD. With the possible reordering of states, the transition rate matrix of this CTMC can be expressed as:
\begin{equation}
\mathbf{R} = \left[ \begin{array}{c|c}
 \mathbf{D}_0 & \mathbf{d_1} \\ \hline
 \mathbf{0} & 0\\
 \end{array}
\right],
\end{equation}
where the $N\times N$ sub-matrix $\mathbf{D}_0$ specifies only transition rates between transient states, $\mathbf{0}$ is a $1\times N$ row vector of zeros, and $\mathbf{d}_1$ is an $N\times 1$ vector whose elements specify the transition rates from the transient states to the absorbing state. The elements from each row of $\mathbf{R}$ add up to zero (cf.\ Definition~1), so we additionally have $\mathbf{D}_0\mathbf{1}+\mathbf{d_1}=\mathbf{0}$, where $\mathbf{1}$ and $\mathbf{0}$ are column vectors of $N$ ones and $N$ zeros, respectively. Thus, $\mathbf{d_1}=-\mathbf{D}_0\mathbf{1}$ and the PHD associated with this CTMC is fully defined by the sub-matrix $\mathbf{D}_0$ and the row vector $\bm{\pi}_0$ containing the first $N$ elements of the initial probability vector $\bm{\pi}$ (as in most practical applications, we are only interested in PHDs that are acyclic and that cannot start in the absorbing state). We use the notation $\mathsf{PHD}(\bm{\pi}_0,\mathbf{D}_0)$ for this PHD.

\subsection{\revised{Erlang distributions} \label{sect:Erlang}}

\revised{The Erlang distribution~\cite{stewart2009probability} is a form of PHD in which $k$ exponential phases, each with the same rate parameter $\lambda$, are placed in series. The Erlang distribution has a $(k+1)$-element initial probability vector $\bm{\pi} \!=\! \begin{bmatrix}1 & 0 & \ldots & 0\end{bmatrix}$, such that the system always starts in an initial state $s_0$ and successively traverses states $s_1, s_2, \ldots$ until it reaches an absorbing state $s_k$. The distribution represents the expected time to reach the absorbing state, and has the cumulative distribution function
\begin{equation}
F(k,\lambda,x) = 1 - \sum_{i=0}^{k-1} \frac{(\lambda x)^{i}}{i!} e^{-\lambda x},
\end{equation}
for $x\geq 0$, the mean $m\!=\!\frac{k}{\lambda}$, and the variance $\frac{k}{\lambda^2}= \frac{m^2}{k}$ (which approaches zero as $k \rightarrow \infty$).}


\section{Motivating example: QoS analysis of a web application \label{section:motivation}}

\begin{figure}
\centering
  \includegraphics[width=6cm]{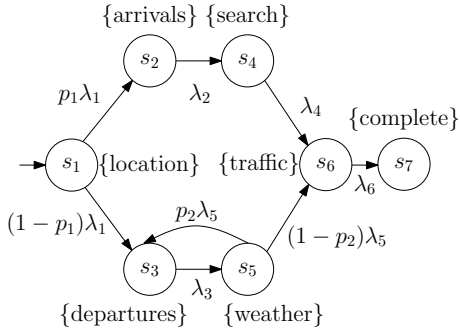}
  \caption{High-level abstract CTMC modelling the handling of a request by the web application\label{fig:CTMC}}
\end{figure}

To illustrate the limitations of traditional CTMC-based QoS analysis, we consider a travel web application that handles two types of requests:
\squishlist
\item[1.]  Requests from users who plan to meet and entertain a visitor arriving by train. 
\squishend

\squishlist
\item[2.] Requests from users looking for a possible destination for a day trip by train. 
\squishend
The handling of these requests by the application is modelled by the high-level abstract CTMC from Fig.~\ref{fig:CTMC}, which can be obtained from a UML activity diagram of the application.   \revised{The method for obtaining a Markov chain from an activity diagram is described in detail in \cite{GallottiGMT2008,6693145,da2017self,DBLP:conf/tase/CalinescuR13}. This method  requires annotating the outgoing edges of decision nodes from the diagram with the probabilities with which these edges are taken during the execution of the modelled application. Markov model states are then created for each of the activities, decision and start/end nodes in the diagram, and state transitions are added for each edge between these nodes; the transitions corresponding to outgoing edges of decision nodes ``inherit'' the probabilities that annotate these edges, while all other transition probabilities have a value of 1.0.} 

The initial state $s_1$ of the CTMC from Fig.~\ref{fig:CTMC} corresponds to finding the location of the train station. For the first request type, which is expected to occur with probability $p_1$, this is followed by finding the train arrival time (state $s_2$), identifying suitable restaurants in the area (state $s_4$),  obtaining a traffic report for the route from the user's location to the station (state $s_6$), and returning the response to the user (state $s_7$). 

For the second request type, which occurs with probability $1-p_1$, state $s_1$ is followed by finding a possible destination (state $s_3$), and obtaining a weather forecast for this destination (state $s_5$). With a probability of $p_2$ the weather is unsuitable and a new destination is selected  (back to state $s_3$). Once a suitable destination is selected, the traffic report is obtained for travel to the station (state $s_6$) and the response is returned to the user (state $s_7$).

\begin{table*}
\centering
\caption{Web services considered for the web application \label{tab:webservices}}
\sffamily
  \begin{footnotesize}
  \begin{tabular}{p{1.4cm}p{4.3cm}p{6.1cm}p{0.9cm}} 
  \toprule
  \textbf{Label} & \textbf{Thid-party service} & \textbf{URL} & \mbox{\textbf{rate} $\!(\!s^{\textrm{-}1}\!)$}\\ 
  \midrule
  location & Bing location service & \url{http://dev.virtualearth.net/REST/v1/Locations} & 9.62\\
  arrivals & Thales rail arrival board & \url{http://www.livedepartureboards.co.uk/ldbws/} &19.88\\
  departures & Thales rail departures board & \url{http://www.livedepartureboards.co.uk/ldbws/} &19.46\\
  search & Bing web search & \url{https://api.datamarket.azure.com/Bing/Search} &1.85\\
  weather & WebserviceX.net weather service & \url{http://www.webservicex.net/globalweather.asmx} &1.11\\
  traffic & Bing traffic service & \url{http://dev.virtualearth.net/REST/v1/Traffic} & 2.51\\ 
  \bottomrule
  \end{tabular} \rmfamily
  \end{footnotesize}
\end{table*}

\begin{figure*}
\centering
\includegraphics[width=0.3\linewidth]{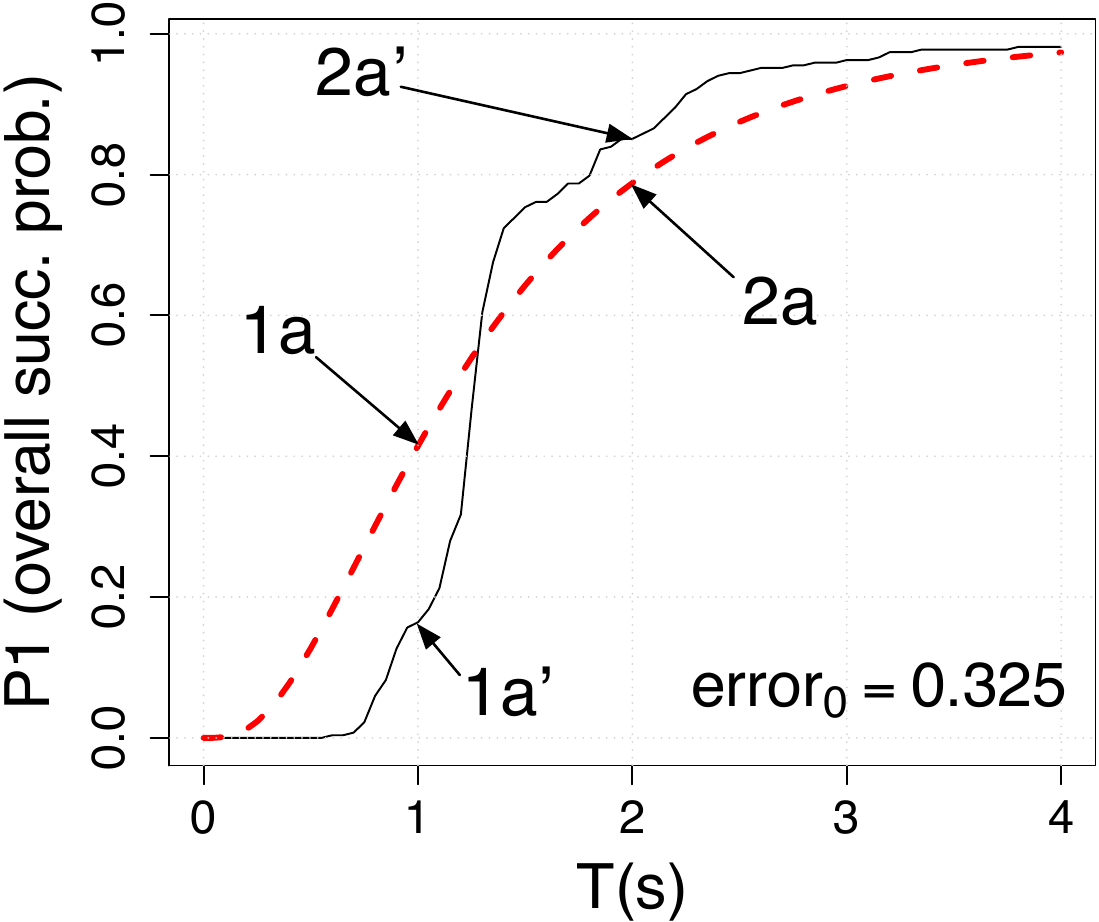}\hspace*{5mm}
\includegraphics[width=0.3\linewidth]{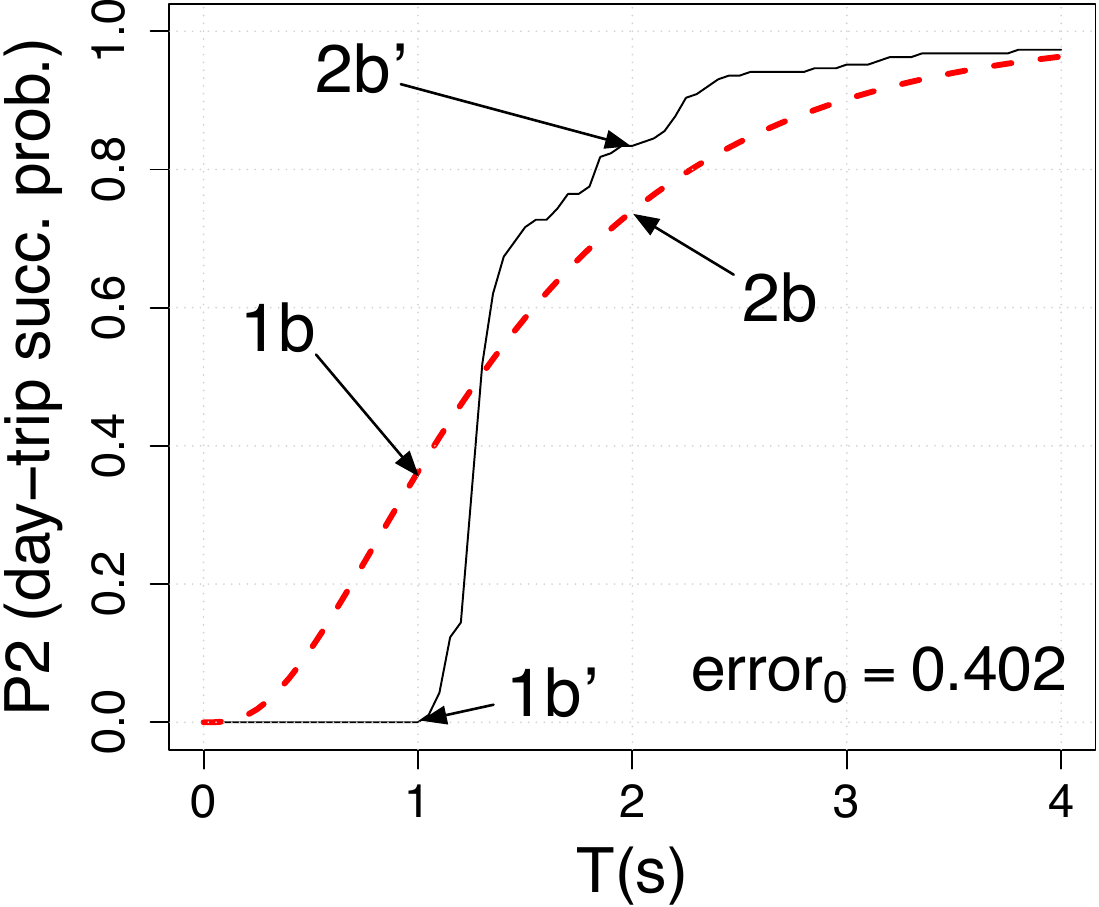}\hspace*{5mm}
\includegraphics[width=0.3\linewidth]{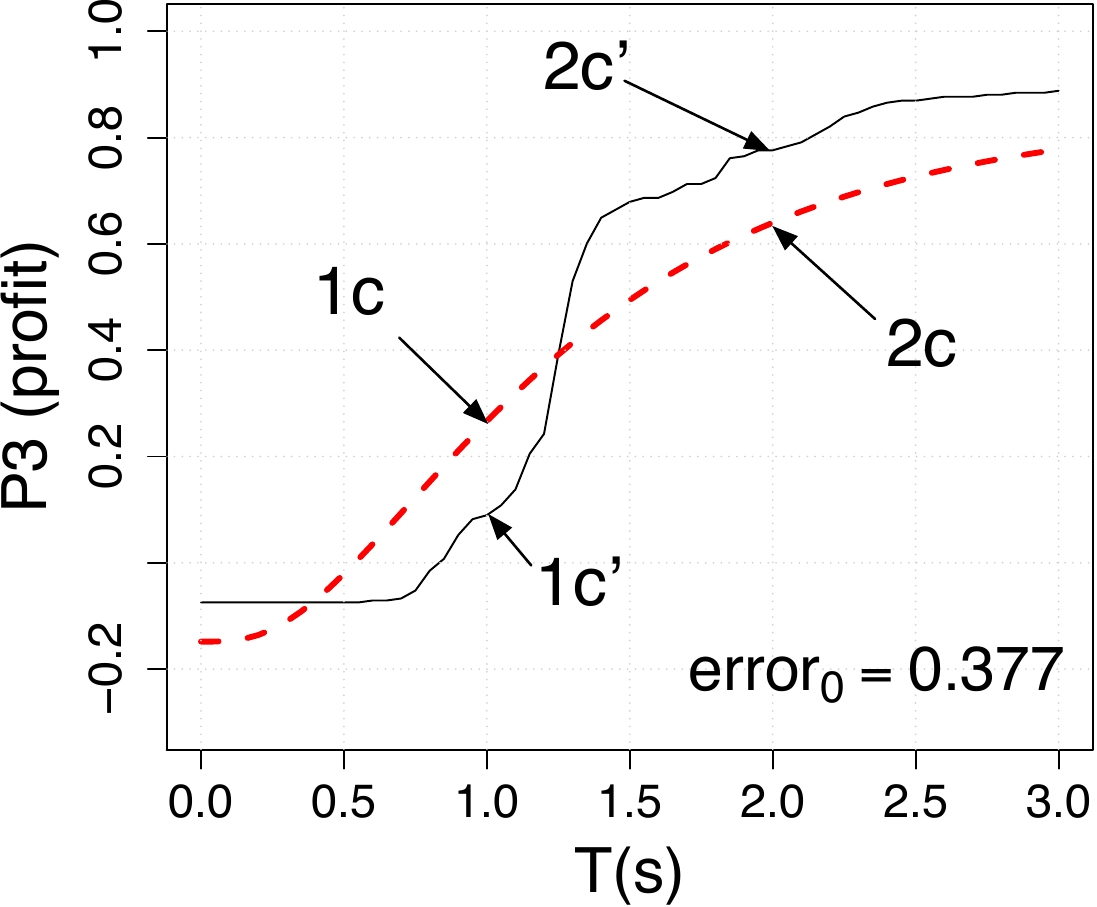}
\caption{Predicted (dashed lines) versus actual (continuous lines) property values}
\label{fig:VerExp}
\end{figure*}

The component execution rates $\lambda_1$ to $\lambda_6$ depend on the implementations used for these components, and we consider that a team of software engineers wants to decide if the real web services from Table~\ref{tab:webservices} are suitable for building the application. If they are suitable, the engineers need: 
\squishlist
\item[1.]  To select appropriate request-handling times to be specified in the application service-level agreement (SLA);
\item[2.]  To choose a pricing scheme for the application. 
\squishend
Accordingly, the engineers want to assess several QoS properties of the travel application variant built using these publicly available web services:\\

\noindent
\hspace*{-3mm}
\begin{tabular}{p{4mm}p{7.9cm}}
\textbf{P1} & The \revised{probability of successfully handling} user requests in under $T$ seconds, for $0\!<\!T\!\leq\! 4$.\\
\textbf{P2} &  The \revised{probability of successfully handling} ``day trip'' requests in under $T$ seconds, for $0\!<\!T\!\leq\! 4$.\\
\textbf{P3} &  The expected profit per request handled, assuming that 1~cent is charged for requests handled within $T$ seconds and a 2-cent penalty is paid for requests not handled within 3 seconds,  for $0<T\leq 3$.
\end{tabular}

\vspace*{2mm}
Service response times are assumed exponentially distributed in QoS analysis based on CTMC (as well as queueing network) models. Therefore, the engineers use observed service execution times $t_{i1},\dots,t_{in}$ for service $i$ to estimate the service rate $\lambda_i$ as

\begin{equation}
  \label{eq:rate}
  \lambda_i = \left(\frac{t_{i1}+t_{i2}+\cdots + t_{in}}{n}\right)^{-1}.
\end{equation}

\noindent
These execution times can be taken from existing logs (e.g.\ of other applications that use the same services) or can be obtained through testing the web services individually.
%
Finally, a probabilistic model checker is used to analyse properties \textbf{P1}--\textbf{P3} of the resulting CTMC. For this purpose, the three properties are first formalised as transient-state CSL formulae:
\begin{equation}
\label{eq:web-reqs}
\textrm{\hspace*{-0.3cm}}\begin{array}{ll}
\textbf{P1} & \,P_{=?}[F^{[0,T]} \mathit{complete}]\\
\textbf{P2} & \,P_{=?}[\neg\mathit{arrivals}\; U^{[0,T]} \mathit{complete}] / (1-p_1)\\
\textbf{P3} & \,P_{=?}[F^{[0,T]} \mathit{complete}] - 2\cdot P_{=?}[F^{(3,\infty)} \mathit{complete}] 
\end{array}
\end{equation}
\revised{The value of $T$ to be specified in the SLA is unknown a priori and hence we evaluate each property for a range of $T$ values} where $0\!<\!T\!\leq\! 4$ for \textbf{P1} and \textbf{P2}, and $0\!<\!T\!\leq\! 3$ for \textbf{P3}.  

To replicate this process, we implemented a prototype version of the application and we used it to handle $270$ randomly generated requests for $p_1 \!=\! 0.3$ and $p_2\! =\! 0.1$. \revised{Obtaining transition probabilities for Markov chains from real-world systems, and the effects of transition probabilities on system performance, have previously been considered~\cite{ghezzi2014mining, su2013asymptotic}. To decouple these effects from those due to the temporal characteristics of component behaviours, we utilise fixed probabilities for our motivating example. However, for the second system used to evaluate \acronym\ (Section~\ref{subsect:second-case-study}) we extract the transition probabilities from system logs, showing that \acronym\ also provides significant improvements in verification accuracy in this setting.} We obtained sample execution times for each web service (between $81$ 
for \textsf{arrivals} and \textsf{search} and $270$ for \textsf{location} and \textsf{traffic}), and we applied~(\ref{eq:rate}) to these observations, calculating the estimate service rates from Table~\ref{tab:webservices}.
\revised{Note that these observations are equivalent to observations obtained from unit testing the six services separately. This is due to the statistical independence of the execution times of different services, which we confirmed by calculating the Pearson correlation coefficient of the observations for every pair of services -- the obtained coefficient values, between $-0.17$ and $+0.11$, indicate lack of correlation.}
We then used the model checker PRISM \cite{kwiatkowska2011prism} to analyse the CTMC for these rates, and thus to predict the values of properties~(\ref{eq:web-reqs}). 

To assess the accuracy of the predictions, we also calculated the actual values of these properties \revised{at each time value $T$} using detailed timing information logged by our application. \revised{The error associated with a single property evaluation  may be quantified as the absolute difference between actual and predicted values
	\begin{equation}
	\left| actual(T)-predicted(T) \right|
	\end{equation} 
}The predictions obtained through CTMC analysis and the actual property values \revised{across the range of $T$ values} are compared in Fig.~\ref{fig:VerExp}. The errors reported in the figure are calculated \revised{using the distance measure recommended for assessing the overall error of CTMC/PHD model fitting in ~\cite{bobbio1992ml, buchholz2014phase, reinecke2012cluster, horvath2000approximating}, i.e.,} the area difference between the actual and the predicted property values:
\begin{equation}
\label{eq:err}
  \mathit{error}=\int_0^{T_{\max}} \left| actual(T)-predicted(T) \right|\: dT,
\end{equation}  
where $T_\mathrm{max}\!=\! 4$ for properties \textbf{P1} and \textbf{P2}, and $T_\mathrm{max}\!=\!3$ for property \textbf{P3}.\footnote{\revised{Both underestimation and overestimation of QoS property values contribute to the error because both can lead to undesirable false positives or false negatives when assessing whether QoS requirements are met. For example, overestimates of the overall success probability of a system can falsely indicate that a requirement that places a lower bound on this probability is met and the system is safe to use (false negative), while underestimates of the same property can falsely indicate that the requirement is violated and the system should not be used (false positive).}} 
Later in the paper, we will use \revised{this error measure 
 }to assess the improvements in accuracy due to the \acronym\ model refinement. In this section we focus on the limitations of CTMC-based transient analysis. Therefore, recall that the software engineers must make their decisions based only on the predicted property values from Fig.~\ref{fig:VerExp}; two of these decisions \revised{and their associated scenarios are described below}.

\vspace*{2mm}
\noindent
\revised{\emph{Scenario 1.} The engineers note that: 
\squishlist
\item[(a)] the predicted overall success probability (property \textbf{P1}) at $T\!\!=\!\!1$s is $0.415$ (marked 1a in Fig.~\ref{fig:VerExp}), i.e., slightly over 40\% of the requests are predicted to be handled within 1s;
\item[(b)] the predicted day-trip success probability (property \textbf{P2}) at $T\!=\!1$s is $0.363$ (1b in Fig.~\ref{fig:VerExp}), i.e., over 36\% of the day-trip requests are predicted to be handled within 1s;
\item[(c)] the expected profit (property \textbf{P3}) at $T\!=\!1$s, i.e., when charging 1~cent for requests handled within 1s, is $0.27$~cents (1c in Fig.~\ref{fig:VerExp}).
\squishend
Accordingly, the engineers decide to u}se the services from Table~\ref{tab:webservices} to implement the travel web application, with an SLA ``promising''  that requests will be handled within 1s \revised{with 0.4 success probability}, ``day trip'' requests will be handled within 1s \revised{with 0.35 success probability}, and charging 1~cent for requests handled within 1s. As shown in Fig.~\ref{fig:VerExp}, \revised{the actual property values at $T=1$s are $0.164$ for \textbf{P1} (marked 1a$'$ in Fig.~\ref{fig:VerExp}), $0$ for \textbf{P2} (1b$'$ in Fig.~\ref{fig:VerExp}) and $0.09$~cents for \textbf{P3} (1c$'$ in Fig.~\ref{fig:VerExp})}, so this decision would be wrong -- both promises would be violated by a wide margin, and the actual profit would be under a third of the predicted profit.

\vspace*{2mm}
\noindent
\revised{\emph{Scenario 2.} The engineers observe that the success probabilities of} handling requests or ``day trip'' requests within 2s \revised{are below 0.8 -- the predicted values for properties \textbf{P1} and \textbf{P2} at $T=2$s are $0.79$ (2a in Fig.~\ref{fig:VerExp}) and $0.74$ (2b in Fig.~\ref{fig:VerExp}), respectively;} and/or that the expected profit is below 0.7 cents per request when charging 1~cent for each request handled within 2s \revised{(2c in Fig.~\ref{fig:VerExp})}. 
\revised{As such, they decide to} look for alternative services for the application. As shown \revised{by points 2a'--2c'} in Fig.~\ref{fig:VerExp}, all the constraints underpinning this decision are actually satisfied, so the decision would also be wrong.

\vspace*{2mm}
\noindent
We chose the times and constrains in the two hypothetical decisions to show how the current use of idealised CTMC models in QoS analysis \emph{may} yield invalid decisions. The fact that choosing different times and constrains could  produce valid decisions is not enough: engineering decisions are meant to be consistently valid, not down to chance. It is this major limitation of traditional CTMC-based QoS analysis that our CTMC refinement method addresses as described in the next section.


\begin{figure*}
	\centering
	\begin{subfigure}[b]{1.0\textwidth}
	\includegraphics[width=1.0\linewidth]{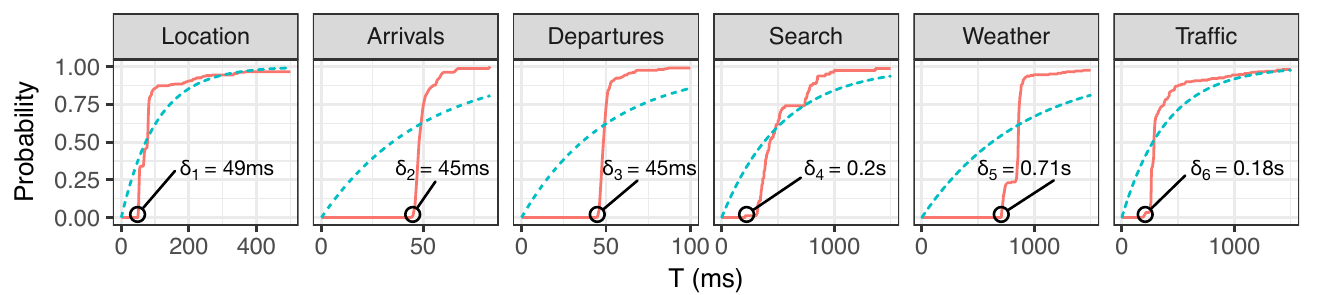}
	\caption{\revised{Empirical CDF for the service execution times (continuous lines) versus exponential models with rates computed from observed data (dashed lines)}}
	\end{subfigure}
	\begin{subfigure}[b]{1.0\textwidth}
		\includegraphics[width=\linewidth]{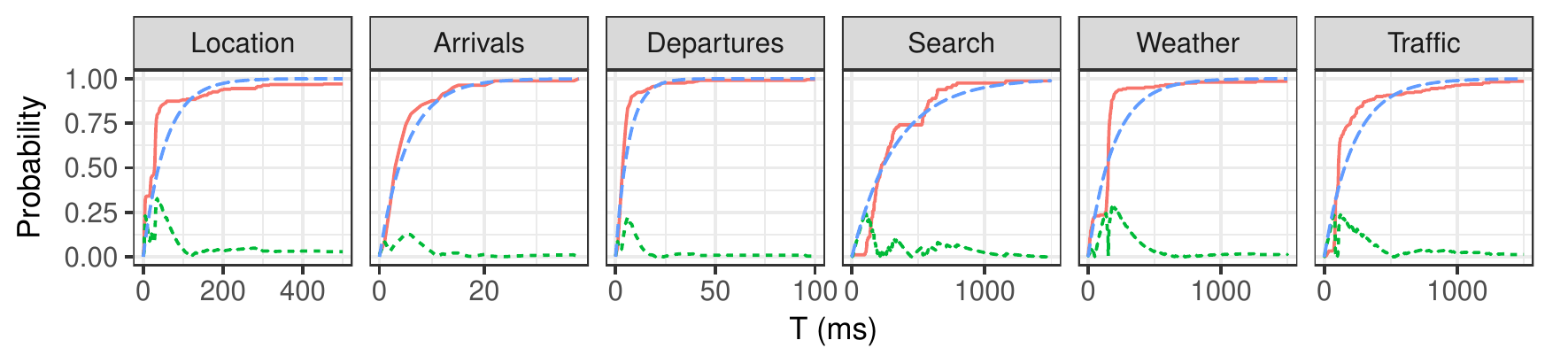}
		\caption{\revised{Empirical CDF for the service holding times (continuous lines) versus exponential models with rates computed from observed holding times (long dashed lines); for all services except Arrivals the difference between the two (short dashed lines) exceeds 20\% for multiple values of $T$}}
	\end{subfigure}
\caption{The services from the motivating example have non-zero delays and non-exponentially distributed holding times
	\label{fig:CDFDelay}}
\end{figure*}	

\section{The \acronym\ method for CTMC refinement\label{section:refinement}}

\subsection{Overview \label{subsect:overview}}

\acronym\ addresses the refinement of high-level CTMC models $\mathsf{CTMC}(S,\bm{\pi},\mathbf{R})$ of software systems that satisfy the following assumptions:
\squishlist
\item Each state $s_i\in S$ corresponds to a component of the system, and $\bm{\pi}(s_i)$ is the probability that $s_i$ is the initial component executed by the system;
\item For any distinct states from $s_i,s_j\!\in\! S$, the transition rate $\mathbf{R}(s_i,s_j)\!=\!p_{ij}\lambda_i$, where $p_{ij}$ represents the (known or estimated) probability~(\ref{eq:pij}) that component $i$ is followed by component $j$ and $\lambda_i$ is obtained by applying~(\ref{eq:rate}) to $n_i$ observed execution times $\tau_{i1}, \tau_{i2}, \dots, \tau_{in_i}$ of component $i$;
\item Each state $s_i\!\in\! S$ is labelled with the name of its corresponding component, which we will call ``component $i$'' for simplicity.
\squishend
This CTMC model makes the standard assumption that component execution times are exponentially distributed. However, this assumption is typically invalid for two reasons. First, each component $i$ has a \emph{delay} $\delta_i$ (i.e.\ minimum execution time) approximated by
\begin{equation}
\label{eq:delay}
  \delta_i \approx \min_{j=1}^{n_i} \tau_{ij}
\end{equation}
such that its probability of completion within $\delta(s_i)$ time units is zero. In contrast, modelling the execution time of the component as exponentially distributed with rate $\lambda_i$ yields a non-zero probability $1-e^{-\lambda_i\delta_i}$ of completion within $\delta_i$ time units. Second, even the \emph{holding times} 
\begin{equation}
\label{eq:holding-times}
  \tau'_{i1}=\tau_{i1}-\delta_i,\; \tau'_{i2}=\tau_{i2}-\delta_i,\; \ldots,\; \tau'_{in_i}=\tau_{in_i}-\delta_i
\end{equation}  
of the component are rarely exponentially distributed. 

\begin{example}
\label{ex0}
\revised{Fig.~\ref{fig:CDFDelay}a shows the empirical cumulative distribution functions (CDFs) for the execution times of the six services from our motivating example (cf.\ Table~\ref{tab:webservices}), and the associated exponential models with rates given by (\ref{eq:rate}). The six services have minimum observed execution times $\delta_1$ to $\delta_6$ between 45ms and 0.71s (due to network latency and request processing time), and their exponential model is a poor representation of the observed temporal behaviour. Furthermore, the best-fit exponential model of the observed holding times for these services (shown in Fig.~\ref{fig:CDFDelay}b) is also inaccurate.}
\end{example}

\acronym\ overcomes these significant problems 
\revised{by generating} a refined CTMC for each QoS property of interest \revised{in two steps}, and uses standard probabilistic model checking to analyse the refined CTMC.
%
\revised{As shown in Fig.~\ref{fig:BlockDiagram}, 
the first \acronym\ step,} called \emph{component classification}, partitions the states of the high-level CTMC into subsets that require different types of refinement because of the different impact of their associated system components on the analysed property. For instance, components unused on an execution path have no effect on QoS properties (e.g.\ response time) associated solely with that path, and therefore their corresponding states from the high-level CTMC need not be refined. 
The second \acronym\ step, called \emph{selective refinement}, replaces the states which correspond to components that impact the analysed property with new states and transitions that model the delays and holding times of these components by means of Erlang distributions~\cite{stewart2009probability} and phase-type distributions (PHDs)~\cite{buchholz2014phase}, respectively. 

\begin{figure}
  \centering
  \includegraphics[width=\hsize]{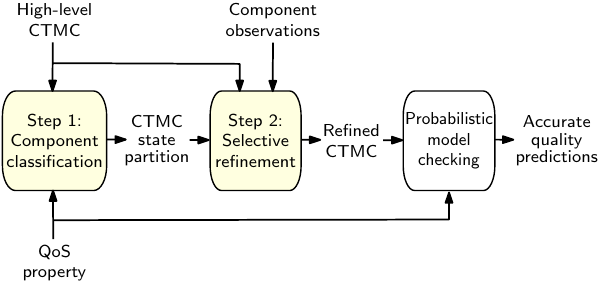}
  \caption{\acronym\ CTMC refinement and verification}
  \label{fig:BlockDiagram}
\end{figure}

As shown by our experimental results from Section~\ref{section:evaluation}, the two-step \acronym\ process produces refined CTMCs that are often much smaller and faster to analyse than the CTMCs obtained by obliviously refining every state of the high-level CTMC, e.g.\ as done in our preliminary work from \cite{OMNIref}. These benefits dominate the slight disadvantage of having to refine the high-level CTMC for each analysed property, which is further mitigated by our \acronym\ tool by caching and reusing refinement results across successive refinements of the same high-level CTMC, \revised{as described in Section~\ref{section:tool}}. Likewise, modelling the delay and holding time of system components separately (rather than using single-PHD fitting) yields smaller and more accurate refined models, in line with existing theory~\cite{doi:10.1080/15326349908807560} and our preliminary results from \cite{OMNIref}.

\revised{Several factors can impede or impact the success of our \acronym\ method:
\squishlist
\item[1.] Components with execution times that are not statistically independent. Markov models assume that the transition rates associated with different states are statistically independent. If the execution times of different components are not independent (e.g., because the components are running on the same server), then this premise is not satisfied, and \acronym\ cannot be applied.
\item[2.]  Changing component behaviour. If the system components change their behaviour significantly over time, then \acronym\ cannot predict the changed behaviour. This is a more general difficulty with model-based prediction.
\item[3.] Insufficient observations of component execution times. The accuracy of \acronym-refined models decreases when fewer observations of the system components are available. We provide details about the impact of the training dataset size on the \acronym\ accuracy in Section~\ref{subsubsect:rq3}.
\squishend
}

The component classification and selective refinement steps of \acronym\ are presented in the rest of this section.

\subsection{Component classification \label{sect:comp-class}}

Given a high-level CTMC model $\mathsf{CTMC}(S,\bm{\pi},\mathbf{R})$ of a system, and a QoS property encoded by the transient CSL formula $P_{=?}[\Phi_1 U^{I} \Phi_2]$, this \acronym\ step builds a partition 
\begin{equation}
  S=S_\textsf{X}\cup S_\textsf{O} \cup S_1\cup S_2\cup \ldots \cup S_m
  \label{eq:Spartition}
\end{equation}
of the state set $S$. 
Intuitively, the ``eXclude-from-refinement'' set $S_\textsf{X}$ will contain states with zero probability of occurring on paths that satisfy $\Phi_1 U^{I} \Phi_2$; the ``Once-only'' set $S_\textsf{O}$ will contain states with probability $1.0$ of appearing once and only once on every path that satisfies $\Phi_1 U^{I} \Phi_2$; and each ``together'' set $S_i$ will contain states that can only appear as a sequence on paths that satisfy $\Phi_1 U^{I} \Phi_2$. \revised{Formal 
definitions of} the disjoint sets $S_\textsf{X}$, $S_\textsf{O}$, and $S_1$ to $S_m$ \revised{and descriptions of their roles in \acronym\ are provided in Sections~\ref{subsubsec:exclude}--\ref{subsubsec:together}.} 

\revised{\subsubsection{Exclude-from-refinement state sets \label{subsubsec:exclude}}}

\begin{definition}
\revised{The \emph{exclude-from-refinement state set} $S_\textsf{X}$ associated with an until path formula $P_{=?}[\Phi_1 U^{I} \Phi_2]$} over the continuous-time Markov chain $\mathcal{M}=\mathsf{CTMC}(S,\bm{\pi},\mathbf{R})$ is the set of CTMC states
\begin{equation}
  S_\textsf{X}=\{s\in S \mid P_{=?}[(\neg s \wedge \Phi_1) U \Phi_2] = P_{=?}[\Phi_1 U \Phi_2] \},
  \label{eq:Sx}
\end{equation}
where, for each state $s\in S$, $AP$ is extended with an atomic proposition also named `$s$' that is true in state $s$ and false in every other state. \revised{Thus, $S_\textsf{X}$ comprises all states $s$ for which the probability $P_{=?}[(\neg s \wedge \Phi_1) U \Phi_2]$ of reaching a state satisfying $\Phi_2$ along paths that \emph{do not contain state $s$} and on which $\Phi_1$ holds in all preceding states is the same as the probability $P_{=?}[\Phi_1 U \Phi_2]$ of reaching a state that satisfies $\Phi_2$ along paths on which $\Phi_1$ holds in all preceding states.}
\end{definition}

\begin{theorem}
\label{th:Sx}
\revised{Let $S_\textsf{X}$ be the exclude-from-refinement state set associated with the until path formula $P_{=?}[\Phi_1 U^{I} \Phi_2]$} over the continuous-time Markov chain $\mathcal{M}=\mathsf{CTMC}(S,\bm{\pi},\mathbf{R})$ with atomic proposition set $\mathit{AP}$. 
Then, for any $I\!\subseteq\! \mathbb{R}_{\geq 0}$, the probability $P_{=?}[\Phi_1 U^{I}\Phi_2]$ does not depend on the transition times from states in $S_\textsf{X}$.
\end{theorem}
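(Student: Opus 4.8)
The plan is to show that $P_{=?}[\Phi_1\,U^I\,\Phi_2]$ can be computed on a CTMC whose description involves none of the transition rates $\mathbf{R}(s,\cdot)$ of the states $s\in S_\textsf{X}$ --- namely the CTMC $\mathcal{M}^\star$ obtained from $\mathcal{M}$ by making every state of $S_\textsf{X}$ absorbing (zeroing its row of $\mathbf{R}$) --- from which the claim follows (and, a fortiori, so does invariance under replacing the exponential sojourn of such a state by a phase-type distribution). Two preliminary observations make the statement precise. First, $S_\textsf{X}$ does not itself depend on those rates: condition~(\ref{eq:Sx}) only involves \emph{unbounded}-until probabilities, which for a CTMC coincide with reachability probabilities in its embedded discrete-time Markov chain and therefore depend solely on the jump probabilities~(\ref{eq:pij}), not on the rates $\lambda_i$. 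Second, throughout the argument I would use the $\Phi_2$- and $(\neg\Phi_1\wedge\neg\Phi_2)$-absorbed CTMC on which standard CSL model checking evaluates the until operator \cite{Baier-etal-2003}, so that $\omega\models\Phi_1\,U^I\,\Phi_2$ holds iff $\omega$ occupies a $\Phi_2$-state at some time in $I$, and a path's fate is sealed at its first $\Phi_2$-instant $t_0(\omega)$.

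First I would recast membership in $S_\textsf{X}$ as a statement about path measures. For a fixed state $s$, the set $\{\omega\models(\neg s\wedge\Phi_1)\,U\,\Phi_2\}$ is exactly the subset of $\{\omega\models\Phi_1\,U\,\Phi_2\}$ consisting of the paths that never occupy $s$ strictly before $t_0(\omega)$ --- here one uses that $t_0(\omega)$ is always an admissible witness for an until with interval $[0,\infty)$. Hence the two sets differ precisely by the event ``$\omega\models\Phi_1\,U\,\Phi_2$ and $\omega$ visits $s$ before $t_0(\omega)$'', and because $\mathit{Pr}_{\bm{\pi}}$ is non-negative, the equality in~(\ref{eq:Sx}) holds iff this event is $\mathit{Pr}_{\bm{\pi}}$-null. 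In words: $s\in S_\textsf{X}$ iff almost no path satisfying $\Phi_1\,U\,\Phi_2$ enters $s$ before reaching a $\Phi_2$-state.

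Next I would lift this to the bounded interval and conclude. Since $I\subseteq\mathbb{R}_{\geq 0}$, any $\omega$ with $\omega\models\Phi_1\,U^I\,\Phi_2$ also satisfies $\Phi_1\,U\,\Phi_2$, again with $t_0(\omega)$ as a witness; combined with the previous step and a union bound over the finitely many states of $S_\textsf{X}$, almost every path satisfying $\Phi_1\,U^I\,\Phi_2$ avoids $S_\textsf{X}$ throughout $[0,t_0(\omega))$. On such a path the prefix up to $t_0(\omega)$ --- which, in the absorbed chain, is all that decides satisfaction --- evolves identically under $\mathcal{M}$ and under $\mathcal{M}^\star$, since the only transitions that were altered are those out of $S_\textsf{X}$ and these are almost surely not taken on that prefix; conversely, a path of $\mathcal{M}^\star$ that entered the now-absorbing $S_\textsf{X}$ before its first $\Phi_2$-instant would be trapped in a non-$\Phi_2$ state forever and could not satisfy $\Phi_1\,U^I\,\Phi_2$. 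Therefore $\mathit{Pr}^{\mathcal{M}}_{\bm{\pi}}\{\omega\models\Phi_1\,U^I\,\Phi_2\}=\mathit{Pr}^{\mathcal{M}^\star}_{\bm{\pi}}\{\omega\models\Phi_1\,U^I\,\Phi_2\}$, and since $P_{=?}[\Phi_1\,U^I\,\Phi_2]=\sum_{s}\bm{\pi}(s)\,\mathit{Pr}_s\{\omega\models\Phi_1\,U^I\,\Phi_2\}$ while $\mathcal{M}^\star$ carries no information about the transition times of the states in $S_\textsf{X}$, the theorem follows.

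I expect the main obstacle to be the measure-theoretic bookkeeping in this last step: identifying the sub-$\sigma$-algebra on which ``$\omega\models\Phi_1\,U^I\,\Phi_2$'' is determined, checking that the path measures of $\mathcal{M}$ and $\mathcal{M}^\star$ agree there, and discharging the corner cases of the until semantics --- in particular $\Phi_2$ being first satisfied at a state of $S_\textsf{X}$, and intervals $I$ with a strictly positive left endpoint, for which a path may still need to move after its first $\Phi_2$-instant. Committing up front to the absorbed-CTMC formulation of the until operator, as in the plan above, is what prevents these cases from multiplying; what remains is then essentially a coupling argument between $\mathcal{M}$ and $\mathcal{M}^\star$ run on the same embedded jump sequence.
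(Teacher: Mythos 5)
Your argument has the same engine as the paper's proof: both reduce the claim to the observation that, by~(\ref{eq:Sx}), the set of paths satisfying $\Phi_1 U \Phi_2$ that visit a given $S_\textsf{X}$-state before a witness time is $\mathit{Pr}_{\bm{\pi}}$-null, and both then use the inclusion of the time-bounded event in the unbounded one. You differ only in how the conclusion is drawn: the paper argues by contradiction that any dependence on the sojourn time in $s_\textsf{X}$ would force a positive-measure set of relevant paths through $s_\textsf{X}$, whereas you make ``does not depend on'' operational by exhibiting the $S_\textsf{X}$-absorbed chain $\mathcal{M}^\star$ and coupling it with $\mathcal{M}$ up to the first $\Phi_2$-instant. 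Your handling of witnesses is in fact tighter than the paper's: by anchoring both events to the \emph{first} witness $t_0(\omega)$ you obtain a genuinely disjoint decomposition, whereas the paper's sets $B$ and $C$ quantify their witnesses independently and can overlap (a path with two witness times $t_1<t_2$ that visits $s_\textsf{X}$ only in $(t_1,t_2)$ lies in both), so the asserted identity $\mathit{Pr}_{\bm{\pi}}(A)=\mathit{Pr}_{\bm{\pi}}(B)+\mathit{Pr}_{\bm{\pi}}(C)$ needs exactly the first-witness normalisation you supply.

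The one step you flag but do not discharge --- intervals with $\inf I>0$ --- is, however, a genuine gap and not removable bookkeeping. For such $I$, satisfaction of $\Phi_1 U^{I} \Phi_2$ is not determined by the path prefix up to $t_0(\omega)$, so the reduction to the $\Phi_2$-absorbed chain on which your coupling rests is unsound. Concretely, take a chain $s_0\rightarrow s_1\rightarrow s_2\rightarrow s_3$ with all transition probabilities $1$, $s_3$ absorbing, $\Phi_1=\mathit{true}$, $\Phi_2$ holding exactly at $s_1$ and $s_3$, and very large rates out of $s_0$ and $s_1$: then $s_2$ satisfies~(\ref{eq:Sx}), because every path already has a witness while occupying $s_1$, i.e.\ before reaching $s_2$; yet $P_{=?}[F^{[5,6]}\Phi_2]\approx 1-e^{-6\mu}$, where $\mu$ is the rate out of $s_2$. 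So for general $I$ the statement itself fails, and your proof --- like the paper's, where the same issue is hidden inside the unjustified claim $B\cap C=\emptyset$ --- is complete only for intervals of the form $[0,T]$, which is the case \acronym\ actually uses.
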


\begin{proof}
The proof is by contradiction. Consider a generic state $s_\textsf{X}\in S_\textsf{X}$ and the following sets of paths:
\begin{equation*}
\begin{array}{l}
\!\!A\!=\!\{\omega\!\in\!\mathit{Paths}^\mathcal{M}\!\mid\! \exists t>0\,.\,(\omega@t\!\models\! \Phi_2\,\wedge\\ 
    \textsf{\hspace*{4.8cm}}(\forall t'\!\in\![0,t). \omega@t'\!\models\!\Phi_1))\} \\
\!\!B\!=\!\{\omega\!\in\!\mathit{Paths}^\mathcal{M}\!\mid\! \exists t>0 \,.\,(\omega@t\!\models\! \Phi_2 \,\wedge\\
    \textsf{\hspace*{2.9cm}}(\forall t'\!\in\![0,t). \omega@t'\!\models\!\Phi_1\wedge \omega@t'\!\neq\! s_\textit{X}))\} \\
\!\!C\!=\!\{\omega\!\in\!\mathit{Paths}^\mathcal{M}\!\mid\! \exists t>0 \,.\,(\omega@t\!\models\! \Phi_2 \,\wedge\\ 
   \textsf{\hspace*{1cm}}(\forall t'\!\in\![0,t). \omega@t'\!\models\!\Phi_1\!)\wedge(\exists t'\in [0,t). \omega@t'=s_\textsf{X}))\}
\end{array}
\end{equation*}
As $A\!=\!B\cup C$ and $B\cap C\!=\!\emptyset$, we have $\mathit{Pr}_{\bm{\pi}}(A)=\mathit{Pr}_{\bm{\pi}}(B)+\mathit{Pr}_{\bm{\pi}}(C)$. However, according to~(\ref{eq:Sx}), $\mathit{Pr}_{\bm{\pi}}(A) \!=\! P_{=?}[\Phi_1 U \Phi_2]$ $=\! P_{=?}[(\neg s \wedge \Phi_1) U \Phi_2] \!=\! \mathit{Pr}_{\bm{\pi}}(B)$, so $\mathit{Pr}_{\bm{\pi}}(C)\!=\!0$.

Assume now that the time spent by the CTMC in state $s_\textsf{X}$ has an impact on the value of $P_{=?}[\Phi_1 U^{I} \Phi_2]$ over $\mathit{Paths}^\mathcal{M}$ for an interval $I\subseteq \mathbb{R}_{\geq 0}$. This requires that, at least for some (possibly very small) values of the time $t_\textit{X}>0$ spent in $s_\textsf{X}$, $s_\textsf{X}$ appears on paths from a set
\[
\begin{array}{l}
\!\!C'\!=\!\{\omega\!\in\!\mathit{Paths}^\mathcal{M}\!\mid\! \exists t\!\in\! I.\,(\omega@t\!\models\! \Phi_2 \,\wedge\\ 
   \textsf{\hspace*{1cm}}(\forall t'\!\in\![0,t). \omega@t'\!\models\!\Phi_1\!)\wedge(\exists t'\in [0,t). \omega@t'=s_\textsf{X}))\}
\end{array}
\]
such that $\mathit{Pr}_{\bm{\pi}}(C')>0$; otherwise, varying $t_\textit{X}$ cannot have any impact on
\[
  \begin{array}{l}
    \!P_{=?}[\Phi_1 U^{I} \Phi_2] = \mathit{Pr}_{\bm{\pi}}\{\omega\!\in\!\mathit{Paths}^\mathcal{M}\!\mid\! \exists t\in I\,.\, (\omega@t\!\models\! \Phi_2\,\wedge\\ 
    \textsf{\hspace*{5.1cm}}(\forall t'\!\in\![0,t). \omega@t'\!\models\!\Phi_1\!))\}
  \end{array}
\]
However, since $C'\!\subseteq\! C$ we must have $\mathit{Pr}_{\bm{\pi}}(C)\!\geq\!\mathit{Pr}_{\bm{\pi}}(C')\!$ $>\!0$, which contradicts our earlier finding that $\mathit{Pr}_{\bm{\pi}}(C)\!=\!0$, completing the proof.
\end{proof}

Theorem~\ref{th:Sx} allows \acronym\ to leave the states from $S_\textsf{X}$ unrefined with no loss of accuracy in the QoS analysis results. The theorem also provides a method for obtaining $S_\textsf{X}$ by computing the until formula $P_{=?}[(\neg s \wedge \Phi_1) U \Phi_2]$ for each state $s$ of the high-level CTMC (i.e.\ for each system component) and comparing the result with the value of the CSL formula $P_{=?}[\Phi_1 U \Phi_2]$, which is only computed once.  Existing probabilistic model checkers compute these \emph{unbounded} until formulae very efficiently, as they only depend on the probabilities~(\ref{eq:pij}) of transition between CTMC states and not on the state transition rates \cite{Kwiatkowska2007:SFM,Baier-etal-2003}.\revised{\footnote{\revised{To asses the time taken by model checking, an experiment was carried out to evaluate each state from the motivating example for inclusion in $S_\textsf{X}$. This experiment was repeated 30 times and the average time taken by model checking each state was found to be 1.6ms.}}}

\begin{example}
\label{ex1}
Consider the QoS properties~(\ref{eq:web-reqs}) of the web application from our motivating example.  For property \textbf{P2} and the high-level CTMC model from Fig.~\ref{fig:CTMC}, we have 
\[
\begin{array}{l}
  P_{=?}[\neg\mathit{arrivals}\; U\; \mathit{complete}]=1-p_1=\\
  \qquad P_{=?}[(\neg s_2\wedge\neg\mathit{arrivals})\; U\; \mathit{complete}]=\\
  \qquad P_{=?}[(\neg s_4\wedge\neg\mathit{arrivals})\; U\; \mathit{complete}] =\\
  \qquad P_{=?}[(\neg s_7\wedge\neg\mathit{arrivals})\; U\; \mathit{complete}],
\end{array}
\]
(and $P_{=?}[(\neg s\wedge\neg\mathit{arrivals})\, U \mathit{complete}] \!\neq\! 1\!-\!p_1$ for any other state $s$),
so $S_\textsf{X}\!=\!\{s_2,s_4,s_7\}$ for \textbf{P2}. Applying Theorem~\ref{th:Sx} to the other two properties from~(\ref{eq:web-reqs}) yields $S_\textsf{X}\!=\!\{s_7\}$.
\end{example}

\revised{\subsubsection{Once-only state sets \label{subsubsec:once}}}

\begin{definition}
\revised{The \emph{once-only state set} $S_\textsf{O}$ associated with an until path formula $P_{=?}[\Phi_1 U^{I} \Phi_2]$} over the continuous-time Markov chain $\mathcal{M}=\mathsf{CTMC}(S,\bm{\pi},\mathbf{R})$ is the set
\begin{equation}
\begin{array}{l}
  \!\!S_\textsf{O}\!=\!\{s\!\in\! S\!\setminus\! S_\textsf{X} \mid P_{>0}[\Phi_1 U \Phi_2] \wedge P_{\leq 0}[(\neg s\!\wedge\! \Phi_1) U \Phi_2] \:\wedge\\ 
  \;\qquad\quad\forall s'\in S\:.\: (s\models P_{>0}[X s'] \rightarrow s'\models P_{\leq 0}[\neg S_\textsf{X} U s])\},\\[1mm]
\end{array}
  \label{eq:So}
\end{equation}
where the until formula $\neg S_\textsf{X} U s$ holds for paths that reach state $s$ without going through any states from $S_\textsf{X}$ (which corresponds to labelling the states from $S_\textsf{X}$ with the atomic proposition `$S_\textsf{X}$').
\end{definition}

The next theorem 
\revised{asserts that for every state $s_\textsf{O}$ from $S_\textsf{O}$, $P_{=?}[\Phi_1 U^{I} \Phi_2]$ can be calculated by applying the probability measure $\mathit{Pr}_{\bm{\pi}}$ to the set of paths $\omega$ which, in addition to satisfying the clause specified by the CSL semantics (i.e., $\exists t\!\in\! I. (\forall t'\!\in\![0,t).\, \omega@t'\!\models\! \Phi_1)\wedge \omega@t\!\models\! \Phi_2$), contain $s_\textsf{O}$ once and only once \emph{before time instant $t$}. Using the unique existential quantifier $\exists!$, the last clause can be formalised as $\exists! i\:.\:(\omega[i]=s_\textsf{O} \wedge \sum_{j=1}^i t_j<t)$, where $t_j$ is the time spent in the $j$-th state on the path (cf.\ Section~\ref{subsect:ctmcs}).} 


\begin{theorem}
\label{th:So}
\revised{Let $S_\textsf{O}$ be the once-only state set associated with the until path formula $P_{=?}[\Phi_1 U^{I} \Phi_2]$} 
over the continuous-time Markov chain $\mathcal{M}=\mathsf{CTMC}(S,\bm{\pi},\mathbf{R})$. 
Then, for any state $s_\textsf{O}\in S_\textsf{O}$ and interval $I\!\subseteq\! \mathbb{R}_{\geq 0}$,
\begin{equation}
\begin{array}{l}
  \!\!\!\!P_{=?}[\Phi_1 U^{I} \Phi_2] = \mathit{Pr}_{\bm{\pi}} \{\omega\!\in\!\mathit{Paths}^\mathcal{M}\mid\\
    \!\quad \exists t\!\in\! I\:.\: (\omega@t\!\models\!\Phi_2\wedge (\forall t'\!\in\![0,t)\:.\: \omega@t'\!\models\!\Phi_1)\:\wedge\\
  \qquad\qquad\:\:\textrm{\revised{$\exists! i$}}\:.\: (\omega[i]=s_\textsf{O} \wedge \sum_{j=1}^i t_j<t))\}.
  \end{array}
  \label{eq:SOresult}
\end{equation}
\end{theorem}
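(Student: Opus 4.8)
The plan is to show that the path set $D'$ appearing on the right-hand side of~(\ref{eq:SOresult}) and the set
\[
  D=\{\omega\in\mathit{Paths}^\mathcal{M}\mid \exists t\in I\,.\,(\omega@t\models\Phi_2\wedge \forall t'\in[0,t).\,\omega@t'\models\Phi_1)\}
\]
of \emph{all} paths of $\mathcal{M}$ that satisfy $\Phi_1 U^I\Phi_2$ differ only by a null set. Since $D'$ is obtained from $D$ by adding the clause that $s_\textsf{O}$ is traversed once and only once strictly before the witness instant, $D'\subseteq D$ and $\mathit{Pr}_{\bm{\pi}}(D)=P_{=?}[\Phi_1 U^I\Phi_2]$, so it suffices to prove $\mathit{Pr}_{\bm{\pi}}(D\setminus D')=0$. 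For a path $\omega\in D$ with transition instants $T_i=\sum_{j=1}^i t_j$, I would call an index $q$ a \emph{witness index} of $\omega$ if $\omega[q]\models\Phi_2$, $\omega[1],\dots,\omega[q]\models\Phi_1$, and $(T_{q-1},T_q]\cap I\neq\emptyset$. Every witness instant $t\in I$ falls in exactly one such interval, so $D$ is the set of paths having at least one witness index, and the number of completed occurrences of $s_\textsf{O}$ before a witness instant of index $q$ equals $|\{i<q\mid\omega[i]=s_\textsf{O}\}|$, which depends on $q$ alone. Writing $q^\star(\omega)$ for the smallest witness index of $\omega$, it therefore suffices to prove that $|\{i<q^\star(\omega)\mid\omega[i]=s_\textsf{O}\}|=1$ for almost every $\omega\in D$, since then $\omega\in D'$.

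\emph{At least one occurrence.} A witness instant of $\omega\in D$ for $\Phi_1 U^I\Phi_2$ is also a witness instant for the unbounded formula $\Phi_1 U\Phi_2$. Hence, by the conjunct $P_{\leq 0}[(\neg s_\textsf{O}\wedge\Phi_1) U \Phi_2]$ in the definition~(\ref{eq:So}) of $S_\textsf{O}$, almost every $\omega\in D$ fails $(\neg s_\textsf{O}\wedge\Phi_1) U \Phi_2$; evaluating this failure at the witness instant of index $q^\star(\omega)$ and using that $\Phi_1$ holds at every earlier instant forces $\omega@t'=s_\textsf{O}$ at some earlier instant $t'$, i.e.\ $s_\textsf{O}$ occurs among $\omega[1],\dots,\omega[q^\star(\omega)]$. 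Except in the boundary configuration $\omega[q^\star(\omega)]=s_\textsf{O}$ --- which requires $s_\textsf{O}\models\Phi_1\wedge\Phi_2$ and does not arise for the reachability-style QoS properties of interest, where $\Phi_2$ labels only absorbing states --- this occurrence sits at an index strictly below $q^\star(\omega)$.

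\emph{At most one occurrence.} Consider $\omega\in D$ with $\omega[i_1]=\omega[i_2]=s_\textsf{O}$ for some $i_1<i_2<q^\star(\omega)$. If $\omega$ visits no $S_\textsf{X}$-state at any index in $(i_1,i_2)$, then $\omega$ leaves $s_\textsf{O}$ at step $i_1$ and returns to $s_\textsf{O}$ without crossing $S_\textsf{X}$; but $\omega[i_1+1]$ is a state $s'$ with $\mathbf{R}(s_\textsf{O},s')>0$, so $s'\models P_{\leq 0}[\neg S_\textsf{X} U s_\textsf{O}]$ by the last conjunct of~(\ref{eq:So}), and applying the Markov property at each step index shows that the event ``$\omega$ leaves $s_\textsf{O}$ and returns to it without crossing $S_\textsf{X}$'' is a countable union of events of probability $0$, hence null. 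Otherwise $\omega$ visits some $S_\textsf{X}$-state $\omega[k]$ with $i_1<k<i_2$; then $T_k<T_{i_2}<t$ for every witness instant $t$ of index $q^\star(\omega)$, so $\omega$ satisfies $\Phi_1 U^I\Phi_2$ and traverses the $S_\textsf{X}$-state $\omega[k]$ at time $T_k$ before that instant, i.e.\ $\omega$ lies in the set $C$ (instantiated with $s_\textsf{X}=\omega[k]$) shown to have probability $0$ in the proof of Theorem~\ref{th:Sx}. Since $S_\textsf{X}$ is finite, the union over both cases shows that the set of paths in $D$ on which $s_\textsf{O}$ occurs at least twice before $q^\star(\omega)$ is null.

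Combining the two parts, almost every $\omega\in D$ traverses $s_\textsf{O}$ exactly once strictly before its earliest witness instant, hence $\omega\in D'$; therefore $\mathit{Pr}_{\bm{\pi}}(D\setminus D')=0$ and~(\ref{eq:SOresult}) follows. I expect the main obstacle to be the measure-theoretic bookkeeping: lifting the one-step condition in the last conjunct of~(\ref{eq:So}) to the statement ``a path cannot return to $s_\textsf{O}$ without passing through $S_\textsf{X}$'' by a regeneration/Markov argument, keeping the countable unions of null sets under control, and carefully ruling out the timing boundary cases in which a witness instant falls inside an occupancy interval of $s_\textsf{O}$ or of an $S_\textsf{X}$-state.
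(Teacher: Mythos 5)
Your proof is correct and follows essentially the same route as the paper's: both show that the set of until-satisfying paths on which $s_\textsf{O}$ occurs zero times or more than twice before the witness instant is null, using the conjunct $P_{\leq 0}[(\neg s_\textsf{O}\wedge\Phi_1)\, U\, \Phi_2]$ for the first case and the next-state conjunct of~(\ref{eq:So}) for the second. You are somewhat more careful than the paper in the ``at most one'' part --- explicitly disposing of the sub-case where an $S_\textsf{X}$-state is crossed between two occurrences of $s_\textsf{O}$ by appealing to the null set $C$ from Theorem~\ref{th:Sx}, a case the paper's phrase ``without reaching states in $S_\textsf{X}$'' silently elides --- but the underlying decomposition and key ingredients are the same.
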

\begin{proof}
Let $A'$ denote the subset of $\mathit{Paths}^\mathcal{M}$ from~(\ref{eq:SOresult}). According to CSL  semantics, $P_{=?}[\Phi_1 U^{I} \Phi_2]=\mathit{Pr}_{\bm{\pi}}(A)$ where
\[
  \begin{array}{l}
    A = \{\omega\!\in\!\mathit{Paths}^\mathcal{M}\!\mid\! \exists t\in I\,.\,(\omega@t\!\models\! \Phi_2\,\wedge\\ 
    \textsf{\hspace*{4.8cm}}(\forall t'\!\in\![0,t). \omega@t'\!\models\!\Phi_1\!))\}.
  \end{array}
\]
Since $A'\subseteq A$, we have $P_{=?}[\Phi_1 U^{I} \Phi_2]\!=\!\mathit{Pr}_{\bm{\pi}}(A)\!=\!\mathit{Pr}_{\bm{\pi}}(A')+\mathit{Pr}_{\bm{\pi}}(A\setminus A')$, so to prove the theorem we must show that $\mathit{Pr}_{\bm{\pi}}(A\setminus A')=0$. To this end, we partition $A\setminus A'$ into two disjoint subsets: $A_1$, comprising the paths that do not contain state $s_\textsf{O}$ before time $t$ from the first line of~(\ref{eq:SOresult}), and $A_2$, comprising the paths that contain state $s_\textsf{O}$ before time $t$ more than once. Since $P_{\leq 0}[(\neg s_\textsf{O}\!\wedge\! \Phi_1) U \Phi_2]$ holds (according to the definition of $S_\textsf{O}$), $\mathit{Pr}_{\bm{\pi}}(A_1)=0$. Similarly, since $\forall s'\in S\:.\: (s_\textsf{O}\models P_{>0}[X s'] \rightarrow s'\models P_{\leq 0}[\neg S_\textsf{X} U s_\textsf{O}])$ holds, the set of paths satisfying $\Phi_1 U^{I} \Phi_2$ and containing $s_\textsf{O}$ twice (without reaching states in $S_\textsf{X}$) occur with probability zero. As $A_2$ is included in this set, we necessarily have $\mathit{Pr}_{\bm{\pi}}(A_2)=0$. We conclude that $\mathit{Pr}_{\bm{\pi}}(A\setminus A')=\mathit{Pr}_{\bm{\pi}}(A_1)+\mathit{Pr}_{\bm{\pi}}(A_2)=0$, which completes the proof.
\end{proof}

\acronym\ exploits Theorem~\ref{th:So} in two ways. First, since $S_\textsf{O}$ states correspond to system components always executed before $\Phi_1 U^{I} \Phi_2$ becomes true, $P_{=?}[\Phi_1 U^{I} \Phi_2]=0$ for any interval $I\subseteq \left[0,\sum_{s_i\in S_\textsf{O}} \delta_i \right)$, where $\delta_i$ is the delay~(\ref{eq:delay}) of the component $i$ associated with state $s_i$. Therefore, \acronym\ returns a zero probability in this scenario without performing probabilistic model checking. Second, because the components associated with $S_\textsf{O}$ states are executed precisely once on relevant CTMC paths, no modelling of their delays is required, and \acronym\ only needs to model the holding times of these states. Importantly, obtaining $S_\textsf{O}$ to enable these simplifications only requires the probabilities of unbounded until and next path formulae (cf.~(\ref{eq:So})), which probabilistic model checkers can compute efficiently  for the reasons we explained earlier in this section.



\begin{example}
\label{ex2}
Consider property \textbf{P1} from the QoS properties~(\ref{eq:web-reqs}) in our motivating example: $P_{=?}[F^{[0,T]} \mathit{complete}]$. In line with definition~(\ref{eq:So}), we obtain the set $S_\textsf{O}$ for this property by first evaluating the following CSL formulae for the high-level CTMC from Fig.~\ref{fig:CTMC}:
\squishlist
\item $P_{>0}[\mathit{true}\:U\mathit{complete}]$ which holds as $P_{=?}[F\:\mathit{complete}]\!=\!1$
\item $P_{\leq 0}[(\neg s\wedge\mathit{true})\: U\mathit{complete}]\!=\!P_{\leq 0}[\neg s \, U\mathit{complete}]$, which holds only for states $s_1$ and $s_6$.
\squishend
The constraint \mbox{$\forall s'\!\in\! S. (s\!\models\! P_{>0}[X s'] \rightarrow s'\!\models\! P_{\leq 0}[\neg S_\textsf{X} U s])$} is then checked only for the $S_\textsf{O}$-candidate states $s=s_1$ and $s=s_6$, taking into account the fact that $S_\textsf{X}=\emptyset$ (cf.~Example~\ref{ex1}). For instance, since $s_1\models P_{>0}[X s']$ only for $s'\in \{s_2,s_3\}$, and $s_2,s_3\models P_{\leq 0}[\mathit{true}\: U\: s_1]$, we conclude that $s_1\in S_\textsf{O}$. Similarly, $s_6\models P_{>0}[X s']$ only if $s'=s_7$ and $s_7\models P_{\leq 0}[\mathit{true}\: U\: s_6]$, so $s_6\in S_\textsf{O}$, giving $S_\textsf{O}=\{s_1,s_6\}$. It is easy to show that the same ``once-only'' state set is obtained for the other two properties from~(\ref{eq:web-reqs}).
\end{example}

\revised{\subsubsection{Together state sets \label{subsubsec:together}}}

\begin{algorithm}[t]
  \caption{Generation of ``together'' state sequences \label{alg:together}}
  \begin{algorithmic}[1]
    \Function{TogetherSeqs}{$\mathsf{CTMC}(S,\bm{\pi},\mathbf{R})$, $S_\textsf{X}$, $S_\textsf{O}$}
      \State $\mathit{TS} \gets \emptyset$, $\mathit{States} \gets S\setminus (S_\textsf{X}\cup\ S_\textsf{O})$
      \While {$\mathit{States}\neq\emptyset$}
         \State $s\gets \textsc{PickAnyElement}(States)$
         \State $T \gets \langle s \rangle$, $\mathit{States}\gets \mathit{States}\setminus \{s\}$
         \State $\mathit{left}, \mathit{right}\gets \mathit{true}$
         \While {$(\mathit{left}\vee \mathit{right}) \wedge \mathit{States}\neq\emptyset$}
           \If{$\mathit{left}$}
              \State $s\gets\textsc{Pred}(\textsc{Head}(T),\mathit{States},S,\bm{\pi},\mathbf{R})$
              \If{$s\neq$ NIL}
                \State $T\gets \langle s\rangle^\frown T$, $\mathit{States}\gets \mathit{States}\setminus \{s\}$
              \Else
                \State $\mathit{left}\gets\mathit{false}$
              \EndIf
           \EndIf
           \If{$\mathit{right}$}
              \State $s\gets\textsc{Succ}(\textsc{Tail}(T),\mathit{States},S,\bm{\pi},\mathbf{R})$
              \If{$s\neq$ NIL}
                \State $T\gets T^\frown \langle s\rangle$, $\mathit{States}\gets \mathit{States}\setminus \{s\}$
              \Else
                \State $\mathit{right}\gets\mathit{false}$
              \EndIf
           \EndIf
         \EndWhile
         \State $\mathit{TS}\gets \mathit{TS} \cup \{T\}$
      \EndWhile
      \State \Return{$\mathit{TS}$}
    \EndFunction
    \Statex
     \Function{Pred}{$s,\mathit{States},S,\bm{\pi},\mathbf{R}$}
       \State \textbf{if} $\bm{\pi}(s)>0$ \textbf{then} \Return{NIL} \textbf{end if}
        \For{$s'\in \mathit{States}$}
           \State \textbf{if} $\mathbf{R}(s',s)\!>\!0\wedge \forall s''\!\in\! S\!\setminus\! \{s,s'\}\, .$ 
            \Statex \hspace*{2.5cm} $(\mathrm{R}(s',s'')\!=\!0\; \wedge\; \mathrm{R}(s'',s)\!=\!0)$ \textbf{then}
            \State \hspace*{5mm} \Return{$s'$}
           \State \textbf{end if}
        \EndFor
        \State \Return{NIL}
     \EndFunction
    \Statex
     \Function{Succ}{$s,\mathit{States},S,\bm{\pi},\mathbf{R}$}
        \For{$s'\in \mathit{States}$}
           \State \textbf{if} $\bm{\pi}(s')=0\wedge\mathbf{R}(s,s')\!>\!0\wedge \forall s''\!\in\! S\!\setminus\! \{s,s'\}\, .$ 
            \Statex \hspace*{2.5cm} $(\mathrm{R}(s,s'')\!=\!0\; \wedge\; \mathrm{R}(s'',s')\!=\!0)$ \textbf{then}
            \State \hspace*{5mm} \Return{$s'$}
           \State \textbf{end if}
        \EndFor
        \State \Return{NIL}
     \EndFunction
  \end{algorithmic}
\end{algorithm}

Finally, the result in this section supports the calculation and exploitation of the ``together'' state sets 
from~(\ref{eq:Spartition}). 

\begin{definition}
\label{def:together-sets}
\revised{The \emph{together state sets} $S_1, S_2, \ldots, S_m$ for an until path formula $P_{=?}[\Phi_1 U^{I} \Phi_2]$ over the Markov chain $\mathcal{M}=\mathsf{CTMC}(S,\bm{\pi},\mathbf{R})$ are the state sets comprising the same elements as the $m$ state sequences returned by function \textsc{TogetherSeqs}($\mathsf{CTMC}(S,\bm{\pi},\mathbf{R})$, $S_\textsf{X}$, $S_\textsf{O}$) from Algorithm~\ref{alg:together}, where $S_\textsf{X}$ and $S_\textsf{O}$ are the exclude-from-refinement and once-only state sets for the formula.}
\end{definition}

The function \textsc{TogetherSeqs} 
builds the $m$ state sequences in successive iterations of its outer while loop (lines~3--26). The set $\mathit{States}$ maintains the states yet to be allocated to sequences (initially $S\setminus (S_\textsf{X}\cup\ S_\textsf{O})$, cf.~line~2), and each new sequence $T$ starts with a single element picked randomly from $\mathit{States}$ (line~4).
The inner while loop in lines~7--24 ``grows'' this sequence. First, the if statement in lines~8--15 tries to grow the sequence to the left with a state $s$ that ``precedes'' the sequence, in the sense that the only outgoing CTMC transition from $s$ is to the sequence head, and the only way of reaching the sequence head is through an incoming CTMC transition  from $s$. Analogously, the if statement in lines~16-23 grows the sequence to the right, by appending to it the state that ``succeeds'' the state at the tail of the sequence, if such a ``successor'' state exists. The predecessor and successor states of a state $s$ are computed by the functions \textsc{Pred} and \textsc{Succ}, respectively, where these functions return NIL if the states they attempt to find do not exist. The inner while loop terminates when the $\mathit{States}$ set becomes empty or  the sequence $T$ has no more predecessors or successors, so the flags $\mathit{left}$ and $\mathit{right}$ are set to $\mathit{false}$ in lines 13 and 21, respectively. On exit from this while loop, the sequence $T$ is added to the set of sequences $\mathit{TS}$, which is returned (line~27) after the outer while loop also terminates when $\mathit{States}$ becomes empty. Termination is guaranteed since at least one element is removed from $\mathit{States}$ in each iteration of this while loop (in line~5). 

To analyse the complexity of \textsc{TogetherSeqs}, we note that the worst case scenario corresponds to $S_\textsf{X}=S_\textsf{O}=\emptyset$ and to the function returning only sequences of length $1$, in which case the outer while loop is executed $|S|$ times with both \textsc{Pred} and \textsc{Succ} invoked once in each iteration. The if statements from \textsc{Pred} and \textsc{Succ} perform $\mathsf{O}(|S|)$ comparisons, and are executed within for loops with $\mathsf{O}(|S|)$ iterations, yielding an $\mathsf{O}(|S|^2)$ complexity for each function, and an overall $\mathsf{O}(|S|^3)$ complexity for the algorithm.

\begin{theorem}
\label{th:Sm}
If $T=\langle s_{i1}, s_{i2}, \ldots, s_{iN_i} \rangle$ is one of the sequences returned by \textsc{TogetherSeqs}, $\omega$ a path that satisfies $\Phi_1 U^{I} \Phi_2$ for an interval $I\!\subseteq\!\mathbb{R}_{\geq 0}$, and $t\!\in\! I$ the earliest time when $\omega@t\!\models\!\Phi_2$ (with $\omega@t'\!\models\!\Phi_1$ for all $t'\!\in\! [0,t)$), then up to time $t$ the states from $T$ can only appear on $\omega$  as complete sequences $\ldots s_{i1}t_{i1}s_{i2}t_{i2}\ldots s_{iN_i}t_{iN_i}\ldots$. 
\end{theorem}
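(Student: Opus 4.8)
The plan is to prove the claim by induction on the structure of the path, exploiting the defining properties of the functions \textsc{Pred} and \textsc{Succ} that built the sequence $T$. Fix a sequence $T=\langle s_{i1},\ldots,s_{iN_i}\rangle$ returned by \textsc{TogetherSeqs}, a path $\omega$ satisfying $\Phi_1 U^I \Phi_2$, and the earliest witness time $t$. It suffices to show two ``closure'' properties for the prefix of $\omega$ occupied before time $t$: (i) whenever a state $s_{ik}$ with $k<N_i$ appears on this prefix, it is immediately followed by $s_{i(k+1)}$; and (ii) whenever a state $s_{ik}$ with $k>1$ appears on this prefix and is not the very first state of $\omega$, it is immediately preceded by $s_{i(k-1)}$. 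Together these two facts force every occurrence of a $T$-state to sit inside a maximal run $s_{i1}t_{i1}\ldots s_{iN_i}t_{iN_i}$, which is exactly the statement.

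For property (i): the algorithm appended $s_{i(k+1)}$ as the \textsc{Succ} of $s_{ik}$, so by the condition tested in \textsc{Succ} we have $\mathbf{R}(s_{ik},s_{i(k+1)})>0$ and $\mathbf{R}(s_{ik},s'')=0$ for every $s''\neq s_{i(k+1)}$ (other than $s_{ik}$ itself). Hence $s_{i(k+1)}$ is the \emph{only} state the CTMC can move to from $s_{ik}$, so on \emph{any} path — in particular on the prefix of $\omega$ before $t$ — an occurrence of $s_{ik}$ (that is not the last state of $\omega$) is necessarily followed by $s_{i(k+1)}$. Symmetrically, for property (ii): the algorithm prepended $s_{i(k-1)}$ as the \textsc{Pred} of $s_{ik}$, and the condition in \textsc{Pred} guarantees $\mathbf{R}(s_{i(k-1)},s_{ik})>0$ together with $\mathbf{R}(s'',s_{ik})=0$ for every $s''\neq s_{i(k-1)}$ (other than $s_{ik}$), and also $\bm{\pi}(s_{ik})=0$. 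Thus the only way to enter $s_{ik}$ is from $s_{i(k-1)}$, and since $\bm{\pi}(s_{ik})=0$ the state $s_{ik}$ cannot be the initial state $\omega[1]$ either; so any occurrence of $s_{ik}$ on $\omega$ is immediately preceded by $s_{i(k-1)}$. Chaining (i) from any occurring $T$-state forward until $s_{iN_i}$ and (ii) backward until $s_{i1}$ then yields the complete block; because $s_{i1}$ has in-degree restricted (for $k\geq 2$) and $s_{iN_i}$ has the forced single successor, the block cannot overlap another copy of itself before time $t$, and the earliest-time / $\Phi_1$-prefix hypotheses on $\omega$ are not even needed for the structural argument beyond ensuring we speak of a genuine path prefix.

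The one point requiring a little care — and the main (mild) obstacle — is the boundary behaviour of the induction: an occurrence of a $T$-state could in principle be truncated by the end of the finite path (for property (i)) or could coincide with the path's initial state (for property (ii)). Property (ii) is clean because \textsc{Pred} explicitly rules out $\bm{\pi}(s)>0$, so no interior $T$-state ($k\geq 2$) is ever a start state; the head $s_{i1}$ may or may not be a start state, which is harmless. For property (i), if $s_{ik}$ ($k<N_i$) were the last state of a finite path $\omega$, then $\omega$ would end in a transient state, contradicting the definition of a finite path in Section~\ref{subsect:ctmcs} (finite paths end in absorbing states, and a $T$-state with $k<N_i$ has $\mathbf{R}(s_{ik},s_{i(k+1)})>0$, hence is transient). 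So every occurrence of an interior or head $T$-state does extend to a full block. I would write this up as a short formal induction on the index of the occurrence along $\omega$, invoking Definition~\ref{def:together-sets} and the explicit guards inside \textsc{Pred} and \textsc{Succ} at each step.
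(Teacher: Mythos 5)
Your structural argument — that the guards in \textsc{Pred} and \textsc{Succ} force every occurrence of a $T$-state on $\omega$ to extend forwards to $s_{iN_i}$ and backwards to $s_{i1}$ — is correct and matches the first half of the paper's proof (your treatment of the boundary cases, the initial state via $\bm{\pi}(s_{ik})=0$ and the end of a finite path via absorbency, is if anything more careful than the paper's). However, there is a genuine gap in the second half, and it is located exactly where you wave it away: you assert that ``the earliest-time / $\Phi_1$-prefix hypotheses on $\omega$ are not even needed for the structural argument.'' They are needed, because the theorem does not merely claim that $T$-states occur on $\omega$ as complete blocks; it claims they occur as complete blocks \emph{up to time $t$}. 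Your argument shows each occurrence sits inside a full run $s_{i1}t_{i1}\ldots s_{iN_i}t_{iN_i}$ somewhere on $\omega$, but it does not rule out that this run \emph{straddles} the cutoff $t$, i.e., that $\omega@t = s_{ij}$ for some $j<N_i$. In that case the portion of $\omega$ up to time $t$ would contain $s_{i1},\ldots,s_{ij}$ but not $s_{i,j+1},\ldots,s_{iN_i}$, and the conclusion of the theorem (and its downstream use in the joint-delay modelling of~(\ref{eq:joint-delay}), which needs the relevant prefix to contain either all of $S_i$ or none of it) would fail.

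The paper closes this case with an argument you have no substitute for: if $\omega@t=s_{ij}$ with $j<N_i$, then $s_{ij}\models\Phi_2$, and since (by your own structural half) \emph{every} path satisfying $\Phi_1 U \Phi_2$ that meets $T$ traverses the whole block and hence satisfies the until formula already at $s_{ij}$, excluding any of $s_{i,j+1},\ldots,s_{iN_i}$ from paths does not change $P_{=?}[\Phi_1 U \Phi_2]$; by definition~(\ref{eq:Sx}) those states would then lie in $S_\textsf{X}$, contradicting line~2 of \textsc{TogetherSeqs}, which builds $T$ only from $S\setminus(S_\textsf{X}\cup S_\textsf{O})$. You need to add this step (or an equivalent one using the hypothesis that $t$ is the earliest $\Phi_2$-witness) before the proof is complete; as written, the proposal proves a weaker statement than the theorem.
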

\begin{proof}
The case $N_i=1$ is trivial, so we assume $N_i\!>\!1$ in the rest of the proof. We have two cases: either $\omega$ contains no states from $T$, or it contains at least one state from $T$. In the former case, the theorem is proven. In the latter case, consider any state $s_{i_j}$ that occurs on $\omega$, $1\!\leq\! j\!\leq\! N_i$. The states $s_{i1}$, $s_{i2}$, \ldots, $s_{i,{j-1}}$ must also occur on $\omega$, in this order and just before $s_{ij}$, as transitioning through each of these states is the only way to reach $s_{ij}$ in the CTMC. Moreover, $s_{i,{j+1}}$, $s_{i,{j+2}}$, \ldots, $s_{iN_i}$ must immediately follow $s_{ij}$ on $\omega$ (in this order) because $s_{ij}$ is not an absorbing state and its only outgoing transition is to $s_{i,{j+1}}$, etc. Hence, the path is of the form $\omega\!=\!s_1t_1s_2t_2\ldots s_xt_xs_{i1}t_{i1}s_{i2}t_{i2}\ldots s_{iN_i}t_{iN_i}\ldots$ for some $x\!\geq\! 0$. To prove that this occurrence of all states from $T$ on $\omega$ is either up to or after time $t$, we show that it is not possible to have $\omega@t\!=\!s_{ij}$ for any $j<N$. Indeed, if we assume $\omega@t\!=\!s_{ij}$ then according to the hypothesis $s_{ij}\models\Phi_2$  must hold. As this must be true not only for $\omega$ but also for any other path $\omega'$ that satisfies $\Phi_1 U \Phi_2$ and contains the states from $T$, definition~(\ref{eq:Sx}) implies that $s_{i,{j+1}},s_{i,{j+2}},\ldots,s_{iN_i}\in S_\textsf{X}$ because $\omega'$ comprises states that satisfy $\Phi_1$ followed by state $s_{ij}$ that satisfies $\Phi_2$, followed by the $s_{i,{j+1}},s_{i,{j+2}},\ldots,s_{iN_i}$. However, having states from $T$ in $S_\textsf{X}$ is not possible since line~2 of \textsc{TogetherSeqs} removes $S_\textsf{x}$ from the set of states used to generate $T$.
\end{proof}

Theorem~\ref{th:Sm} allows \acronym\ to model the delays of all states in the same ``together'' set $S_i$, $1\leq i\leq m$, as a \emph{joint delay}
\begin{equation}
  \label{eq:joint-delay}
  \Delta_i = \sum_{j=1}^{N_i}\delta_{ij}, 
\end{equation}
since the relevant part of any path that influences the value of $P_{=?}[\Phi_1 U^{I} \Phi_2]=0$ contains either all these states or none of them. 

\begin{example}
\label{ex3}
Consider again the QoS properties~(\ref{eq:web-reqs}) of the web application from our motivating example. For property \textbf{P2}, \textsc{TogetherSeqs} is called with $S_\textsf{X}=\{s_2,s_4,s_7\}$ (cf.~Example~\ref{ex1}) and $S_\textsf{O}=\{s_1,s_6\}$ (cf.~Example~\ref{ex2}), so it starts with $\mathit{States}=\{s_1,s_2,\ldots,s_7\}\setminus(S_\textsf{X}\cup S_\textsf{O})=\{s_3,s_5\}$ in line~2. Irrespective of which of $s_3$ and $s_5$ is picked in line~5, the other state will be added to the same ``together'' sequence since the two states always follow one another with no intermediate states. The ``together'' sets for the other two properties from~(\ref{eq:web-reqs}) are given in Table~\ref{tab:classification}, which brings together the results from Examples~\ref{ex1}--\ref{ex3}.
\end{example}

\begin{table}
\centering
\caption{CTMC state partition for the web application properties}
\label{tab:classification}
\begin{tabular}{cccc}\toprule
 \textbf{Property} & $S_\textsf{X}$ & $S_\textsf{O}$  & $S_1$, $S_2$, \ldots, $S_m$ \\ \toprule
 \textbf{P1} & $\{s_7\}$ & $\{s_1,s_6\}$ & $\{s_2,s_4\}$, $\{s_3,s_5\}$ \\
 \textbf{P2} & $\{s_2,s_4,s_7\}$ & $\{s_1,s_6\}$ & $\{s_3,s_5\}$ \\
 \textbf{P3} & $\{s_7\}$ & $\{s_1,s_6\}$ &$\{s_2,s_4\}$, $\{s_3,s_5\}$ \\ \bottomrule
 \end{tabular}
\end{table}

\subsection{Selective refinement}

The second \acronym\ step models the delays and holding times of the relevant components of the analysed system according to the rules established in the previous section. These rules are summarised in Table~\ref{table:refinementRules}, and require methods for joint delay modelling (for components associated with ``together'' CTMC states) and for individual holding time modelling (for components not associated with $S_\textsf{X}$ states). The two methods are described next.

\begin{table*}
\centering
\caption{\acronym\ rules for modelling the delays and holding times of different types of system components}
\label{table:refinementRules}
\begin{small}
\begin{tabular}{cccc}\toprule
& $S_\textsf{X}$ \textbf{components} & $S_\textsf{O}$ \textbf{components} & $S_i$ \textbf{components}, $1\leq i\leq m$ \\ \toprule
 \textbf{delay} & no modelling needed &  no modelling needed$^\dagger$ & joint delay modelling \\
 \textbf{holding time} & no modelling needed & per component modelling & per component modelling \\ \bottomrule
 \multicolumn{4}{l}{$^\dagger$$S_\textsf{O}$ components introduce a deterministic delay  $\sum_{s_i\in S_\textsf{O}}\delta_i$}
\end{tabular}
\end{small}
\end{table*}

\begin{figure}
\centering
\includegraphics[width=\hsize]{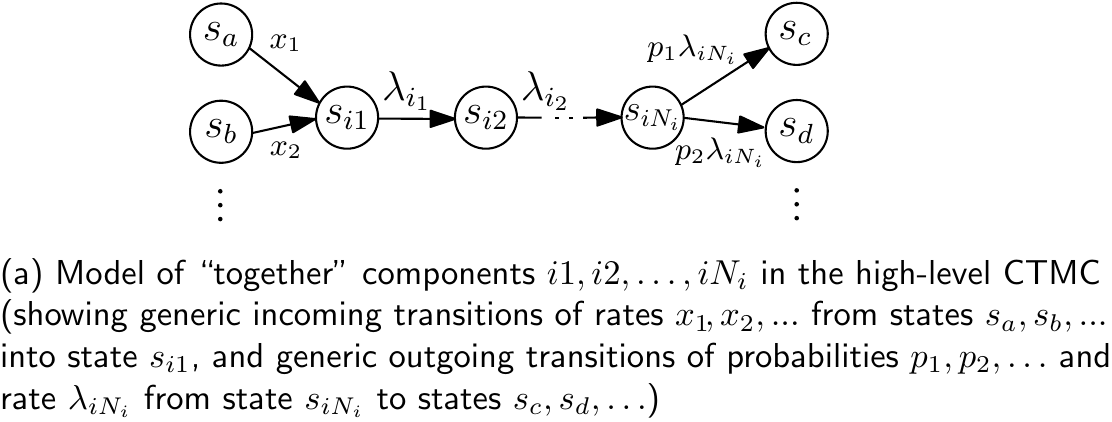} 

\vspace*{5mm}
\includegraphics[width=\hsize]{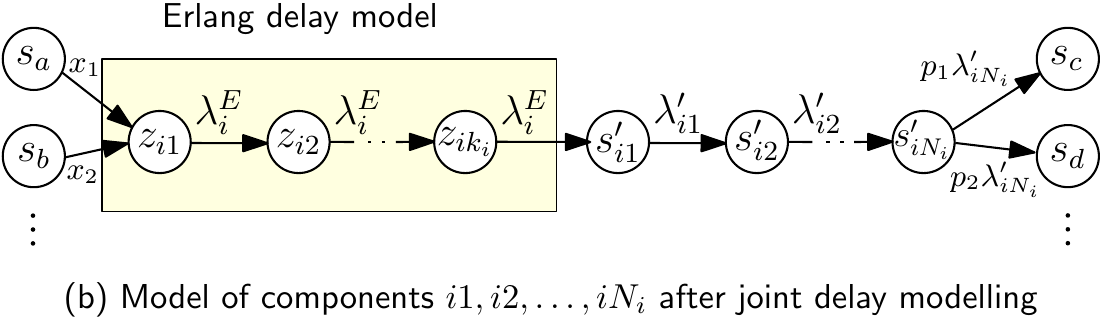}
\caption{Joint delay modelling for ``together'' state set $S_i$ whose final state $s_{iN_i}$ has $M_i\geq 1$ outgoing transitions}
\label{fig:delayExtraction}

\vspace*{-2mm}
\end{figure}

\subsubsection{Joint delay modelling}
\label{sec:delay}


%
%
%

For each ``together'' set $S_i\!=\!\{s_{i1}, s_{i2}, \ldots, s_{iN_i}\}$, \acronym\ extends the CTMC with additional states and transitions that model the joint delay $\Delta_i$ from~(\ref{eq:joint-delay}) by means of an Erlang distribution, i.e., a sum of several independent exponential distributions with the same rate~\cite{stewart2009probability}. As shown in Fig.~\ref{fig:delayExtraction}, this involves replacing the states from $S_i$ with a sequence of delay-modelling states $z_{i1}$, $z_{i2}$, \ldots, $z_{ik_i}$ that encode an Erlang-$k_i$ distribution of rate $\lambda_i^E$, followed by states $s'_{i1}, s'_{i2}, \ldots, s'_{iN_i}$ with  transitions matching those from the high-level CTMC but of rates $\lambda'_{i1}$, $\lambda'_{i2}$, \ldots, $\lambda'_{iN}$. 

However, delays are not modelled perfectly by Erlang distributions: for any \emph{error} $\epsilon\!\in\!(0,1)$, there is a (small) probability $p$ that the refined CTMC leaves state $z_{ik_i}$ within $\Delta_i(1\!-\!\epsilon)$ time units of entering $z_{i1}$. Given specific values for $\epsilon$ and $p$, the theorem below supports the calculation of the parameters $k_i$, $\lambda_i^E$ and $\lambda'_{i1}$ to $\lambda'_{iN_i}$ for our joint delay modelling.


\begin{theorem}
\label{th:erlang}
Given an error bound $\epsilon\in (0,1)$, if the joint delay modelling parameters $k_i$, $\lambda_i^E$ and $\lambda'_{i1}$ to $\lambda'_{iN_i}$ satisfy
\begin{equation}
\label{eq:th1}
\begin{array}{c}
   \textrm{(a) } 1- \sum_{l=0}^{k_i-1}\frac{\left(k_i(1-\epsilon)\right)^l e^{-k_i(1-\epsilon)}}{l!}  = p \\[2mm]
   \textrm{(b) } \lambda_i^E = \frac{k_i}{\Delta_i}\\[2mm]
   \textrm{(c) } \forall j\in\{1,2,\ldots,N_i\}\,.\,\lambda'_{ij} = \frac{\lambda_{ij}}{1-\lambda_{ij}\delta_{ij}}
\end{array}
\end{equation}
for some value $p\in (0,1)$ then the following properties hold for the refined CTMC:
\begin{itemize}
\item[(i)] The probability that the CTMC leaves state $z_{ik_i}$ within $\Delta_i(1-\epsilon)$ time units from entering state $z_{i1}$ is $p$;
\item[(ii)] The expected time for the refined CTMC to leave $s'_{iN_i}$ after entering state $z_{i1}$ is $\sum_{j=1}^{N_i} \lambda_{ij}^{-1}$. This is also the expected time for the high-level CTMC to leave state $s_{iN_i}$ after entering state $s_{i1}$, so the joint delay modelling preserves the first moment of the distribution associated with the refined CTMC states.
\end{itemize}
\end{theorem}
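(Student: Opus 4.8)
The plan is to verify the two claimed properties separately, since each is essentially a self-contained computation about the refined CTMC structure shown in Fig.~\ref{fig:delayExtraction}.

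For property~(i), I would start from the observation that the states $z_{i1}, z_{i2}, \ldots, z_{ik_i}$ encode, by construction, an Erlang-$k_i$ distribution with rate $\lambda_i^E$: the time to pass through all $k_i$ phases is the sum of $k_i$ i.i.d.\ exponential variables of rate $\lambda_i^E$. Using the Erlang CDF recalled in Section~\ref{sect:Erlang}, the probability of leaving $z_{ik_i}$ within time $x$ of entering $z_{i1}$ is $F(k_i,\lambda_i^E,x) = 1 - \sum_{l=0}^{k_i-1}\frac{(\lambda_i^E x)^l}{l!}e^{-\lambda_i^E x}$. Substituting $x = \Delta_i(1-\epsilon)$ and condition~(b), namely $\lambda_i^E = k_i/\Delta_i$, gives $\lambda_i^E x = k_i(1-\epsilon)$, so this probability equals $1 - \sum_{l=0}^{k_i-1}\frac{(k_i(1-\epsilon))^l}{l!}e^{-k_i(1-\epsilon)}$, which by condition~(a) is exactly $p$. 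That settles~(i).

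For property~(ii), the key is to compute the expected time for the refined CTMC to pass through the whole replacement gadget $z_{i1}\to\cdots\to z_{ik_i}\to s'_{i1}\to\cdots\to s'_{iN_i}$ and compare it with $\sum_{j=1}^{N_i}\lambda_{ij}^{-1}$. The Erlang portion contributes mean $k_i/\lambda_i^E = \Delta_i$ by~(b), which equals $\sum_{j=1}^{N_i}\delta_{ij}$ by the definition~(\ref{eq:joint-delay}) of the joint delay. Each state $s'_{ij}$ is an exponential phase of rate $\lambda'_{ij}$, so it contributes mean $1/\lambda'_{ij}$; by condition~(c), $\lambda'_{ij} = \lambda_{ij}/(1-\lambda_{ij}\delta_{ij})$, hence $1/\lambda'_{ij} = (1-\lambda_{ij}\delta_{ij})/\lambda_{ij} = \lambda_{ij}^{-1} - \delta_{ij}$. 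Summing over the whole gadget by linearity of expectation, the total mean is $\Delta_i + \sum_{j=1}^{N_i}(\lambda_{ij}^{-1}-\delta_{ij}) = \sum_{j=1}^{N_i}\delta_{ij} + \sum_{j=1}^{N_i}\lambda_{ij}^{-1} - \sum_{j=1}^{N_i}\delta_{ij} = \sum_{j=1}^{N_i}\lambda_{ij}^{-1}$, which is what property~(ii) asserts. The final sentence of~(ii) follows because in the high-level CTMC each state $s_{ij}$ is exponential with rate $\lambda_{ij}$, so the mean time from entering $s_{i1}$ to leaving $s_{iN_i}$ is also $\sum_{j=1}^{N_i}\lambda_{ij}^{-1}$, giving the stated first-moment preservation.

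I expect property~(i) to be essentially immediate once the Erlang CDF is invoked, so the main care is needed in property~(ii): one must be precise that the replacement sequence is traversed deterministically (each $s'_{ij}$ has its single onward transition, as in Fig.~\ref{fig:delayExtraction}, so no branching probabilities enter the expected-time calculation) and that the $M_i\geq 1$ outgoing transitions from $s_{iN_i}$ do not affect the mean sojourn inside the gadget. A secondary subtlety worth flagging is well-definedness: condition~(c) requires $1-\lambda_{ij}\delta_{ij}>0$, i.e.\ $\delta_{ij}<\lambda_{ij}^{-1}$, which holds because $\delta_{ij}$ is the minimum observed execution time and $\lambda_{ij}^{-1}$ is the observed mean~(\ref{eq:rate}) (a minimum of positive samples is strictly below their mean unless all samples coincide); I would note this so that $\lambda'_{ij}$ is a legitimate positive rate. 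No genuine obstacle is anticipated — the result is a direct consequence of the Erlang CDF formula and linearity of expectation.
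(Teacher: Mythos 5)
Your proposal is correct and follows essentially the same route as the paper's proof: part~(i) by rewriting condition~(a) as the Erlang CDF evaluated at $\Delta_i(1-\epsilon)$ using $\lambda_i^E\Delta_i=k_i$, and part~(ii) by summing the Erlang mean $k_i/\lambda_i^E=\Delta_i$ with the exponential means $1/\lambda'_{ij}=\lambda_{ij}^{-1}-\delta_{ij}$ so the delays cancel. Your added remarks on the well-definedness of condition~(c) (i.e.\ $\delta_{ij}<\lambda_{ij}^{-1}$) and on the deterministic traversal of the gadget are sensible refinements the paper leaves implicit, but they do not change the argument.
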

\begin{proof}
To prove (i), \revised{recall from Section~\ref{sect:Erlang}} that the cumulative distribution function of an Erlang-$k$ distribution with rate $\lambda$ is 
$ F(k,\lambda,x) = 1- \sum_{l=0}^{k-1}\frac{(\lambda x)^l e^{-\lambda x}}{l!}$, so (\ref{eq:th1}a) can be rewritten as $F(k_i,\lambda_i^E,\Delta_i(1-\epsilon))=p$ since $k_i=\lambda_i^E\Delta_i$ according to~(\ref{eq:th1}b). Therefore, the probability that the Erlang delay model from Fig.~\ref{fig:delayExtraction} will transition from entering state $z_{i1}$ to exiting state $z_{ik_i}$ within $\Delta_i(1-\epsilon)$ time units is $p$. 
For part (ii), the expected time for the CTMC to leave state $s'_{iN_i}$ after entering $z_{i1}$ is the sum of the mean of the Erlang-$k_i$ distribution with rate $\lambda_i^E$ and the mean of the exponential distributions with rates $\lambda'_{i1}$ to $\lambda'_{iN_i}$, i.e.
\[
\begin{array}{l}
    k_i \frac{1}{\lambda_i^E} + \sum_{j=1}^{N_i}\frac{1}{\lambda'_{ij}} = \Delta_i + \sum_{j=1}^{N_i} \frac{1-\lambda_{ij}\delta_{ij}}{\lambda_{ij}}=\\
    \qquad =\sum_{j=1}^{N_i} \delta_{ij}+ \sum_{j=1}^{N_i} \left(\frac{1}{\lambda_{ij}}-\delta_{ij}\right) = \sum_{j=1}^{N_i} \frac{1}{\lambda_{ij}}.
\end{array}    
\]
\end{proof}

Theorem~\ref{th:erlang} supports the calculation of the delay model parameters for a ``together'' state set $S_i\!=\!\{s_{i1}, s_{i2}, \ldots, s_{iN_i}\}$ as follows:\\[-6mm]
\begin{enumerate}
\item Approximate the delays $\delta_{i1}, \delta_{i2}, \ldots, \delta_{iN_i}$ for the components associated with each state from $S_i$ using~(\ref{eq:delay}).
\item Compute the joint delay $\Delta_i=\sum_{j=1}^{N_i} \delta_{ij}$.
\item Choose a small error $\epsilon\!\in\!(0,1)$ and a small probability $p$ (e.g.\ $\epsilon\! =\! 0.1$ and $p\!=\!0.05$), and solve~(\ref{eq:th1}a) for $k_i$. This can be done using a numeric solver and rounding the result up to an integer value or, since $k_i$ only depends on $\epsilon$ and $p$, and is independent of $\Delta_i$, by using precomputed $k_i$ values as in Table~\ref{table:precomputed};
\item Calculate $\lambda_i^E$ and $\lambda'_{i1}$ to $\lambda'_{iN_i}$ using (\ref{eq:th1}b) and (\ref{eq:th1}c), respectively.
\end{enumerate}

\begin{table}
\centering
\caption{Precomputed $k_i$ values used in the experiments from Section~\ref{section:evaluation} \label{table:precomputed}}
\includegraphics[width=0.58\hsize]{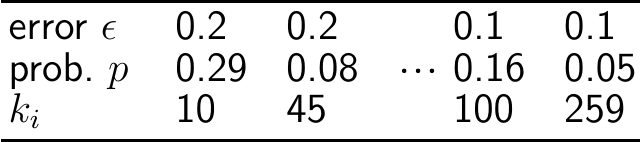}
\end{table}

%

\begin{example}
Consider the ``together'' set $S_1=\{s_2,s_4\}$ for property \textbf{P1} from our running example (cf.~Table~\ref{tab:classification}). States $s_2$ and $s_4$ correspond to the invocations of the arrivals and search web services from the travel web application, which according to our experimental data  have delays $\delta_2=45$ms and $\delta_4=209$ms, respectively. Therefore, the joint delay is $\Delta_1=\delta_2+\delta_4=254$ms. Suppose that we want to model this joint delay with an error bound $\epsilon\!=\!0.1$ and a probability $p\!=\!0.05$. This gives $k_1\!=\!259$ (cf.\ Table~\ref{table:precomputed}) and the other joint delay modelling parameters are calculated as: $\lambda_1^E = \frac{k_1}{\Delta_1} =\frac{259}{0.254}=1019$s$^{-1}$, $\lambda'_2=\frac{\lambda_2}{1-\lambda_2\delta_2}=$ $\frac{19.88}{1-19.88\cdot0.045}=188.61$s$^{-1}$ and $\lambda'_4=\frac{\lambda_4}{1-\lambda_4\delta_4}=$ $\frac{1.85}{1-1.85\cdot0.209}=3.01$s$^{-1}$ (where the rates $\lambda_2$ and $\lambda_4$ are taken from Table~\ref{tab:webservices}).
\end{example}

The next theorem gives the format of the refined CTMC after joint delay modelling is applied to all ``together'' state sets $S_1$ to $S_m$. 

\begin{theorem}
\label{th:after-delay}
Applying the \acronym\ joint delay modelling procedure to the ``together'' state set $S_i$ of a high-level model $\mathsf{CTMC}(S,\bm{\pi},\mathbf{R})$ yields a model $\mathsf{CTMC}(S',\bm{\pi}',\mathbf{R}')$ with:
\[
\begin{array}{l}
 S' \!\!=\! (S\setminus S_i)\cup  \{z_{i1},z_{i2}, \ldots, z_{ik_i}, s'_{i1},s'_{i2},\ldots,s'_{iN_i} \};\\ \\
 \bm{\pi}'(s) \!=\! 
\left\{\!\!
\begin{array}{ll}
\bm{\pi}(s), & \textrm{if } s\in S\setminus S_i\\
\bm{\pi}(s_{i1}), & \textrm{if } s=z_{i1}  \qquad ; \\
0, & \textrm{otherwise}
\end{array}
\right.\\ \\
\end{array}
\]

\vspace*{-6mm}
\[
\begin{array}{l}
\mathbf{R}'(s,u)\!=\! 
\left\{\!\!
\begin{array}{ll}
\mathbf{R}(s,u), &\!\!\! \textrm{if } s,u\!\in\! S\setminus S_i\\
\mathbf{R}(s,s_{i1}), &\!\!\! \textrm{if } s\!\in\! S\setminus S_i \wedge u\!=\!z_{i1}\\
\lambda_i^E, &\!\!\!  \textrm{if } (s,u)\!\in\!\{(z_{i1},z_{i2}),\ldots,\\ & \;\;(z_{i,k_i-1},z_{ik_i}), (z_{ik_i},s'_{i1})\}\\ 
\lambda'_{ij}, &\!\!\! \textrm{if } s=s'_{ij} \wedge u=s'_{i,j+1},\\ & \;\;\qquad\qquad 1\leq j\leq N_i-1\\
\frac{\mathbf{R}(s_{iN_i},u)}{\lambda_{iN_i}}\lambda'_{iN_i}, &\!\!\! \textrm{if } s\!=\!s'_{iN_i}\wedge u\!\in\! S\setminus S_i\\
\frac{\mathbf{R}(s_{iN_i},s_{i1})}{\lambda_{iN_i}}\lambda'_{iN_i}, &\!\!\! \textrm{if } s\!=\!s'_{iN_i}\wedge u\!=\!z_{i1}\\
0, &\!\!\! \textrm{otherwise}
\end{array}
\right.
\end{array}
\]
if $s\neq u$; and $\mathbf{R}'(s,s)=-\sum_{u\in S'\setminus\{s\}}\mathbf{R}'(s,u)$, where the terms $\frac{\mathbf{R}(s_{iN_i},u)}{\lambda_{iN_i}}$ and $\frac{\mathbf{R}(s_{iN_i},s_{i1})}{\lambda_{iN_i}}$ correspond to the probabilities $p_1,p_2,\ldots$ from Fig.~\ref{fig:delayExtraction} and are obtained using~(\ref{eq:pij}).
\end{theorem}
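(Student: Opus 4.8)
The plan is to verify that the tuple $(S',\bm{\pi}',\mathbf{R}')$ displayed in the statement is exactly the output of the joint delay modelling procedure of Section~\ref{sec:delay} (depicted in Fig.~\ref{fig:delayExtraction}) applied to $S_i=\{s_{i1},\ldots,s_{iN_i}\}$, and that this output satisfies Definition~\ref{def:CTMC}. Recall what the procedure does: it leaves everything outside $S_i$ untouched; it inserts an Erlang-$k_i$ chain $z_{i1}\!\to\!z_{i2}\!\to\!\cdots\!\to\!z_{ik_i}$ with each edge of rate $\lambda_i^E$, then an edge of rate $\lambda_i^E$ into a holding-time chain $s'_{i1}\!\to\!s'_{i2}\!\to\!\cdots\!\to\!s'_{iN_i}$ whose edge out of $s'_{ij}$ has rate $\lambda'_{ij}$; it redirects every transition that entered $s_{i1}$ so that it now enters $z_{i1}$; and it redirects every transition that left $s_{iN_i}$ so that it now leaves $s'_{iN_i}$ with its rate multiplied by $\lambda'_{iN_i}/\lambda_{iN_i}$. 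The state-set identity $S'=(S\setminus S_i)\cup\{z_{i1},\ldots,z_{ik_i},s'_{i1},\ldots,s'_{iN_i}\}$ is then immediate, and $k_i,\lambda_i^E,\lambda'_{i1},\ldots,\lambda'_{iN_i}$ are the parameters computed in Theorem~\ref{th:erlang}.

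Before the bookkeeping I would record two structural facts about a ``together'' sequence $\langle s_{i1},\ldots,s_{iN_i}\rangle$ produced by \textsc{TogetherSeqs} (Algorithm~\ref{alg:together}, Definition~\ref{def:together-sets}), both forced by the guards inside \textsc{Pred} and \textsc{Succ}. First, for $2\le j\le N_i$ the state $s_{ij}$ has exactly one incoming transition in the high-level CTMC, from $s_{i,j-1}$, and for $1\le j\le N_i-1$ the state $s_{ij}$ has exactly one outgoing transition, to $s_{i,j+1}$; consequently the only state of $S_i$ reachable from outside $S_i$ is $s_{i1}$, the only state of $S_i$ that has transitions leaving $S_i$ or returning to $s_{i1}$ is $s_{iN_i}$, and no $s_{iN_i}\!\to\! u$ transition lands in $S_i\setminus\{s_{i1}\}$ (this is the CTMC-graph counterpart of Theorem~\ref{th:Sm}). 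Second, $\bm{\pi}(s_{i2})=\cdots=\bm{\pi}(s_{iN_i})=0$: a state with positive initial probability is never appended by \textsc{Succ} and cannot be extended to its left by \textsc{Pred}, so it can only occur as the head of its sequence.

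Given these facts, the formulas for $\bm{\pi}'$ and $\mathbf{R}'$ follow by a routine case analysis mirroring the procedure. For $\bm{\pi}'$: the probability of starting inside $S_i$ equals $\bm{\pi}(s_{i1})$, which is moved to the new entry state $z_{i1}$; states outside $S_i$ keep their initial probabilities; the remaining new states get $0$; and $\sum_{s\in S'}\bm{\pi}'(s)=\sum_{s\in S}\bm{\pi}(s)=1$. For $\mathbf{R}'$ with $s\neq u$: a transition with both endpoints in $S\setminus S_i$ keeps its rate; a transition from $s\in S\setminus S_i$ into $s_{i1}$ becomes a transition into $z_{i1}$ of rate $\mathbf{R}(s,s_{i1})$; the $k_i$ Erlang edges and the edge $z_{ik_i}\to s'_{i1}$ all carry rate $\lambda_i^E$; the sole outgoing edge of $s'_{ij}$ for $1\le j\le N_i-1$ is $s'_{ij}\to s'_{i,j+1}$ of rate $\lambda'_{ij}$ (single, by the out-degree fact); every high-level transition $s_{iN_i}\to u$ with $u\in S\setminus S_i$ reappears out of $s'_{iN_i}$ with rate $\frac{\mathbf{R}(s_{iN_i},u)}{\lambda_{iN_i}}\lambda'_{iN_i}$, and a transition $s_{iN_i}\to s_{i1}$ reappears as $s'_{iN_i}\to z_{i1}$ with rate $\frac{\mathbf{R}(s_{iN_i},s_{i1})}{\lambda_{iN_i}}\lambda'_{iN_i}$; no other off-diagonal entry is nonzero, since no external transition can enter a $z$-state or an $s'_{ij}$ with $j\ge 2$. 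The diagonal is fixed by $\mathbf{R}'(s,s)=-\sum_{u\neq s}\mathbf{R}'(s,u)$ as in Definition~\ref{def:CTMC}.

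Two checks close the argument. By the \acronym\ modelling assumptions of Section~\ref{subsect:overview}, $\lambda_{iN_i}$ is the total outgoing rate $\sum_{u}\mathbf{R}(s_{iN_i},u)$ of $s_{iN_i}$, so $\frac{\mathbf{R}(s_{iN_i},u)}{\lambda_{iN_i}}$ is precisely the branching probability (\ref{eq:pij}) from $s_{iN_i}$ to $u$; hence the rescaling preserves the post-block distribution over successors while making the total exit rate of $s'_{iN_i}$ equal to $\big(\sum_{u}\frac{\mathbf{R}(s_{iN_i},u)}{\lambda_{iN_i}}\big)\lambda'_{iN_i}=\lambda'_{iN_i}$. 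Every off-diagonal entry of $\mathbf{R}'$ is non-negative: $\lambda_i^E=k_i/\Delta_i>0$, $\lambda'_{ij}=\lambda_{ij}/(1-\lambda_{ij}\delta_{ij})>0$ because the minimum observed execution time $\delta_{ij}$ of component $ij$ is strictly below its mean $1/\lambda_{ij}$ (assuming the component's observations are not all identical), and the rescaled exit rates of $s'_{iN_i}$ are products of non-negatives; together with the diagonal rule this makes $(S',\bm{\pi}',\mathbf{R}')$ a CTMC of the stated form. I expect the main obstacle to be completeness of the bookkeeping rather than any deep argument: one must show that the two redirection rules account for \emph{every} high-level transition incident to $S_i$ and create \emph{no} spurious ones, which rests entirely on the in-/out-degree properties of a ``together'' sequence, and that the one modelling-assumption-dependent step is the identification of the scaling factor $\mathbf{R}(s_{iN_i},u)/\lambda_{iN_i}$ with a branching probability.
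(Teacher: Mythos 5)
Your proposal is correct and follows the same route as the paper: the paper's entire proof of this theorem is the single line ``The proof is by construction, cf.\ Fig.~\ref{fig:delayExtraction}'', and your argument is precisely that construction with the bookkeeping made explicit. The structural facts you extract from the guards of \textsc{Pred}/\textsc{Succ} (unique in-/out-degrees inside a ``together'' sequence and $\bm{\pi}(s_{ij})=0$ for $j\geq 2$), together with the identification of $\mathbf{R}(s_{iN_i},u)/\lambda_{iN_i}$ as the branching probability~(\ref{eq:pij}) via the modelling assumption $\mathbf{R}(s_i,s_j)=p_{ij}\lambda_i$, are exactly the details the paper leaves implicit, and they all check out.
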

\begin{proof}
The proof is by construction, cf.\ Fig.~\ref{fig:delayExtraction}.
\end{proof}


\subsubsection{Holding-time modelling}

As indicated in Table~\ref{table:refinementRules}, we model the holding times of system components associated with high-level CTMC states from $S_\textsf{O}\cup S_1\cup S_2\ldots\cup S_m$ individually. For each such component, we synthesise a phase-type distribution $\mathsf{PHD}(\bm{\pi}_0,\mathbf{D}_0)$ that models the holding times~(\ref{eq:holding-times}), and we replace the relevant state $s'$ of the model $\mathsf{CTMC}(S',\bm{\pi}',\mathbf{R}')$ 
obtained after the \acronym\ joint delay modelling 
with this PHD. For operations corresponding to states $s_\textsf{O}\in S_\textsf{O}$ the replaced state is $s'=s_\textsf{O}$, while for operations corresponding to a state $s_{ij}\in S_i$, $1\leq i\leq m, 1\leq j\leq N_i$, the replaced state is the state $s'=s'_{ij}$ obtained after the joint delay modelling of $S_i$ (cf.~Fig.~\ref{fig:delayExtraction}b). 

Our holding-time modelling exploits recent advances in the fitting of phase-type distributions to empirical data. 
Given the usefulness of PHDs in performance engineering, this area has received considerable attention~\cite{buchholz2014phase,Okamura2016}, with effective PHD fitting algorithms developed based on techniques such as moment matching \cite{STM-200056210,15326340701300712}, expectation maximisation \cite{asmussen1996fitting,1673383,Wang2008} and Bayes estimation \cite{doi:10.1080/03610918.2013.848895,doi:10.1080/03610926.2010.483306}. Recently, these algorithms have been used within PHD fitting approaches that: (a)~partition the dataset into segments \cite{Wang2008} or clusters \cite{reinecke2012cluster} of ``similar'' data points; (b)~employ an established algorithm to fit a PHD with a simple structure to each data segment or cluster;  and (c)~use these simple PHDs as the branches of a PHD that fits the whole dataset. These approaches achieve better trade-offs between the size, accuracy and complexity of the final PHD than the direct algorithms applied to the entire dataset. 

\begin{algorithm}[t]
  \caption{Holding-time modelling with parameters:\\ \hspace*{3mm}$\bullet$ $\mathit{MinC}$ --- minimum number of PHD clusters\\ \hspace*{3mm}$\bullet$ $\mathit{MaxC}$ --- maximum number of PHD clusters\\ \hspace*{3mm}$\bullet$ $\mathit{MaxP}$ --- maximum number of cluster phases\\ \hspace*{3mm}$\bullet$ $\mathit{FittingAlg}$ --- basic PHD fitting algorithm\\ \hspace*{3mm}$\bullet$ $\mathit{MaxSteps}$ --- maximum steps without improvement \label{alg:htmodelling}}

  \begin{algorithmic}[1]
    \Function{$\!\,$HoldingTimeModeling}{$\alpha,\tau'_{i1},\tau'_{i2},\ldots,\tau'_{in_i}\!$}
          \State $\mathit{sample}\gets (\tau'_{i1},\tau'_{i2},\ldots,\tau'_{in_i})$
          \State $\mathit{minErr}=\infty$
          \State $\mathit{improvement} \gets 0$
          \State $\mathit{steps} \gets 0$
          \State $c \gets \mathit{MinC}$
          \While {$c\leq \mathit{MaxC} \wedge \mathit{steps} \leq \mathit{MaxSteps}$}
             \State $\mathit{phd} \gets \textsc{CBFitting}(\mathit{sample}, c, \mathit{FittingAlg}, \mathit{MaxP})$
             \State $err \gets \Delta \mathsf{CDF}(\mathit{sample}, \mathit{phd} )$
             \If {$\mathit{err}<\mathit{minErr}$}
		 \State $\mathit{best\_phd} \gets \mathit{phd}$
                  \State $\mathit{improvement} \gets \mathit{improvement}\! +\! (\mathit{minErr}\!-\!\mathit{err})$ 
    	  	 \State $\mathit{minErr} \gets \mathit{err}$
    	     \EndIf
             \If {$\mathit{improvement}\geq\alpha$}	   
		 \State $\mathit{improvement} \gets 0$
		 \State $\mathit{steps} \gets 0$
             \Else
                 \State $\mathit{steps} \gets \mathit{steps}+1$
    	     \EndIf
             \State $c \gets c+1$
          \EndWhile
          \State \Return $\mathit{best\_phd}$
          \EndFunction
  \end{algorithmic}
\end{algorithm}

The \acronym\ \textsc{HoldingTimeModeling} function from Algorithm~\ref{alg:htmodelling} achieves similar benefits by employing Reinecke et al.'s \emph{cluster-based PHD fitting} approach \cite{reinecke2012cluster,reinecke2012hyperstar,reinecke2013phase} to fit a PHD to the holding time sample $\tau'_{i1},\tau'_{i2},\ldots,\tau'_{in_i}$ from~(\ref{eq:holding-times}). The PHD fitting is carried out by the while loop in lines 7--22, which iteratively assesses the suitability of PHDs obtained when partitioning the $\mathit{sample}$ assembled in line~2 into $c=\mathit{MinC},\mathit{MinC}+1,\ldots,\mathit{MaxC}$ clusters. Line~8 obtains a PHD with $c$ branches (corresponding to partitioning $\mathit{sample}$ into $c$ clusters) and up to $\mathit{MaxP}$ phases by using the function \mbox{\textsc{CBFitting}}, which implements the cluster-based PHD fitting from \cite{reinecke2012cluster}. The $\mathit{FittingAlg}$ argument of \mbox{\textsc{CBFitting}} specifies the basic PHD fitting algorithm applied to each cluster as explained above, and can be any of the standard moment matching, expectation maximisation or Bayes estimation PHD fitting algorithms. The quality of the $c$-branch PHD is assessed in line~9 by using the CDF-difference metric~\cite{reinecke2012cluster} to compute the difference $\mathit{err}$ between $\mathit{sample}$ and the PHD. The if statement in lines~10--14 identifies the PHD with the lowest $\mathit{err}$ value so far, retaining it in line~11. Reductions in $\mathit{err}$ (i.e., ``improvements'') are cumulated in $\mathit{improvement}$ (line~12), and the while loop terminates early if the iteration counter $\mathit{steps}$ exceeds $\mathit{MaxSteps}$ before $\mathit{improvement}$ reaches the threshold $\alpha\geq 0$ provided as an parameter to \textsc{HoldingTimeModeling} and the $steps$ counter is reset in line~17. Finally, the best PHD achieved within the while loop is returned in line~23.

\begin{theorem}
Using a phase-type distribution $\mathsf{PHD}(\bm{\pi}_0,\mathbf{D}_0)$ generated by Algorithm~\ref{alg:htmodelling} to apply the \acronym\ holding-time modelling procedure to the state $s'$ of a model $\mathsf{CTMC}(S',\bm{\pi}',\mathbf{R}')$ yields a model $\mathsf{CTMC}(S'',\bm{\pi}'',\mathbf{R}'')$ with:
\[
\begin{array}{l}
 S''\!=\! (S'\setminus \{s'\})\cup  \{w_1,w_2,\ldots,w_N\}\; ;\\ \\
 \bm{\pi}''(s) \!=\! 
\left\{\!\!
\begin{array}{ll}
\bm{\pi}'(s), & \textrm{if } s\in S\!\setminus\! \{s'\}\\
\bm{\pi}'(s')\bm{\pi}_0(w_i), & \textrm{if } s=w_i, 1\!\leq\! i\!\leq\! N  \; ; \\
0, & \textrm{otherwise}
\end{array}
\right.\\ \\
\mathbf{R}''(s,u)\!=\! 
\left\{\!\!
\begin{array}{ll}
\mathbf{R}'(s,u), &\!\!\! \textrm{if } s,u\!\in\! S'\!\setminus\! \{s'\}\\
\mathbf{D}_0(s,u), &\!\!\! \textrm{if } s,u\!\in\! \{w_1,w_2,\ldots,w_N\}\\
\mathbf{R}'(s,s')\bm{\pi}_0(w_i), &\!\!\! \textrm{if } s\!\in\! S'\!\setminus\! \{s'\} \wedge u\!=\!w_i,\\ & \;\;\qquad\qquad\qquad 1\leq i\leq N\\
\frac{\mathbf{R}'(s',u)}{\lambda'}\mathbf{d}_1(w_i), &\!\!\! \textrm{if } s\!=\!w_i \wedge u\!\in\! S'\!\setminus\! \{s'\},\\ & \;\;\qquad\qquad\qquad 1\leq i\leq N\\
\end{array}
\right.
\end{array}
\]
if $s\neq u$; and $\mathbf{R}''(s,s)=-\sum_{u\in S''\setminus\{s\}}\mathbf{R}''(s,u)$, where:
\squishlist
\item $w_1,w_2,\ldots,w_N$ are the transient states of $\mathsf{PHD}(\bm{\pi}_0,\mathbf{D}_0)$;
\item $\lambda'$ is the total outgoing transition rate for $s'$;
\item $\mathbf{d_1}=-\mathbf{D}_0\mathbf{1}$.
\squishend 
\end{theorem}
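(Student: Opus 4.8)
The plan is to prove the theorem by construction, exactly as was done for Theorem~\ref{th:after-delay}, but to back the construction with three verification steps: that $\bm{\pi}''$ is a genuine probability distribution on $S''$, that $\mathbf{R}''$ is a well-formed transition rate matrix (non-negative off-diagonal entries, rows summing to zero), and that replacing the single state $s'$ by the transient states $w_1,\dots,w_N$ of $\mathsf{PHD}(\bm{\pi}_0,\mathbf{D}_0)$ is a faithful phase-type substitution --- i.e.\ it leaves the dynamics on $S'\setminus\{s'\}$ untouched, routes every unit of flow that used to enter $s'$ into the phases according to the entry vector $\bm{\pi}_0$, makes the sojourn in $\{w_1,\dots,w_N\}$ distributed exactly as the holding-time PHD fitted by Algorithm~\ref{alg:htmodelling}, and disaggregates the flow leaving the phases back onto the original successors $u$ of $s'$ in proportion to the branching probabilities $\mathbf{R}'(s',u)/\lambda'$, which are the $p_{ij}$ of~(\ref{eq:pij}).

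First I would check $\bm{\pi}''$: its entries are non-negative and $\sum_{s\in S''}\bm{\pi}''(s) = \sum_{s\in S'\setminus\{s'\}}\bm{\pi}'(s) + \bm{\pi}'(s')\sum_{i=1}^{N}\bm{\pi}_0(w_i) = \sum_{s\in S'\setminus\{s'\}}\bm{\pi}'(s) + \bm{\pi}'(s') = 1$, using that $\bm{\pi}_0$ sums to one (the PHD cannot start in its absorbing state, cf.\ Section~\ref{sect:PHD}) and that $\bm{\pi}'$ is a distribution on $S'$. Next I would verify that the four cases defining the off-diagonal entries of $\mathbf{R}''$ are all non-negative: $\mathbf{R}'(s,u)\ge 0$; the off-diagonal entries of $\mathbf{D}_0$ are non-negative by Definition~\ref{def:CTMC} applied to the PHD's own CTMC; $\mathbf{R}'(s,s')\bm{\pi}_0(w_i)\ge 0$; and $\frac{\mathbf{R}'(s',u)}{\lambda'}\mathbf{d}_1(w_i)\ge 0$ because $\mathbf{d}_1 = -\mathbf{D}_0\mathbf{1}\ge\mathbf{0}$ (the rate out of each phase into the absorbing state is non-negative). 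The diagonal is then fixed by $\mathbf{R}''(s,s)=-\sum_{u\ne s}\mathbf{R}''(s,u)$, so rows sum to zero.

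Then I would establish semantic faithfulness by balancing the aggregate rates across the ``cut'' surrounding $s'$. For any $s\in S'\setminus\{s'\}$, the total rate from $s$ into the phase block is $\sum_{i=1}^{N}\mathbf{R}'(s,s')\bm{\pi}_0(w_i) = \mathbf{R}'(s,s')$, equal to the original rate from $s$ to $s'$, and the conditional entry distribution over phases is exactly $\bm{\pi}_0$; hence every visit to the ``$s'$ region'' of the refined chain begins with a draw from $\bm{\pi}_0$. Inside the block, transitions follow $\mathbf{D}_0$, so the time until the block is left is $\mathsf{PHD}(\bm{\pi}_0,\mathbf{D}_0)$-distributed, i.e.\ it matches the fitted holding time. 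Finally, from a phase $w_i$ the total exit rate onto $S'\setminus\{s'\}$ is $\sum_{u}\frac{\mathbf{R}'(s',u)}{\lambda'}\mathbf{d}_1(w_i) = \frac{\mathbf{d}_1(w_i)}{\lambda'}\sum_{u}\mathbf{R}'(s',u) = \mathbf{d}_1(w_i)$ since $\sum_{u}\mathbf{R}'(s',u)=\lambda'$; so the block's absorption behaviour is that of the PHD, and upon absorption the successor is picked with probabilities $\mathbf{R}'(s',u)/\lambda'$, matching the embedded jump chain at $s'$. Any ``return'' transition, e.g.\ the $s'_{iN_i}\!\to\! z_{i1}$ edge introduced by joint delay modelling, is just one of these successors $u\in S'\setminus\{s'\}$ and needs no separate handling.

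The main obstacle I expect is not a single calculation but making the ``phase-type substitution is faithful'' argument precise when $s'$ lies on a cycle and so may be entered and left many times along a single path: one must argue that the aggregate-by-$\bm{\pi}_0$ / disaggregate-by-$\mathbf{d}_1$ scheme composes correctly over repeated visits. This is exactly what the two invariants above --- \emph{the conditional entry distribution is $\bm{\pi}_0$ independently of the incoming state}, and \emph{the exit branching is $\mathbf{R}'(s',u)/\lambda'$ independently of the phase} --- are chosen to secure, after which the stated $\mathsf{CTMC}(S'',\bm{\pi}'',\mathbf{R}'')$ follows.
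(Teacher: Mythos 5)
Your proposal is correct and takes the same route as the paper, which disposes of this theorem with the single line ``The proof is by construction'' --- your write-up simply makes that construction explicit by verifying that $\bm{\pi}''$ is a distribution, that $\mathbf{R}''$ is a well-formed rate matrix, and that the entry/exit flow balance around the substituted phase block is preserved. All of your calculations (in particular $\sum_i\bm{\pi}_0(w_i)=1$ and $\sum_u\mathbf{R}'(s',u)=\lambda'$) check out, so this is a faithful, more detailed rendering of the paper's one-line argument rather than a different one.
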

\begin{proof}
The proof is by construction, cf.\ Algorithm~\ref{alg:htmodelling}.
\end{proof}

\begin{figure*}
\centering
\includegraphics[width=0.3\linewidth]{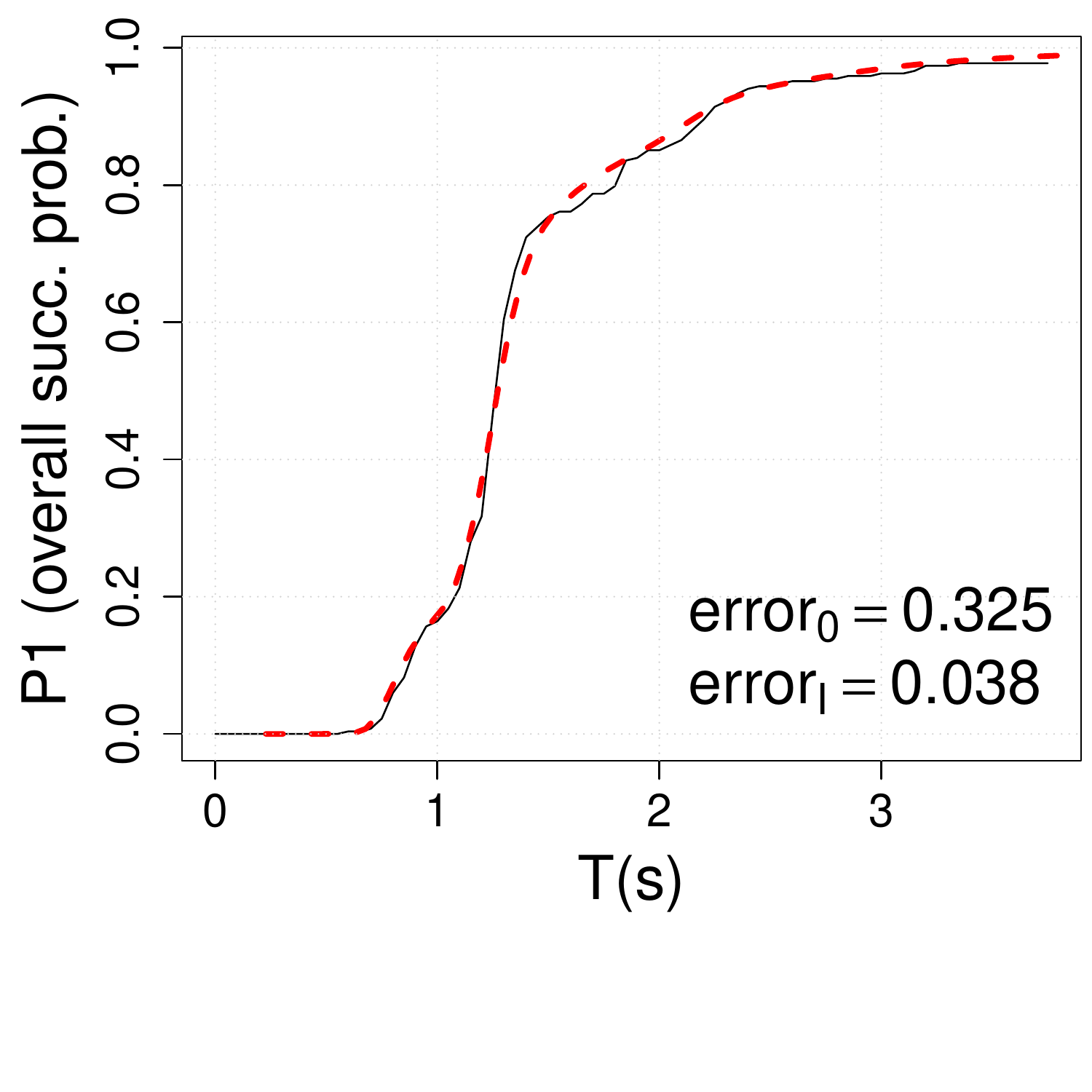}\hspace*{1mm}
\includegraphics[width=0.3\linewidth]{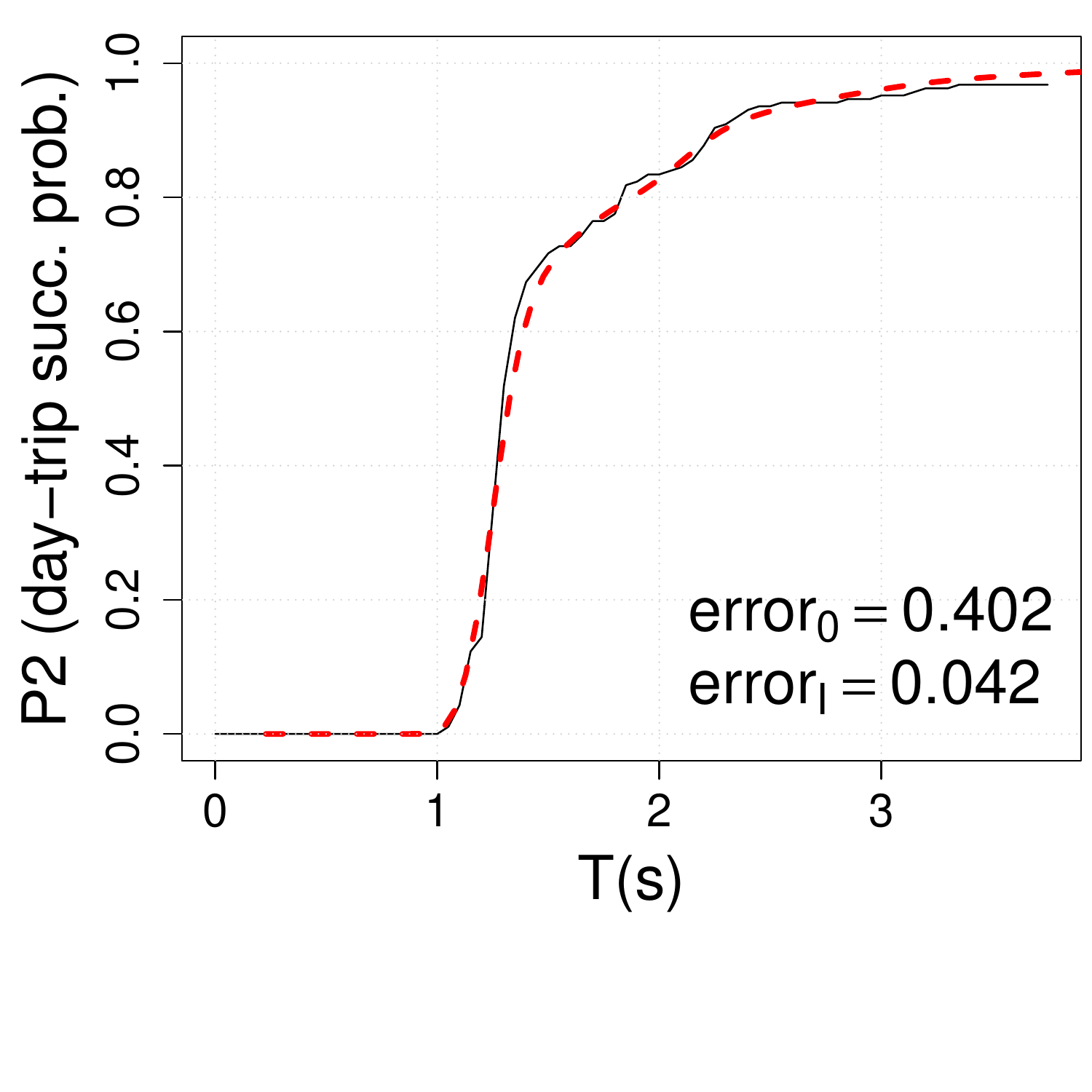}\hspace*{1mm}
\includegraphics[width=0.3\linewidth]{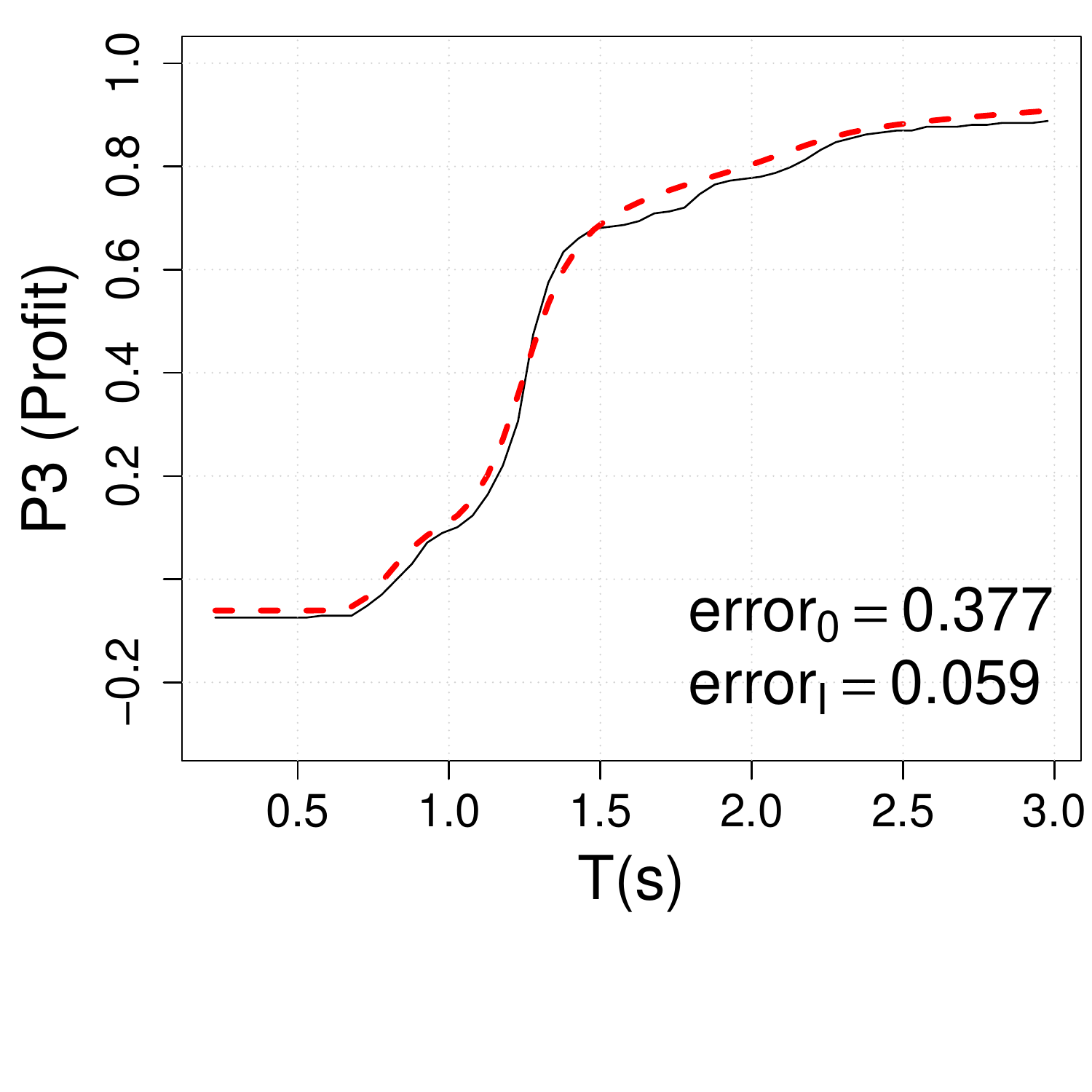}

\vspace*{-0.9cm}
\caption{Actual (continuous lines) property values versus property values predicted (dashed lines) using the \acronym-refined CTMC model; \textsf{error}$_0$ and \textsf{error}$_\textsf{I}$ represent the error~(\ref{eq:err}) for the high-level CTMC and the refined CTMC, respectively.}
\label{fig:VerRef}

\vspace*{-2mm}
\end{figure*}

\begin{example}
\label{ex:holding-state-refinement}
We used our \acronym\ refinement tool (Section~\ref{section:tool}) to perform the component classification and selective refinement steps of our approach on the high-level CTMC from the motivating example. Algorithm~\ref{alg:htmodelling} was executed for each component associated with a CTMC state from the $S_\textsf{O}$ or the $S_1$ to $S_m$ state sets in Table~\ref{tab:classification},  with $\alpha\!=\!0.1$ and with the configuration parameters $\mathit{MinC}\!=\!2$, $\mathit{MaxC}\!=\!30$, $\mathit{MaxP}\!=\!300$, $\mathit{MaxSteps}\!=\!3$ and $\mathit{FittingAlg}$ an expectation-maximisation PHD fitting algorithm that produces hyper-Erlang distributions.\footnote{A hyper-Erlang distribution \cite{Anichkin1983,buchholz2014phase,Wang2008} is a PHD in which the $c>1$ branches of the PHD from Algorithm~\ref{alg:htmodelling} are mutually independent Erlang distributions.} We obtained refined CTMCs comprising 730 states and 761 transitions for property \textbf{P1}, 367 states and 387 transitions for property \textbf{P2}, and 730 states and 761 transitions for property \textbf{P3}. Fig.~\ref{fig:VerRef} compares the actual values of properties \textbf{P1}--\textbf{P3} with the values predicted by the analysis of these refined CTMCs. 
Both the visual assessment and the error values \textsf{error}$_\mathsf{I}$ associated with these predictions (which are significantly lower than the error values \textsf{error}$_\mathsf{0}$ before refinement) show that \acronym\ supports the accurate analysis of the three properties. \revised{In fact, the predicted and actual values for all properties may seem surprisingly close. The explanation for this close match is twofold. First, the \acronym\ refinement uses PHD distributions, which -- if sufficiently large -- can approximate arbitrarily close any continuous distribution (cf.\ Section~\ref{sect:PHD}). Second, the apparently ''perfect'' match between the predicted and the actual QoS property values is slightly deceptive: for instance, a closer inspection of the results shows that there are still multiple points where the difference between the two is at least 5\%. In Section~\ref{subsect:rq1} we supplement this brief discussion of the results from Fig.~\ref{fig:VerRef} with an experimental evaluation which shows that the accurate \acronym\ results are not due to overfitting.} 
\end{example}


\section{\acronym\ Refinement Tool}
\label{section:tool}

We implemented \acronym\ as a Java tool that takes as input a high-level CTMC model. This model is specified in a  variant of the PRISM modelling language \cite{kwiatkowska2011prism} where state transition commands are expressed using components labels. For example, the PRISM command

\begin{small}
\[
   \mathsf{s\!=\!1} \,\rightarrow\, \mathsf{p_1\!*\!\lambda_1\!:\!(s'\!=\!2) + (1\!-\!p_1)\!*\!\lambda_1\!:\!(s'\!=\!3);}
\]
\end{small}

\vspace*{-3mm}
\noindent
that defines the outgoing transitions for state $s_1$ of our high-level CTMC model from Fig.~\ref{fig:CTMC} is replaced by

\begin{small}
\[
  \mathsf{s\!=\!\langle \textit{location}\rangle\rightarrow\, p_1\!:\!(s'\!=\!\langle \textit{arrival}\rangle) + (1\!-\!p_1)\!:\!(s'\!=\!\langle \textit{departures}\rangle);}
\]
\end{small}

\vspace*{-3mm}
\noindent
in the \acronym\ variant of the modelling language. This indicates that the CTMC transitions from the state associated with the \textit{location} component to either the state associated with the \textit{arrival} component (with probability $p_1$) or to the state associated with the \textit{departures} component (with probability $1-p_1$). An XML configuration file is then used to map each of these component labels to a file of comma-separated values containing the observed execution times for the relevant component. In addition, this configuration file allows the user to define the \acronym\ refinement parameters (i.e.\ $k_i$ from Theorem~\ref{th:erlang}, and $\alpha$, $\mathit{MinC}$, $\mathit{MaxC}$, $\mathit{MaxP}$ and $\mathit{MaxSteps}$ from Algorithm~\ref{alg:htmodelling}). The $\mathit{FittingAlg}$ parameter is fixed in the current version of the tool, so that \acronym\ uses the expectation-maximisation PHD fitting algorithm mentioned in Example~\ref{ex:holding-state-refinement}. 

\revised{When multiple QoS properties (for the same high-level CTMC) are provided to the \acronym\ tool, we avoid the overheads associated with the repeated execution of the modelling tasks from Table~\ref{table:refinementRules} for the same components by maintaining a cache of all completed tasks and their results. As such, each of these tasks is executed at most once per system component, and its cached result is used when needed instead of repeating the task. By comparison, refining the whole CTMC indiscriminately for even a single QoS property would require the execution of these modelling tasks for every system component.} 

Finally, to support the scenario where the component delays~(\ref{eq:delay}) are negligible compared to the holding times~(\ref{eq:holding-times}), the configuration file allows the specification of a \emph{delay threshold}, and components with delays~(\ref{eq:delay}) below this threshold are not included in the joint delay modelling step of the \acronym\ refinement. We found experimentally that this leads to significant reductions in the size of the refined CTMC with no impact on the accuracy of the QoS analysis.

Our \acronym\ tool uses the HyperStar PHD fitting tool from~\cite{reinecke2012hyperstar} for the \textsc{CBFitting} function from Algorithm~2, and produces the refined CTMCs as standard PRISM models. The \acronym\ tool is freely available from our project webpage \textsf{\url{https://www.cs.york.ac.uk/tasp/OMNI/}}, together with detailed instructions and all the models and datasets from this paper.


\section{Evaluation
\label{section:evaluation}}

We evaluated \acronym\ by performing a set of experiments aimed at answering the following research questions.
\vspace*{1.5mm}

\noindent
\textbf{RQ1 (Accuracy/No overfitting):} How effective are \acronym\ models at predicting QoS property values for other system runs than the one used to collect the execution-time observation datasets for the refinement? 
\vspace*{1.5mm}

\noindent
\textbf{RQ2 (Refinement granularity):} What is the effect of varying the \acronym\ refinement granularity on the refined model accuracy, size and verification time?
\vspace*{1.5mm}

\noindent
\textbf{RQ3 (Training dataset size):} What is the effect of the training dataset size on the refined model accuracy?
\vspace*{1.5mm}

\noindent
\textbf{RQ4 (Component classification):} What is the benefit of using a component classification step within \acronym? 
\vspace*{1.5mm}

To assess the generality of \acronym, we carried out our experiments within two case studies that used real systems and datasets from different application domains. The first case study is based on the travel web application presented in Section~\ref{section:motivation} and used as a motivating example earlier in the paper. In the second case study, we applied \acronym\ to an IT support system. This system is introduced in Section~\ref{subsect:second-case-study}, followed by descriptions of the experiments carried out to address the four research questions in Sections~\ref{subsect:rq1}--\ref{subsect:rq4}.

\subsection{IT Support System \label{subsect:second-case-study}}

The real-world IT support system we used to evaluate \acronym\ is deployed at the Federal Institute of Education, Science and Technology of Rio Grande de Norte (IFRN), Brazil. The system enables the IFRN IT support team to handle user tickets reporting problems with the institute's computing systems. As part of our collaboration with IFRN researchers~\cite{da2017self}, system logs covering the handling of 1410 user tickets were collected from this IT support system over a period of six months between September 2016 and February 2017.  

\begin{figure}
\centering

\vspace*{-2mm}
\includegraphics[width=0.89\linewidth]{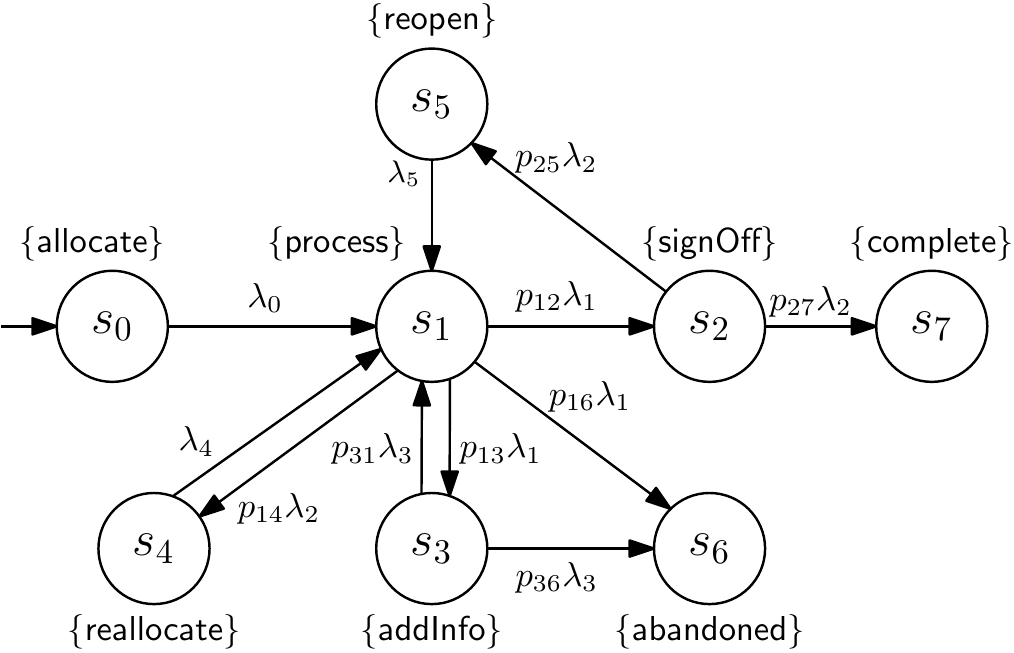}
\caption{High-level CTMC model of IT support system \label{fig:RB_Abstract}}
\end{figure}

A high-level CTMC model of the business process implemented by the IT system is shown in Fig.~\ref{fig:RB_Abstract}. In this model, state $s_0$ corresponds to a ticket being created by a ``client'' and awaiting allocation to a member of the support team. Once allocated, the ticket is processed (state $s_1$) and, if the issue can be resolved, the client is informed and the ticket awaits sign off ($s_2$) before being marked as complete ($s_7$). The client may choose to reopen ($s_5$) the ticket rather than close it, in which case the ticket is returned to the support team member for further processing. Whilst processing a ticket, the support staff may require additional information from the client ($s_3$) or may need to reallocate the ticket to another member of the IT support team ($s_4$). A ticket may also be abandoned ($s_6$) either during processing or whilst awaiting additional information from the client.

We used our \acronym\ tool to refine the high-level CTMC from Fig.~\ref{fig:RB_Abstract} \revised{ in order to support the verification of the response time of the IT support system through} the analysis of two properties:
\begin{equation}
\label{eq:reqs-IT-system}
\textrm{\hspace*{-0.3cm}}\begin{array}{ll}
\textbf{P1} & \,P_{=?}[F^{[0,T]} \mathit{complete}]\\
\textbf{P2} & \,P_{=?}[(\neg\mathit{reopen} ~\&~ \neg\mathit{addInfo})\; U^{[0,T]} \mathit{complete}] 
\end{array}
\end{equation}
where $\textbf{P1}$ specifies the probability of a ticket reaching the complete state within $T$ (working) hours, and $\textbf{P2}$ represents the probability of ticket handling being completed within $T$ working hours without further input from the client who raised the ticket and without the ticket being reopened.

We used only half of the six-month logs (covering 705 tickets created over the first approximately three months) for the \acronym\ refinement, so that we could use the other half of the logs to answer research question RQ1 (cf.~Section~\ref{subsubsect:rq1-IT}). For each ticket, the time spent in a particular state was derived from the log entries, taking into account only the working hours for the IT support team.\footnote{The working hours for the period covered by the logs were identified through consultation with the IFRN owner of the IT support process.} Assuming exponentially distributed execution times for the components of the IT support process, we used~(\ref{eq:rate}) to calculate the component execution rates shown in Table~\ref{tab:RBACRates}. Finally, we used the logs to calculate the frequencies of state transitions, and thus to estimate the CTMC state transition probabilities as shown in Table~\ref{tab:RBACParams}. 

\begin{table}
\centering
\caption{Execution rates for the IT support system \label{tab:RBACRates}}
\begin{tabular}{lc} \hline
\textbf{Component}  & \textbf{Rate} ($\mathsf{hours}^{-1}$) \\ \hline
allocate  & $\lambda_0=0.08248$  \\
process  & $\lambda_1=0.09799$  \\
signOff  & $\lambda_2=0.01167$ \\
addInfo & $\lambda_3=0.02006$ \\
reallocate & $\lambda_4=0.02839$ \\
reopen & $\lambda_5=0.09988$ \\ \hline
\end{tabular}
\end{table}

\begin{table}
\centering
\caption{Transition probabilities for the IT support system \label{tab:RBACParams}}
\begin{tabular}{ccccc} \hline
\multicolumn{2}{c}{\textbf{CTMC states}}  & $\!$\textbf{Transitions} & $\!\!\!\!$\textbf{Transitions} & $\!\!$\textbf{Estimate transition}  \\ \cline{1-2}
$s_i$ & $s_j$ & $\!\!$\textbf{from} $s_i$ \textbf{to} $s_j$ & \textbf{leaving} $s_i$ & \textbf{probability $s_i\rightarrow s_j$} \\
\hline
$s_1$ & $s_2$ & $533$ & $705$ & $p_{12}\!=\!\nicefrac{533}{705}\!=\!0.76$ \\
$s_1$ & $s_3$ & $24$   & $705$ & $p_{13}\!=\!\nicefrac{24}{705}\!=\!0.03$\\
$s_1$ & $s_4$ & $34$   & $705$ & $p_{14}\!=\!\nicefrac{34}{705}\!=\!0.05$ \\
$s_1$ & $s_6$ & $114$ & $705$ & $p_{16}\!=\!\nicefrac{114}{705}\!=\!0.16$ \\
$s_2$ & $s_5$ & $501$ & $533$ & $p_{25}\!=\!\nicefrac{501}{533}\!=\!0.94$ \\
$s_2$ & $s_7$ & $32$   & $533$ & $p_{27}\!=\!\nicefrac{32}{533}\!=\!0.06$ \\
$s_3$ & $s_1$ & $16$   & $24$   & $p_{31}\!=\!\nicefrac{16}{24}\!=\!0.67$ \\
$s_3$ & $s_6$ & $8$     & $24$   & $p_{36}\!=\!\nicefrac{8}{24}\!=\!0.33$ \\ \hline
\end{tabular}
\end{table}

Fig.~\ref{fig:VerRefTick} compares the actual values of properties \textbf{P1} and \textbf{P2} from~(\ref{eq:reqs-IT-system}) -- computed based on the system logs -- with the values predicted by the analyses of: (a)~the high-level CTMC from Fig.~\ref{fig:RB_Abstract} (with the parameters given in Tables~\ref{tab:RBACRates} and~\ref{tab:RBACParams}); and (b)~\acronym-refined CTMC models for the two properties. The refined CTMCs were obtained using the same \acronym\ parameters as in Example~\ref{ex:holding-state-refinement}, except $\alpha=0.2$ and a \emph{delay threshold} of 0.01 hours.\footnote{The threshold value was chosen to be approximately three orders of magnitude smaller than the smallest mean execution time of a system component, i.e.\ $1/\lambda_5=10.012$~hours for the IT support system (cf.~Table~\ref{tab:RBACRates}).} As explained in Section~\ref{section:tool}, this threshold meant that components with a delay~(\ref{eq:delay}) below 0.01~hours (which amounted to all component of the IT support system) were not included in the joint delay modelling of \acronym.

\begin{figure}
\centering
\mbox{\includegraphics[width=0.495\linewidth, height=4.8cm, trim=0.25cm 0 0 0,clip=true]{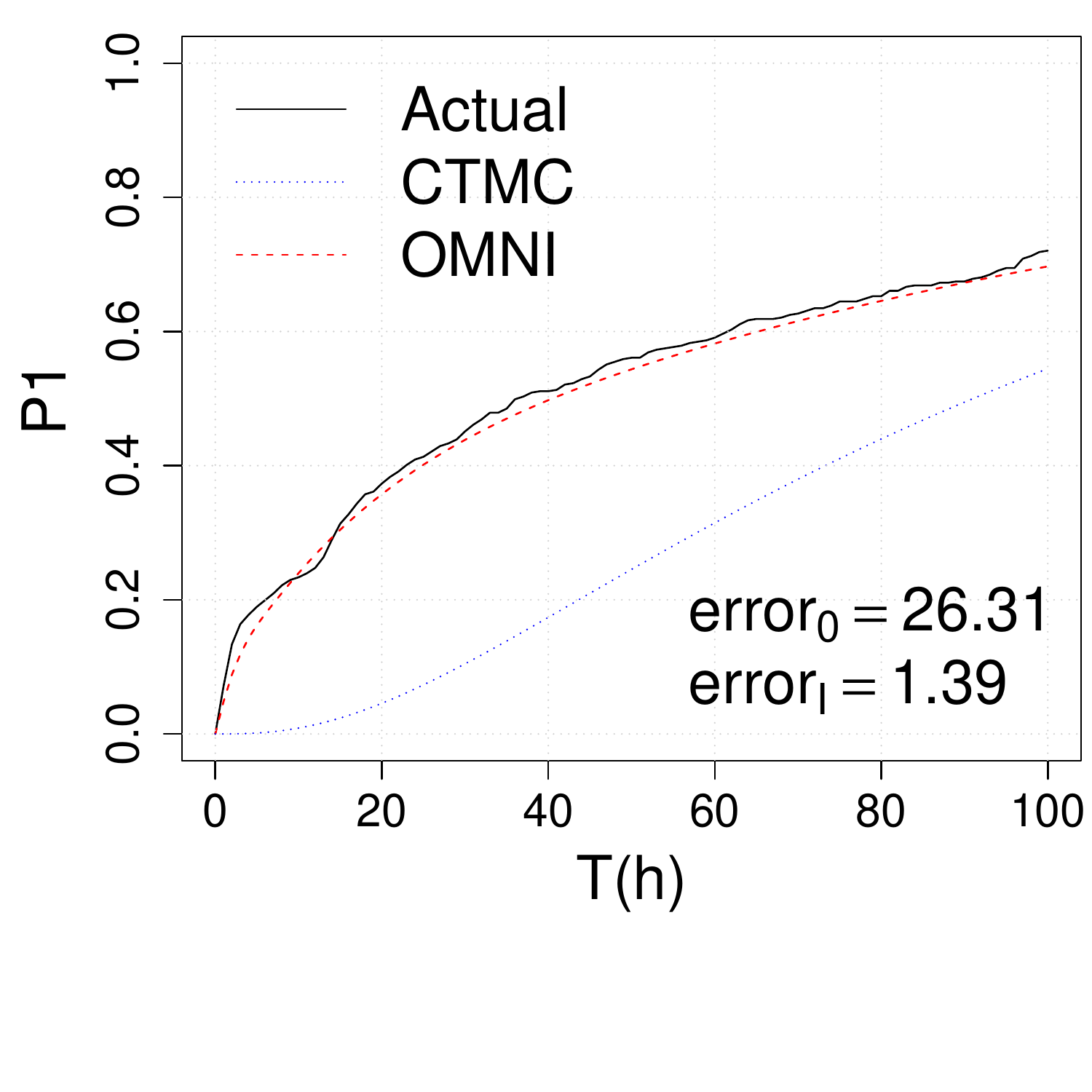} \includegraphics[width=0.495\linewidth, height=4.8cm, trim=0.25cm 0 0 0,clip=true]{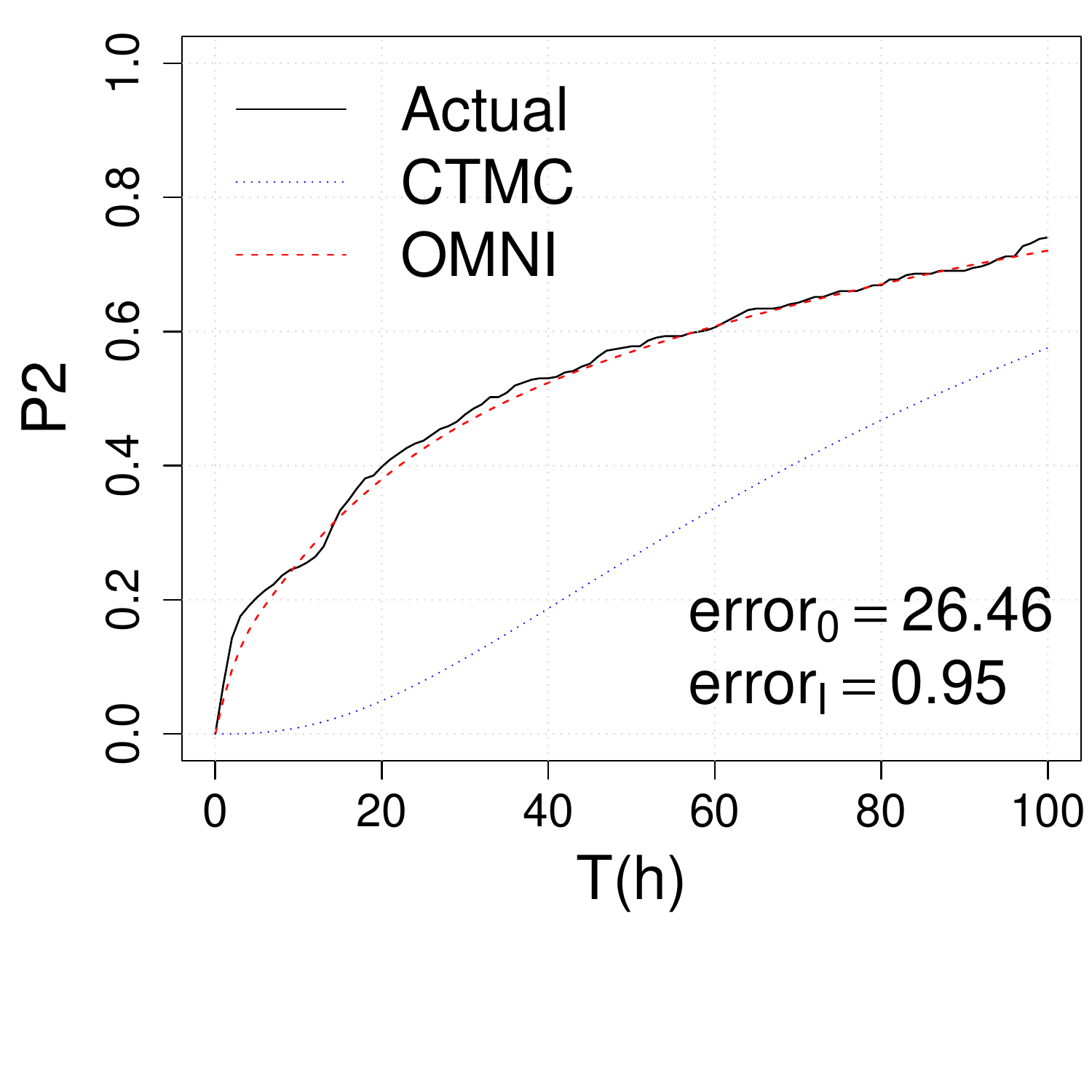}}

\vspace*{-8mm}
\caption{Actual values of the IT support system properties versus property values predicted using the high-level and the refined CTMC models, over 100 working hours from ticket creation; the prediction error~(\ref{eq:err}) for the refined CTMCs (i.e.\ $\mathsf{error_I}$) is 94.7\% smaller (for property \textbf{P1}) and 97\% smaller (for property \textbf{P2}) than the corresponding prediction errors for the high-level CTMC (i.e.\ $\mathsf{error_0}$).}
\label{fig:VerRefTick}
\end{figure}

Having introduced the system used in our second case study, we will use the next sections to describe the experiments carried out to answer our four research questions.

\begin{figure}
\centering
\includegraphics[width=\linewidth, height=6cm, trim=0.2cm 1.5cm 6.95cm 0,clip=true]{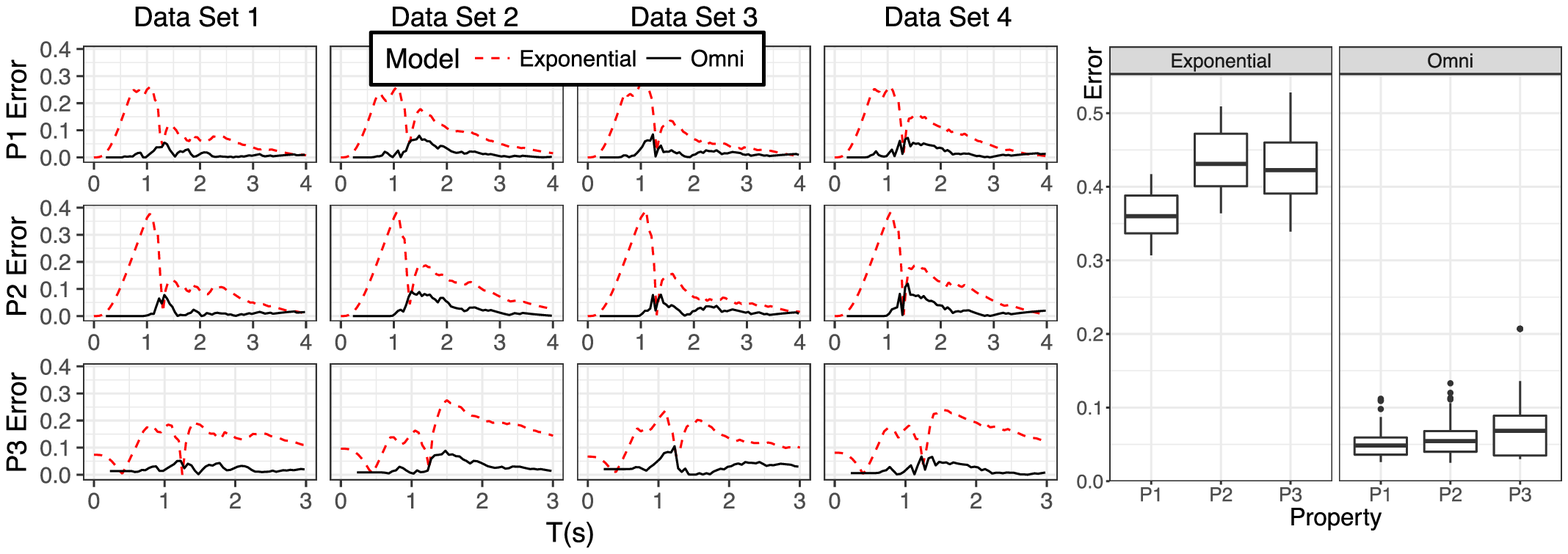}
\caption{Prediction error for the web application properties, for training and testing datasets from different runs 
}
\label{fig:matrixPlotWS}
\end{figure}

\subsection{RQ1 (Accuracy/No overfitting) \label{subsect:rq1}} 

The generation of \acronym-refined CTMC models requires the processing of finite datasets produced by the components of the analysed system, in order to extract key model features. To be useful, these CTMCs should accurately predict the values of the system properties for other system runs, i.e.\ should not be overfitted to the datasets used to generate them.

\subsubsection{Travel Web Application \label{subsubsect:rq1-web}}

To assess whether the  \acronym\ web application models possess this property, we obtained three additional datasets (labelled `Data Set 2', `Data Set 3' and 'Data Set 4') for the travel web application. Each new dataset corresponds to a four-hour run, with all datasets (including the original dataset, `Data Set 1') captured over a period of two days. Fig.~\ref{fig:matrixPlotWS} shows the difference between the property values predicted by the CTMC analysis and the actual property values taken from each of the four datasets. Results are shown for the initial CTMC from Fig.~\ref{fig:CTMC} (labelled `Exponential' in the diagrams) and the refined models obtained using `Data Set~1' (labelled `Omni' in the diagrams). In all cases, the \acronym-refined CTMCs significantly improve the accuracy of the analysis when compared to the traditional CTMC analysis approach.

\subsubsection{IT Support System \label{subsubsect:rq1-IT}}

For the IT support system, \acronym-refined CTMCs for the two properties were produced from half of the available system logs (`Data Set 1') as described in Section~\ref{subsect:second-case-study}. We then assessed the accuracy of the predictions obtained using these refined CTMCs against the actual property values extracted from the training `Data Set 1' and from the test dataset (`Data Set 2') produced from the second half of the system logs. Fig.~\ref{fig:matrixPlot} shows the error when the predicted values are compared to actual values for the two datasets. For both datasets, the \acronym-refined CTMCs produce results which significantly outperform the results obtained by analysing the high-level CTMC. 

\begin{figure}
\centering
\includegraphics[width=0.95\linewidth, trim=0 0cm 6.95cm 0,clip=true]{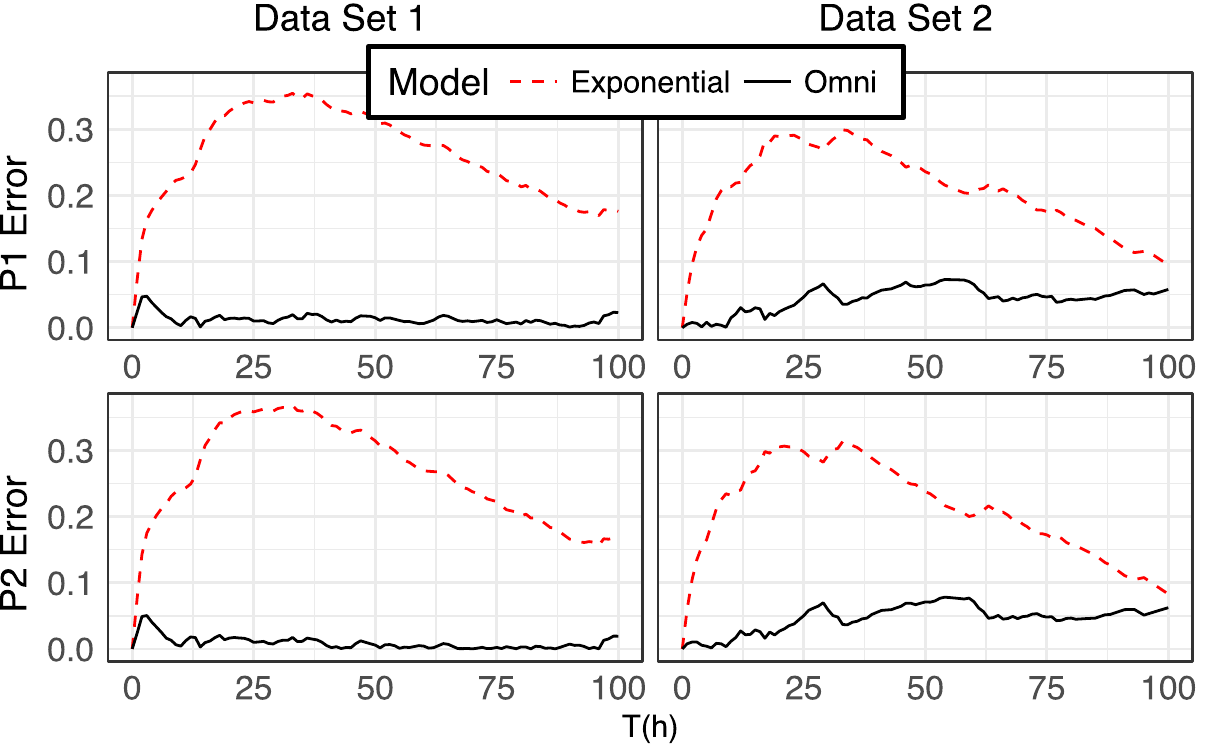}

\caption{Prediction error for the IT support system properties, for training and testing datasets from different three-month time periods 
}
\label{fig:matrixPlot}
\end{figure}

\subsubsection{Discussion}

The experiments described in the previous sections show that \acronym\ consistently outperformed the traditional CTMC modelling and analysis approach in both case studies, irrespective of the choice of training set. \revised{To confirm that \acronym\ delivers error reductions reliably, we performed additional experiments in which the training and testing datasets were drawn randomly from all available observations. Thirty experiments were carried out for each of our two systems, and the results displayed similar error reductions to those presented in Sections~\ref{subsubsect:rq1-web} and~\ref{subsubsect:rq1-IT}.} This shows that \acronym\ models can effectively predict QoS property values for other system runs than the one used to collect the training datasets employed in the refinement.

\revised{Our additional experiments also showed that the error profiles from Figs.~\ref{fig:matrixPlotWS} and~\ref{fig:matrixPlot} capture several general features for the type of QoS analysis improved by \acronym:
\squishlist
\item[1.]  The initial peak in the `Exponential' prediction error for properties \textbf{P1} and \textbf{P2} from Fig.~\ref{fig:matrixPlotWS} is characteristic of the inability of exponential distributions to model delays, as also explained in Section~\ref{subsect:overview}. \acronym\ does not suffer from this limitation. Note that this modelling error does not affect properties \textbf{P1} and \textbf{P2} from Fig.~\ref{fig:matrixPlot} because the delays for the IT support system are insignificant compared to the holding times. 
\item[2.] The second peak in the `Exponential' prediction error for properties \textbf{P1} and \textbf{P2} from Fig.~\ref{fig:matrixPlotWS}, and the first peak for properties \textbf{P1} and \textbf{P2} from Fig.~\ref{fig:matrixPlot} are representative of the inability of exponential distributions to model long tails (due to operations occasionally having much longer execution times than their typical execution times). As such, the estimated rates of the exponential distributions are too low, and the predictions are overly conservative. These error peaks are particularly high (above 0.3) for the IT support system, as IT support personnel occasionally required very long times to address a user request. Again, \acronym\ yields much smaller prediction errors around these peaks.
\item[3.] The multiple peaks in the `Exponential' prediction error for property \textbf{P3} from Fig.~\ref{fig:matrixPlotWS} is characteristic of derived properties, i.e., properties defined using multiple ``primitive'' properties (in this case, \textbf{P3} represents profit, and is defined as the difference between revenue and penalties). The multiple peaks are due to the prediction errors for the primitive properties peaking at different time moments. As before, \acronym\ significantly dampens these peaks.
\squishend
}

\subsection{RQ2 (Refinement Granularity)} 

To evaluate the effects of refinement granularity we constructed a set of \acronym\ models by varying:
\squishlist
\item[1)] $k_i$, the number of states in the Erlang delay models from the joint delay modelling of \acronym\ (cf.~Theorem~\ref{th:erlang});
\item[2)] $\alpha$, the PHD model fitting threshold used in the holding time modelling of \acronym\ (cf.~Algorithm~\ref{alg:htmodelling}).
\squishend
Larger values of $k_i$ are associated with increased accuracy in the modelling of delays, whilst reducing $\alpha$ corresponds to finer-grained refinement in  the PHD modelling.

\subsubsection{Travel Web Application}

\begin{table*}

\caption{Effects of the \acronym\ refinement granularity on web application model}
\label{tab:granularityWS}
\centering

\sffamily
\begin{tabular}{ccccccccccc} \toprule
 & & \multicolumn{3}{c}{P1} & \multicolumn{3}{c}{P2}& \multicolumn{3}{c}{P3}\\ \cmidrule{3-11}
 $k_i$ & $\alpha$ & \textbf{\#states} & \textbf{Error} & $T_V(s)$\hspace*{4mm}  &\textbf{\#states} & \textbf{Error} & $T_V(s)$\hspace*{4mm}  &\textbf{\#states} & \textbf{Error} & $T_V(s)$  \\ 
\midrule
\multicolumn{2}{c}{Initial CTMC} & 7 & 0.325 & 1.9 & 7 & 0.402 & 1.9 &7 & 0.377 & 1.9 \\ \midrule
10 & 0.2 & 82 & 0.085 & 3.9 & 45 & 0.126 & 3.5 & 82 & 0.094 & 5.0 \\
100 & 0.2 & 262 & 0.049 & 5.6 & 135 & 0.066 & 4.3 & 262 & 0.077 & 7.7 \\
259 & 0.2 & 580 & 0.045 & 8.8 & 294 & 0.060 & 5.8 & 580 & 0.074 & 12.6 \\ \midrule
10 & 0.1 & 232 & 0.078 & 6 & 118 & 0.112 & 4.3 & 232 & 0.083 & 8.1 \\
100 & 0.1 & 412 & 0.043 & 7.8 & 208 & 0.049 & 5.1 & 412 & 0.063 & 11.0 \\
259 & 0.1 & 730 & 0.038 & 11.4 & 367 & 0.042 & 6.8 & 730 & 0.059 & 16.5 \\ \midrule
10 & 0.05 & 618 & 0.075 & 13.8 & 376 & 0.106 & 7.9 & 618 & 0.081 & 19.6 \\
100 & 0.05 & 798 & 0.041 & 16.0 & 466 & 0.044 & 8.8 & 798 & 0.061 & 22.7 \\
259 & 0.05 & 1116 & 0.036 & 20.6 & 625 & 0.036 & 10.8 & 1116 & 0.057 & 29.4 \\ \bottomrule

\end{tabular} 
\rmfamily

\end{table*}

The experimental results from the web application case study are presented in Table~\ref{tab:granularityWS}. 
As $k_i$ is increased from $10$ to $100$ and from $100$ to $259$, the error is reduced.\footnote{These $k_i$ values are taken from Table~\ref{table:precomputed}.} However, this improvement shows diminishing returns for all properties as $k_i$ becomes large. The same pattern occurs as $\alpha$ is decreased, with smaller errors for smaller $\alpha$ values but only a marginal reduction in error as $\alpha$ is reduced from $0.1$ to $0.05$. 

Since $k_i$ controls the number of states associated with delays, increasing $k_i$ also increases the total number of states associated with the model. The model size also increases as $\alpha$ is decreased. 

Finally, the experimental results confirm that the models for property \textbf{P2} are consistently much smaller than for \textbf{P1} and \textbf{P3} since more states from the initial CTMC are in the ``exclude from refinement'' set $S_\textsf{X}$ when evaluating \textbf{P2} than when evaluating the other properties (cf.\ Table~\ref{tab:classification}).  $T_V$ is the total time for PRISM to verify each property in the interval [0, $T_\mathrm{max}$] with a time step of 0.05s and includes the time taken for model construction. All experiments presented here and throughout the rest of the paper were carried out on a MacBook Pro with 2.9 GHz Intel i5 processor and 16Gb of memory. As the model increases in size, and accuracy improves, the time taken for verification also increases, up to $29.4$s for the finest-grained model used to evaluate property \textbf{P3} across the entire interval [0, $T_\mathrm{max}$].

\subsubsection{IT Support System}

When \acronym\ is applied to the IT support system, the delay threshold of $0.01$~hours chosen as explained in Section~\ref{subsect:second-case-study} means that the delay modelling was omitted (i.e.\ delays were approximated to zero). As such, we were not interested in varying $k_i$ in this case study, and Table~\ref{tab:granularityCS} only shows the effects of decreasing $\alpha$ on the refined models. Like in the first case study, decreasing $\alpha$ gradually reduces the prediction error, with a significant error reduction obtained even for the largest $\alpha$ from our experiments (e.g.\ an over tenfold reduction from $26.3$ for the initial, high-level CTMC and property \textbf{P1} to just $2.45$ for the coarsest-granularity CTMC generated for $\alpha=0.6$). Diminishing returns in terms of error reduction are achieved for property \textbf{P2}; for \textbf{P1}, this trend is not clearly distinguishable for the tested $\alpha$ values.

As expected, the model size grows as $\alpha$ is decreased, leading to a corresponding increase in the verification time $T_V$. $T_V$ includes the time for the construction of the model and for PRISM to analyse the property in the interval $[0, 100\textrm{h}]$ with a time step of one~hour (i.e.\ 100 verification sessions). The largest verification time is $274.6$s for the finest-granularity CTMC obtained for property \textbf{P1}, which is entirely acceptable for an offline verification task. 

During the component classification step of \acronym, the exclusion sets for the two properties are calculated as 
$S_\mathsf{X} = \{s_6,s_7\}$ for \textbf{P1} and $S_\mathsf{X} = \{s_3, s_5,s_6,s_7\}$ for \textbf{P2}. Therefore, the models associated with \textbf{P2} are consistently smaller than those associated with \textbf{P1}, whose exclusion set $S_\mathsf{X}$ contains only two states. 

\begin{table}
\caption{Effects of the OMNI refinement granularity on the IT support system model}
\label{tab:granularityCS}
\centering
\begin{tabular}{cccccccc} \toprule
 & \multicolumn{3}{c}{P1} & \multicolumn{3}{c}{P2}\\ \cmidrule{2-7}
 $\alpha$&\textbf{\#states} & \textbf{Error} & $T_V(s)$\hspace*{4mm}  &\textbf{\#states} & \textbf{Error} & $T_V(s)$  \\ 
\hline
Initial &&&&&&\\
CTMC & 8 & 26.3 & 3.0 & 8 & 26.5 & 3.0 \\ \midrule
0.6 & 44 & 2.45 & 42.8 & 41 & 2.11 & 41.7 \\
0.4 & 61 & 2.24 & 48.2 & 51 & 1.74 & 44.4\\
0.2 & 202 & 1.39 & 95.8 & 174 & 0.95 & 84.4  \\
0.1 & 329 & 1.27 & 139.6 & 259 & 0.87 & 110.6  \\
0.05 & 722 & 1.01 & 274.6 & 346 & 0.84 & 139.8  \\ \bottomrule
\end{tabular} 
\end{table}

\subsubsection{Discussion}

For both case studies and all considered QoS properties, considerable improvements in model accuracy are obtained even with small, coarse-grained \acronym\ models. As such, \acronym\ can offer significant improvements in accuracy over traditional CTMC modelling techniques even when computational resources are at a premium. Additional, but typically diminishing, gains in prediction accuracy are obtained through increasing the granularity of the refinement. Expectedly, this leads to a corresponding increase in verification time. For our two systems, this time did not exceed 10 minutes (and was typically much smaller) for all considered properties and model granularities -- an acceptable overhead for the offline verification task performed by \acronym. 

\subsection{RQ3 (Training dataset size) \label{subsubsect:rq3}}

In both case studies, we ran a set of experiments to evaluate the effect of reducing the training dataset size on the accuracy of \acronym\ models. For each experiment, training subsets were constructed by randomly selecting a percentage of all available datasets used to answer the previous research questions. The sizes of these selected subsets were 80\%, 60\%, 40\% and 20\% of the complete training dataset from Sections~\ref{subsubsect:rq1-web} and~\ref{subsubsect:rq1-IT}. For each system and each of its analysed QoS properties, the experiments were repeated 30 times, with the property errors recorded.

\subsubsection{Travel Web Application}

Table~\ref{table:RQ3_WS} shows the mean error and standard deviation (labelled `sd') for the web application case study with $k_i = 259$ and $\alpha=0.1$. As the training dataset size decreases, the error and standard deviation associated with each property show an increasing trend. However, we note that at 80\% the prediction errors show little difference to the 100\% figures -- the mean errors at 100\% are very close to the 80\% errors, and well within one standard deviation of the 80\%~mean. This suggests that 80\% of the complete dataset is sufficient to capture the characteristics of the underlying component distributions for this case study.

\begin{table}
\caption{Web application -- training dataset size effect on prediction accuracy, shown as average error and standard deviation over 30~runs}
\label{table:RQ3_WS}
\centering
\begin{tabular}{p{1.55cm}p{1.8cm}p{1.8cm}p{1.8cm}} 
\toprule
\hspace*{-1mm}\textbf{Dataset$^\dagger$} & $\!$\textbf{P1 Error} & $\!$\textbf{P2 Error} & $\!$\textbf{P3 Error} \\ 
\midrule
\hspace*{-1mm}100\%$^{\dagger\dagger}$ & $\!$0.038  sd N/A$^*$   & $\!$0.042 sd N/A$^*$ & $\!$0.059 sd N/A$^*$\\
\hspace*{-1mm}80\% & $\!$0.038 sd 0.006 & $\!$0.043 sd 0.005 & $\!$0.058 sd 0.023 \\
\hspace*{-1mm}60\% & $\!$0.046 sd 0.013 & $\!$0.048 sd 0.014 & $\!$0.078 sd 0.041 \\
\hspace*{-1mm}40\% & $\!$0.057 sd 0.017 & $\!$0.063 sd 0.022 & $\!$0.105 sd 0.054 \\
\hspace*{-1mm}20\% & $\!$0.083 sd 0.032 & $\!$0.076 sd 0.026 & $\!$0.156 sd 0.075\\
\midrule
\hspace*{-1mm}Initial CTMC & $\!$0.325 sd N/A$^*$ & $\!$0.402 sd N/A$^*$ & $\!$0.377 sd N/A$^*$ \\
\bottomrule
\multicolumn{4}{l}{$^\dagger$Percentage of complete 270-element training dataset}\\ 
\multicolumn{4}{l}{$^{\dagger\dagger}$Single run using entire data set}\\
\multicolumn{4}{l}{$^*$Single run, so no standard deviation}
\end{tabular}
\end{table}

\subsubsection{IT Support System}

For the IT support system, the experimental results are provided in Table~\ref{table:RQ3_CS}. As for the other case study, we observe that reducing the size of the training sets leads to a trend where the prediction error and the standard deviation increases. This also happens when the size of the training dataset is reduced from 100\% to 80\%, suggesting that additional slight improvements may be possible by further increasing the size of the initial training dataset. 

\begin{table}
\centering
\caption{IT support system -- training dataset size effect on prediction accuracy, shown as average error and standard deviation over 30~runs}
\label{table:RQ3_CS}
\begin{tabular}{p{1.8cm}p{1.8cm}p{1.8cm}} 
\toprule
\hspace*{-1mm}\textbf{Dataset$^\dagger$} & $\!$\textbf{P1 Error} & $\!$\textbf{P2 Error} \\ 
\midrule
\hspace*{-1mm}100\%$^{\dagger\dagger}$ & $\!$1.39  sd N/A$^*$   & $\!$0.95 sd N/A$^*$\\
\hspace*{-1mm}80\% & $\!$1.53 sd 0.774 & $\!$1.35 sd 0.570\\
\hspace*{-1mm}60\% & $\!$1.57 sd 0.942 & $\!$1.55 sd 0.797\\
\hspace*{-1mm}40\% & $\!$2.37 sd 1.450 & $\!$2.19 sd 1.396\\
\hspace*{-1mm}20\% & $\!$3.70 sd 2.825 & $\!$3.81 sd 3.039\\
\midrule
\hspace*{-1mm}Initial CTMC & $\!$26.31 sd N/A$^*$ & $\!$26.46 sd N/A$^*$ \\
\bottomrule
\multicolumn{3}{l}{$^\dagger$Percentage of complete 705-element training dataset}\\
\multicolumn{3}{l}{$^{\dagger\dagger}$Single run using entire data set}\\
\multicolumn{3}{l}{$^*$Single run, so no standard deviation}
\end{tabular}
\end{table}

\subsubsection{Discussion}

For both case studies we note that even modest training dataset sizes show a significant improvement over the traditional approach to CTMC-based analysis of QoS properties. For the web application, a training set consisting of 20\% of the original dataset equates to only 54 request handling observations, and reduces the mean estimation error by between 50--81\% for the properties of interest. For the IT system, 20\% of the original training dataset equates to 141~tickets processed, with the processing of only five tickets using the $\mathsf{addInfo}$ component of the system, yet the prediction errors for $\textbf{P1}$  and $\textbf{P2}$ are both reduced by approximately 86\%.

\begin{table*}
\caption{Web application -- comparison of \acronym\ with the preliminary CTMC refinement approach from~\cite{OMNIref} \label{tab:RQ4_WS}}
\centering
\begin{tabular}{p{5cm}ccccccccc}
\toprule
& \multicolumn{3}{c}{\textbf{P1}}& \multicolumn{3}{c}{\textbf{P2}}& \multicolumn{3}{c}{\textbf{P3}} \\ \cmidrule{2-10}
\textbf{Model}        &\textbf{\#states} & \textbf{Error} & $T_V(s)$\hspace*{4mm}  &\textbf{\#states} & \textbf{Error} & $T_V(s)$ \hspace*{4mm} &\textbf{\#states} & \textbf{Error} & $T_V(s)$ \\
\midrule
Preliminary OMNI ($\alpha=0.1$, $k_i=259$)  & 1766   & 0.037 & 24.95\hspace*{4mm}     & 1766   & 0.039 &  23.79\hspace*{4mm}      & 1766   & 0.063 & 38.21 \\ \midrule
\acronym\ ($\alpha=0.1$, $k_i=259$)  & 730    & 0.038 & 11.40\hspace*{4mm}     & 367    & 0.042 &  6.80\hspace*{4mm}       & 730    & 0.059 & 16.5 \\ 
\midrule
\acronym\ ($\alpha=0.05$, $k_i=259$)& 1116 & 0.036 & 20.6\hspace*{4mm} & 625 & 0.036 & 10.8\hspace*{4mm} & 1116 & 0.057 & 29.4 \\ \midrule
High-level CTMC & 7 & 0.325 & 2.53\hspace*{4mm} & 7 & 0.402 & 2.50\hspace*{4mm} & 7 & 0.377 & 2.47 \\ 
\bottomrule
\end{tabular}
\end{table*}

\begin{table*}

\vspace*{3mm}
\caption{IT support system -- comparison of \acronym\ with the preliminary CTMC refinement approach from~\cite{OMNIref} \label{tab:RQ4_CS}}
\centering
\begin{tabular}{>{\raggedright}p{5cm}cccccc} 
\toprule
& \multicolumn{3}{c}{P1}& \multicolumn{3}{c}{P2} \\ 
\cmidrule{2-7}
 \textbf{Model}          &\textbf{\#states} & \textbf{Error} & $T_V(s)$\hspace*{4mm}  &\textbf{\#states} & \textbf{Error} & $T_V(s)$ \hspace*{4mm} \\ 
\midrule
Preliminary OMNI ($\alpha=0.2$, $k_i=10$)  & 265   & 1.39 & 261.5    & 265  & 0.95 & 239.6   \\ \midrule
\acronym\ ($\alpha=0.2$)    & 202   & 1.39 &  95.8    & 174  & 0.95 &  84.4 \\ 
\midrule
High level CTMC & 8 & 26.3 & 3.0 & 8 & 26.5 & 3.0 \\ 
\bottomrule
\end{tabular}
\end{table*}

\subsection{RQ4 (Component classification) \label{subsect:rq4}} 

We evaluated the effects of extending our preliminary CTMC-refinement method from~\cite{OMNIref} with the component classification step described in Section~\ref{sect:comp-class}. To this end, we performed experiments to compare the model size, verification time and accuracy of the refined CTMCs generated by the \acronym\ method described in this paper and of the refined CTMCs produced by our preliminary method, which refines each system component independently, irrespective of whether it impacts the analysed QoS property or not.

\subsubsection{Travel Web Application}

For the web application, refined CTMCs were built using first the preliminary \acronym\ method from~\cite{OMNIref} and the fully fledged version of \acronym\ from this paper, initially with parameters $k_i=259$ and $\alpha=0.1$. The first two rows from Table~\ref{tab:RQ4_WS} summarise these experimental results, which show that the use of component classification yields significant reductions in the number of model states and the verification time for all three properties compared to the preliminary method. As expected given the CTMC state partition from Table~\ref{tab:classification} and \acronym\ rules from Table~\ref{table:refinementRules}, the largest reductions are achieved for property \textbf{P2} (79\% fewer model states, and 71\% shorter verification time). For properties $\textbf{P1}$ and $\textbf{P3}$, the use of component classification led to a reduction in model size of over 58\%, and to reductions in verification time of 54\% and 57\%,  respectively. 

The prediction errors are very close for both \acronym\ variants, and considerably smaller than the errors for the high-level CTMC (provided in the last row of Table~\ref{tab:RQ4_WS} for convenience). However, the errors are negligibly larger when component classification is used. This is due to the use of fewer states for \acronym's joint delay modelling compared to the separate modelling of component delays in~\cite{OMNIref}. As shown in Table~\ref{tab:granularityWS}, additional reductions in prediction error may be achieved by increasing $k_i$ or reducing $\alpha$ if required. For example, by setting $\alpha =0.05$, the refined models still have much fewer states than the preliminary \acronym\ model, show an improvement in verification time of between 17--54\%, and are more accurate. The third row of Table~\ref{tab:RQ4_WS} shows again these experimental results for ease of comparison.

\subsubsection{IT Support System}

For the second case study, the experimental results are presented in Table~\ref{tab:RQ4_CS}. Whilst our fully fledged \acronym\ allows for component delays to be omitted from the refinement when they are below a delay threshold (cf.\ Section~\ref{section:tool}), this was not possible in our previous work. Therefore, to ensure a fair comparison, we used a small $k_i$ value ($k_i=10$) when generating refined CTMCs with the preliminary \acronym\ variant. As shown by the experimental results, using the fully fledged \acronym\ yields smaller refined models that take 74\% and 64.7\% less time to verify for the IT system properties \textbf{P1} and \textbf{P2}, respectively. Furthermore, these smaller refined models achieve the same prediction accuracy as the larger models generated by the preliminary \acronym.

\subsubsection{Discussion}

The \acronym\ method described in this paper includes a component classification step in its model construction. Since this step uses the high-level CTMC model only, the time taken for its execution is very small. For the web application, the time taken to classify the high-level CTMC states for all three properties was 2.3s, and for the IT support system this step took only 1.8s. 

For both case studies presented we have shown that for all the properties considered it was possible to generate \acronym\ models which are smaller, faster to verify and no less accurate than those produced by our preliminary approach from~\cite{OMNIref}. The amount of verification time saved depends on the number of states for which delays can be combined, and on the number of states which can be excluded from refinement -- but these savings were considerable in all our experiments with the two real-world systems.


\begin{table*}
\caption{\revised{Characteristics of the case studies used to evaluate \acronym\ \label{table:case-study-diffs}}}
\centering
\revised{\begin{tabular}{p{2.8cm}p{6.8cm}p{6.8cm}}
\toprule
\textbf{Characteristic} & \textbf{Case study 1} & \textbf{Case study 2} \\
\midrule
\multicolumn{3}{c}{\cellcolor{gray!25}\textbf{System}}\\
\midrule
Type of system & Prototype service-based system developed by the \acronym\ team & Production IT support system developed by, running at, and managed by university in Brazil\\
\midrule
System components & Mix of six high-performance (commercial) and budget (free) third-party web services invoked remotely over the Internet & Proprietary software components deployed on university computing infrastructure, and supporting human tasks with high variance in temporal characteristics \\
\midrule
Size of system & $3188$ lines of code & $1276131$ lines of code \\
\midrule
Component execution times & Tens to hundreds of milliseconds & Minutes to hours \\
\midrule
Operational profile & Assumed values for the probabilities of the different types of requests & Probabilities of different operation outcomes extracted from the real system logs\\
\midrule
\multicolumn{3}{c}{\cellcolor{gray!25}\textbf{Datasets}}\\
\midrule
Dataset source & Obtained from invocations of real web services & Taken from actual system logs \\ 
\midrule
Key dataset features & Significant delays (compared to holding times) due to network latency & Long tails and outliers due to a small number of complex user tickets; multi-modal response times and regions of zero density due to different experience levels of IT support personnel; negligible delays \\
\midrule 
\multicolumn{3}{c}{\cellcolor{gray!25}\textbf{Analysed QoS properties}}\\
\midrule
Types of properties & Overall success probability (\textbf{P1}) & Overall response time (\textbf{P1}) \\
& Success probability for ``day-trip'' requests (\textbf{P2}) & Response time for ``straightforward'' user tickets (\textbf{P2})\\
& Profit $=$ revenue $-$ penalties (\textbf{P3}) &\\
\midrule 
\multicolumn{3}{c}{\cellcolor{gray!25}\textbf{Purpose of QoS analysis}}\\
\midrule
Supported stage of & Design of new system & Verification of existing system \\
development process & & \\
\bottomrule
\end{tabular}}
\end{table*}

\section{Threats to validity}
\label{sect:threats}

\subsection{External validity}

External validity threats may arise if the stochastic characteristics of the systems from our case studies are not indicative of the characteristics of other systems. To mitigate this threat, we used two significantly different systems from different domains 
\revised{for the \acronym\ evaluation. The section labelled `System' from Table~\ref{table:case-study-diffs} summarises the multiple characteristics that differ between these systems.} 

\revised{In addition, the datasets used in the two case studies present different characteristics, as shown in the `Datasets' section from Table~\ref{table:case-study-diffs}. In particular,} the datasets for the service-based system were obtained from real web services, while for the 
IT support system they were taken from the actual system logs. This gives us confidence that the stochastic characteristics of the two systems (including regions of zero density, multi-modal response times, and long tails) are representative for many real-world systems. 

\revised{The next section from Table~\ref{table:case-study-diffs} summarises the different types of QoS properties analysed in our case studies. The transient fragment of continuous stochastic logic (whose analysis accuracy is improved by \acronym\ supports, cf.\ Section~\ref{subsect:CSL}) supports the specification of multiple classes of QoS properties of interest, including success probability, profit/cost and response time, and our case studies considered examples of all of these.}

\revised{Finally, as shown in the `Purpose of QoS analysis' section from Table~\ref{table:case-study-diffs}, the first case study applied \acronym\ during the design stage of the development process, whereas the second case study assessed \acronym\ by verifying QoS properties of an existing system.}

Another external threat may arise if the \acronym-refined CTMC models were too large to be verified within a reasonable amount of time. The \acronym\ approach mitigates this threat by allowing for the refinement to be carried out at different levels of granularity, and our experiments indicate that significant improvements in prediction accuracy is achievable with modest enlargement of the models. 

\begin{figure*}
\centering
\hspace*{-3mm}\mbox{\includegraphics[width=0.22\linewidth]{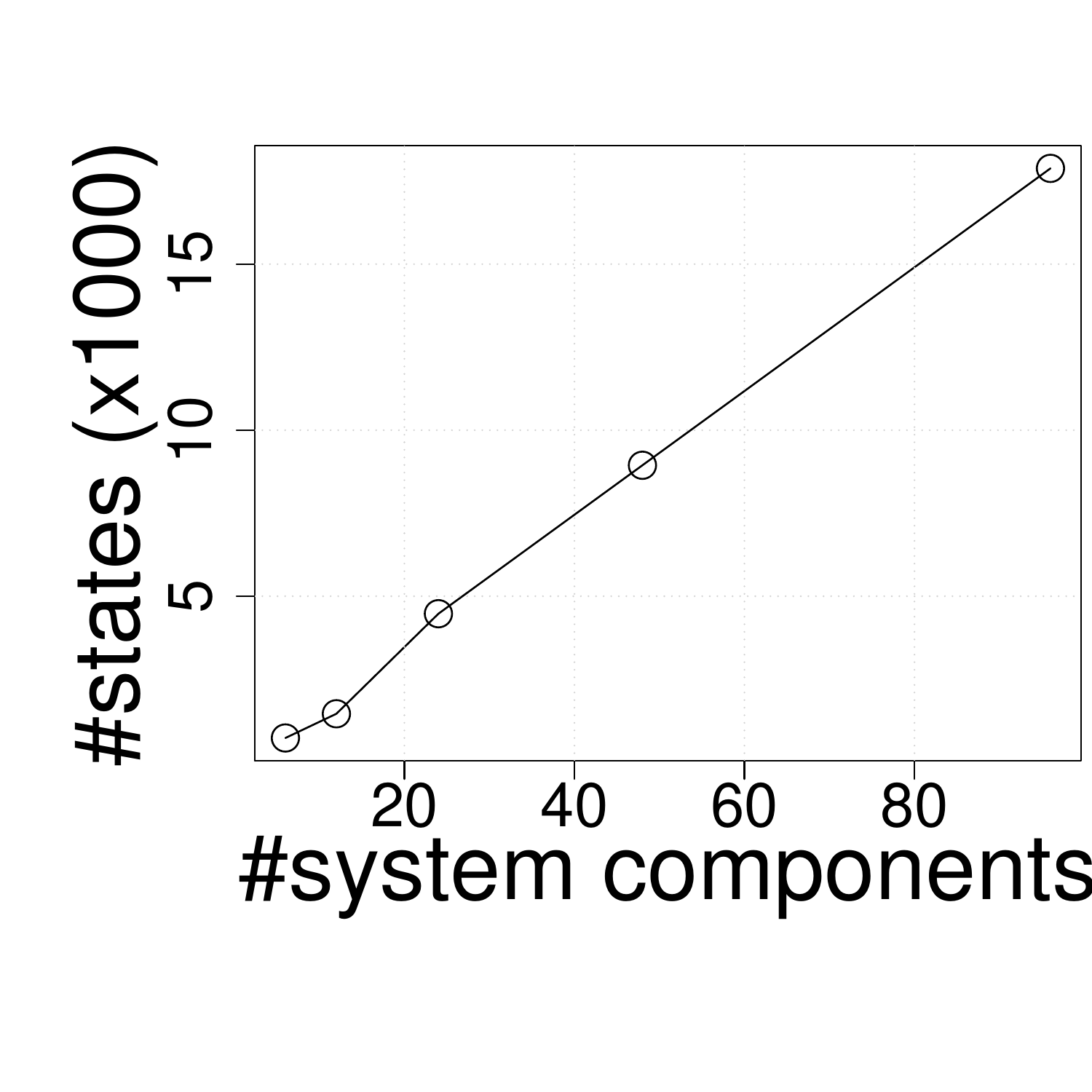}\hspace*{7mm}
\includegraphics[width=0.22\linewidth]{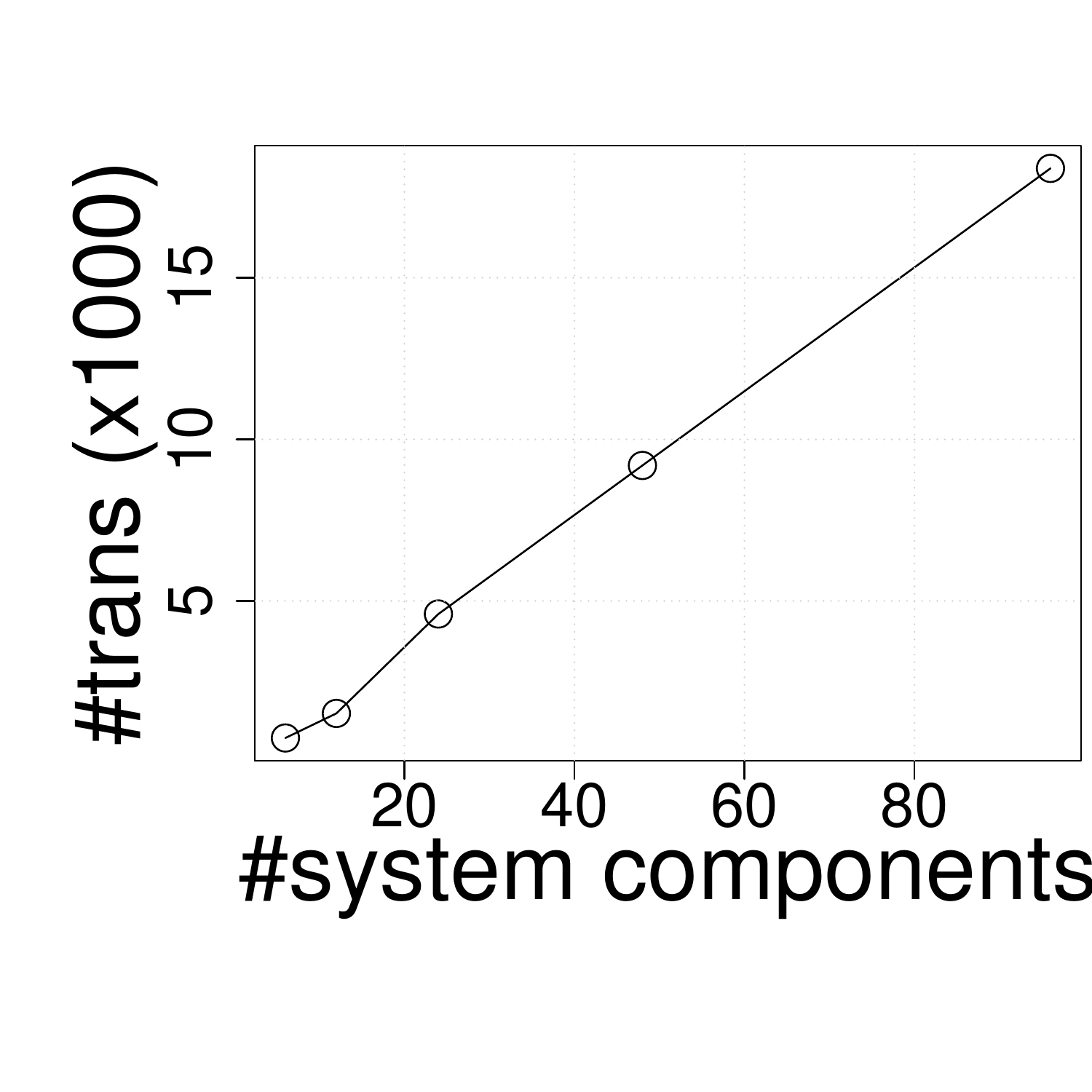}\hspace*{7mm}
\includegraphics[width=0.22\linewidth]{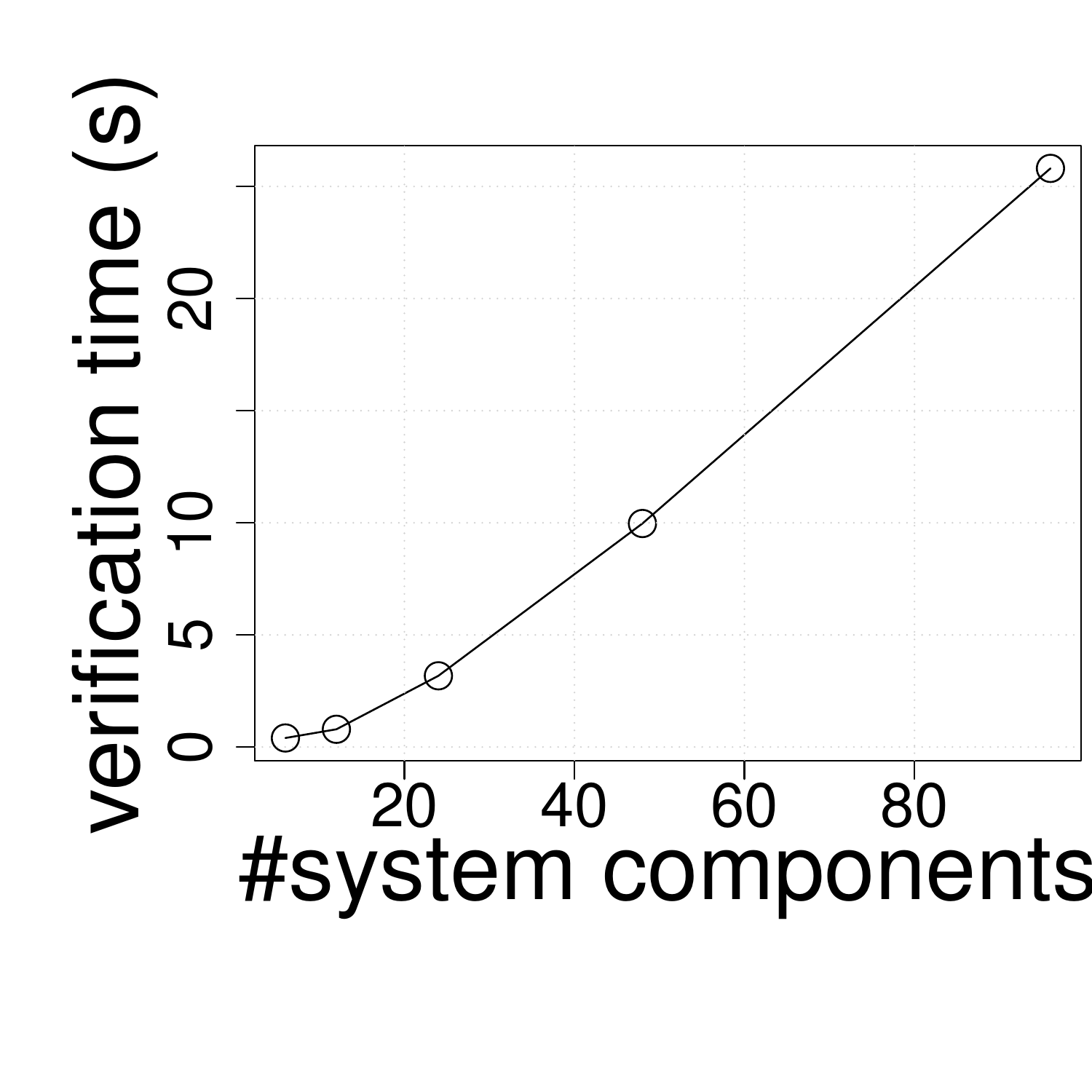}}

\vspace*{-5mm}
\caption{Refined CTMC states, transitions and verification time \revised{for property \textbf{P1} at a single time point,} for system sizes up to 16 times larger than the web application}
\label{fig:LargerModels}
\end{figure*}

Our two case studies are based on real systems, but these systems have a relatively small number of \emph{\revised{modelled} components}. \revised{For the large IT system from the second case study, the small number of \emph{modelled components} is due to the inclusion in the model of only components that influence the analysed QoS properties -- a modelling technique termed \emph{abstraction}.} 

\revised{For systems with larger numbers of modelled components, we note} that the increase in model size due to the \acronym\ refinement is only linear in the number of system components. Moreover, as \acronym\ uses acyclic PHDs, the number of transitions also increases linearly. Modern model checkers can handle CTMCs with $10^5 - 10^6$ states~\cite{Jansen2008} and as such we expect \acronym\ to scale well with much larger systems. We confirmed these hypotheses by constructing models with 12, 24, 48 and 96 components by combining 2, 4, 8 and 16 instances of our web application CTMC from Fig.~\ref{fig:CTMC}.\footnote{We did not perform similar experiments for the IT support system as they would not have been qualitatively different.} \acronym\ was then used to refine the composite models with $k_i\!=\!259$ and $\alpha\!=\!0.1$. For each refined CTMC, we measured the number of states, the number of transitions, and the time taken to verify property $\textbf{P1}$ of the travel web application at \revised{the single time point} $T\!=\!20s$ (since $\textbf{P2}$ and $\textbf{P3}$ can not be meaningfully extrapolated to these larger systems). The results of these experiments, shown in Fig.~\ref{fig:LargerModels}, confirm the predicted linear increase in the verification overhead with the system size.

\subsection{Construct validity}

Construct validity threats may be due to the assumptions made when collecting the datasets or when defining the QoS properties for our model refinement experiments. To address the first threat, we collected the datasets from a real IT support system and from a prototype web application that we implemented using standard Java technologies and six real web services from three different providers. For the first system, the datasets were collected over a period of six months, and for the second system they were collected on two different days, at different times of day. Furthermore, we used different datasets for training and testing. To mitigate the second threat, we analysed three performance and cost properties of web application, and two typical performance properties of the IT system.

\subsection{Internal validity}

Internal validity threats can originate from the stochastic nature of the two analysed systems or from bias in our interpretation of the experimental results.  We addressed these threats by provided formal proofs for our CTMC refinement method, by reporting results from multiple independent experiments performed for different values of the \acronym\ parameters, and by analysing several QoS properties at multiple levels of refinement granularity. Additionally, we made the experimental data and results publicly available on our project webpage in order to enable the replication of our results.


\section{Related work \label{section:related}}

To the best of our knowledge, \acronym\ is the first tool-supported method for refining high-level CTMC models of component-based systems based on separate observations of the execution times of the system components. 

\acronym\ builds on recent approaches to using PHDs to fit non-parametric distributions, a research area that has produced many efficient PHD fitting algorithms over the past decade \cite{15326340701300712,doi:10.1080/03610918.2013.848895,Wang2008,doi:10.1080/03610926.2010.483306}. 
Buchholz et al.~\cite{buchholz2014phase} and Okamura and Dohi \cite{Okamura2016} present overviews of the theory and applications of PHDs in these types of analysis, in domains including the modelling of call centres~\cite{ishay2002fitting} and healthcare processes~\cite{fackrell2009modelling,marshall2009simulating}. However, these algorithms and applications consider the distribution of timing data for a complete end-to-end process rather than separate timing datasets for the components of a larger system as is the case for \acronym. This focus on a single dataset also applies to the cluster-based PHD fitting method from~\cite{reinecke2012cluster} and its implementation within the efficient PHD-fitting tool HyperStar~\cite{reinecke2012hyperstar}, which \acronym\ uses for its holding-time modelling. 

Recent work by Karmakar and Gopinath~\cite{karmakar2015markov} has shown that PHD models can be used in conjunction with CTMC solvers to verify storage reliability models. In this work, Weibull distributions are assumed to more accurately describe the processes of concern, and PHDs are used to approximate these distributions. The PRISM probabilistic model checker is then used to assess properties concerned with system reliability. Unlike this approach, \acronym\ is applicable to the much wider class of problems where additional QoS properties need to be analysed and where the relevant component features correspond to non-parametric distributions that cannot be accurately modelled as Weibull distributions.

The analysis of non-Markovian processes using PHDs is considered in~\cite{ciobanu2014phase}, where a process algebra is proposed for use with the probabilistic model checker PRISM. However, \cite{ciobanu2014phase} presents only the analysis of a simple system based on well-known distributions, and does not consider PHD fitting to real data nor how its results can be exploited in the scenarios tackled by \acronym.

To address the significant difficulties that delays within a process pose to PHD fitting, Koren{\v{c}}iak et al.\  \cite{krvcal2014dealing} have tackled probabilistic regions of zero density by using interval distributions to separate discrete and continuous features of distributions. Similar work~\cite{brazdil2015optimizing} supports the synthesis of timeouts in fixed-delay CTMCs by using Markov decision processes. Unlike \acronym,~\cite{brazdil2015optimizing} and \cite{krvcal2014dealing} do not consider essential non-Markovian features of real data such as multi-modal and long-tail distributions, and thus cannot handle empirical data that has these common characteristics.

Finally, non-PHD-based approaches to combining Markov models with real data range from using Monte Carlo simulation to analyse properties of discrete-time Markov chains with uncertain parameters \cite{Meedeniya2014} to using semi-Markov chains to model holding times governed by general distributions \cite{lopez2001}. However, none of these approaches can offer the guarantees and tool support provided by \acronym\ thanks to its exploitation of established CTMC model checking techniques.


\section{Conclusion \label{section:conclusion}}

We presented \acronym, a tool-supported method for refining the CTMC models of component-based systems using observations of the component execution times. To evaluate \acronym, we carried out extensive experiments within two case studies from different domains. The experimental results show that \acronym-refined models support the analysis of transient QoS properties of component-based systems with greatly increased accuracy compared to the high-level CTMC models typically used in software performance engineering. Furthermore, we showed that significant accuracy improvements are achieved even for small training datasets of component observations, and for \acronym\ parameters corresponding to coarse-granularity refinements (and thus to relatively modest increases in model size). 

In our future work, we plan to extend the applicability of \acronym\ to systems comprising components whose behaviour changes during operation. 
This will require \acronym\ to continually refine the CTMC models of these systems based on new component observations. 
We envisage that this extension will enable the runtime analysis of QoS properties for rapidly evolving systems \cite{DBLP:books/sp/ess14/MullerV14} and will support the dynamic selection of new configurations for self-adaptive software used in safety-critical and business-critical applications \cite{8008800}. In addition, we intend to examine the effectiveness of \acronym\ in other application domains, and its ability to estimate a broader range of distributions for the execution times of the system components. 

\bibliographystyle{elsarticle-num}
\bibliography{Omni2}

\end{document}